\pdfoutput=1  

\documentclass[11pt]{article}
\usepackage[margin=.8in,left=.8in]{geometry}
\usepackage{amsmath}
\usepackage{amsfonts}
\usepackage{amssymb}
\usepackage{amsthm}
\usepackage{mathtools}
\usepackage{subcaption}
\usepackage{float}
\usepackage{graphicx}
\usepackage{color}
\usepackage{rotating}
\usepackage{xcolor}
\usepackage[utf8]{inputenc}
\usepackage{hyperref}
\usepackage[all]{xy}

\usepackage{tikz}
\usepackage{tikz-cd}
\usepackage{lipsum}
\usepackage{adjustbox}

\usepackage{multirow}

\usepackage{stmaryrd}    

\usepackage{enumerate} 

\usepackage{helvet}   

\usepackage{bigints} 

\usepackage{mathptmx}
\usepackage{amsmath}
\usepackage{graphicx}


\definecolor{darkblue}{rgb}{0.05,0.25,0.65}
\definecolor{greenii}{RGB}{20,140,10}
\definecolor{orangeii}{RGB}{200,100,5}



\makeatletter
\newcommand{\raisemath}[1]{\mathpalette{\raisem@th{#1}}}
\newcommand{\raisem@th}[3]{\raisebox{#1}{$#2#3$}}
\makeatother

\usepackage{tabularx}   


\usepackage[new]{old-arrows}   




\newdir{> }{{}*!/10pt/@{>}}




\DeclareRobustCommand{\rchi}{{\mathpalette\irchi\relax}}
\newcommand{\irchi}[2]{\raisebox{\depth}{$#1\chi$}} 

\newcommand{\mapsup}{\mbox{$\;\,$\begin{rotate}{90}$\!\!\!\mapsto$\end{rotate}}}
\newcommand{\mapsdown}{\mbox{$\!\!\!\!$\begin{rotate}{-90}$\!\!\!\!\!\!\!\!\mapsto$\end{rotate}}}

\makeatletter
\newif\if@sup
\newtoks\@sups
\def\append@sup#1{\edef\act{\noexpand\@sups={\the\@sups #1}}\act}%
\def\reset@sup{\@supfalse\@sups={}}%
\def\mk@scripts#1#2{\if #2/ \if@sup ^{\the\@sups}\fi \else%
  \ifx #1_ \if@sup ^{\the\@sups}\reset@sup \fi {}_{#2}%
  \else \append@sup#2 \@suptrue \fi%
  \expandafter\mk@scripts\fi}
\def\tensor#1#2{\reset@sup#1\mk@scripts#2_/}
\def\multiscripts#1#2#3{\reset@sup{}\mk@scripts#1_/#2%
  \reset@sup\mk@scripts#3_/}
\makeatother

\makeatletter
\newbox\slashbox \setbox\slashbox=\hbox{$/$}
\def\itex@pslash#1{\setbox\@tempboxa=\hbox{$#1$}
  \@tempdima=0.5\wd\slashbox \advance\@tempdima 0.5\wd\@tempboxa
  \copy\slashbox \kern-\@tempdima \box\@tempboxa}
\def\slash{\protect\itex@pslash}
\makeatother

\def\clap#1{\hbox to 0pt{\hss#1\hss}}
\def\mathllap{\mathpalette\mathllapinternal}
\def\mathrlap{\mathpalette\mathrlapinternal}
\def\mathclap{\mathpalette\mathclapinternal}
\def\mathllapinternal#1#2{\llap{$\mathsurround=0pt#1{#2}$}}
\def\mathrlapinternal#1#2{\rlap{$\mathsurround=0pt#1{#2}$}}
\def\mathclapinternal#1#2{\clap{$\mathsurround=0pt#1{#2}$}}

\let\oldroot\root
\def\root#1#2{\oldroot #1 \of{#2}}
\renewcommand{\sqrt}[2][]{\oldroot #1 \of{#2}}

\DeclareSymbolFont{symbolsC}{U}{txsyc}{m}{n}
\SetSymbolFont{symbolsC}{bold}{U}{txsyc}{bx}{n}
\DeclareFontSubstitution{U}{txsyc}{m}{n}

\DeclareSymbolFont{stmry}{U}{stmry}{m}{n}
\SetSymbolFont{stmry}{bold}{U}{stmry}{b}{n}

\DeclareFontFamily{OMX}{MnSymbolE}{}
\DeclareSymbolFont{mnomx}{OMX}{MnSymbolE}{m}{n}
\SetSymbolFont{mnomx}{bold}{OMX}{MnSymbolE}{b}{n}
\DeclareFontShape{OMX}{MnSymbolE}{m}{n}{
    <-6>  MnSymbolE5
   <6-7>  MnSymbolE6
   <7-8>  MnSymbolE7
   <8-9>  MnSymbolE8
   <9-10> MnSymbolE9
  <10-12> MnSymbolE10
  <12->   MnSymbolE12}{}


\makeatletter
\def\Decl@Mn@Delim#1#2#3#4{%
  \if\relax\noexpand#1%
    \let#1\undefined
  \fi
  \DeclareMathDelimiter{#1}{#2}{#3}{#4}{#3}{#4}}
\def\Decl@Mn@Open#1#2#3{\Decl@Mn@Delim{#1}{\mathopen}{#2}{#3}}
\def\Decl@Mn@Close#1#2#3{\Decl@Mn@Delim{#1}{\mathclose}{#2}{#3}}
\Decl@Mn@Open{\llangle}{mnomx}{'164}
\Decl@Mn@Close{\rrangle}{mnomx}{'171}
\Decl@Mn@Open{\lmoustache}{mnomx}{'245}
\Decl@Mn@Close{\rmoustache}{mnomx}{'244}
\makeatother

\makeatletter
\DeclareRobustCommand\widecheck[1]{{\mathpalette\@widecheck{#1}}}
\def\@widecheck#1#2{%
    \setbox\z@\hbox{\m@th$#1#2$}%
    \setbox\tw@\hbox{\m@th$#1%
       \widehat{%
          \vrule\@width\z@\@height\ht\z@
          \vrule\@height\z@\@width\wd\z@}$}%
    \dp\tw@-\ht\z@
    \@tempdima\ht\z@ \advance\@tempdima2\ht\tw@ \divide\@tempdima\thr@@
    \setbox\tw@\hbox{%
       \raise\@tempdima\hbox{\scalebox{1}[-1]{\lower\@tempdima\box
\tw@}}}%
    {\ooalign{\box\tw@ \cr \box\z@}}}
\makeatother


\makeatletter
\def\udots{\mathinner{\mkern2mu\raise\p@\hbox{.}
\mkern2mu\raise4\p@\hbox{.}\mkern1mu
\raise7\p@\vbox{\kern7\p@\hbox{.}}\mkern1mu}}
\makeatother








\newcommand{\Z}{\ensuremath{\mathbb Z}}

\renewcommand{\(}{\begin{equation}}
\renewcommand{\)}{\end{equation}}
\newcommand{\bea}{\begin{eqnarray*}}
\newcommand{\eea}{\end{eqnarray*}}


\usepackage{cleveref}

\crefformat{section}{\S#2#1#3} 
\crefformat{subsection}{\S#2#1#3}
\crefformat{subsubsection}{\S#2#1#3}

\theoremstyle{italics}
\newtheorem{theorem}{Theorem}[section]
\newtheorem{lemma}[theorem]{Lemma}
\newtheorem{prop}[theorem]{Proposition}

\theoremstyle{definition}
\newtheorem{defn}[theorem]{Definition}

\newtheorem{example}[theorem]{Example}

\newtheorem{remark}[theorem]{Remark}
\newtheorem{note[theorem]}{Note}

\usepackage{amsfonts}



\begin{document}

\title{Twisted Cohomotopy implies level quantization
 \\ of the full 6d Wess-Zumino term of the M5-brane}

 \author{Domenico Fiorenza, \; Hisham Sati, \; Urs Schreiber}

\maketitle

\begin{abstract}
  The full
  6d Hopf-Wess-Zumino term in the action functional for the M5-brane
  is anomalous as traditionally defined.
  What has been missing is a condition implying
  the higher analogue of level quantization
  familiar from the 2d Wess-Zumino term.
  We prove that the anomaly cancellation condition is implied by
  the hypothesis
  that the C-field is charge-quantized in twisted Cohomotopy theory.
  The proof follows by a twisted/parametrized generalization of
  the Hopf invariant, after identifying the
  full 6d Hopf-Wess-Zumino term with
  a twisted homotopy Whitehead integral formula, which we establish.
\end{abstract}

\medskip

\tableofcontents

\medskip

\section{Introduction and results}

The expected but elusive quantum theory of M5-branes in M-theory
(see \cite[\S 3]{Duff99}\cite[\S 2]{HSS18})
has come to be widely regarded as a core open problem in string theory,
already in its decoupling limit of an expected 6-dimensional superconformal quantum field theory
(see \cite{Moore12}\cite{HeckmanRudelius18}).
Most attempts to understand at least aspects of this theory have
been based on analogies (such as with the known M2-brane theory) and consistency checks (such as from
implications of the expected superconformal structure). But a systematic derivation of the theory from
deeper principles has not been possible, since these deeper principles must be those of the ambient M-theory, whose formulation
is itself a wide open problem
(\cite[\S 6]{Duff96}\cite[\S 12]{Moore14}\cite[$@$21:15]{Witten19}\cite[$@$17:14]{Duff19}).

\medskip
Recently in \cite{FSS19a}, following \cite{Sati13}, we motivated, from rigorous analysis of the super
homotopy theory of super $p$-branes initiated in \cite{FSS13},
a hypothesis about the mathematical
foundations of microscopic M-theory:

\vspace{.2cm}

\hypertarget{HypothesisH}{}
\noindent
\hspace{-.18cm}
\def\arraystretch{1.3}
\begin{tabular}{|l|}
  \hline
  \hspace{-.2cm}
  {\bf Hypothesis H.}
  {\it
  The M-theory C-field is charge-quantized
  in
  J-twisted Cohomotopy theory}
  (\cite{FSS20b}, Def. \ref{BackgroundFieldsSatisfyingHypothesisH}).
  \hspace{-.2cm}
  \\
  \hline
\end{tabular}

\vspace{.2cm}

We proved in \cite{FSS19b}\cite{SS19a}\cite{SS19b}\cite{SS20a}\cite{FSS20a}
that this hypothesis implies a list of subtle consistency conditions that had informally been argued to be
necessary for M-theory to exist. This suggests that
{\it Hypothesis H}
could indeed be
a correct assumption about the
mathematical principles underlying microscopic M-theory. If this is the case, further aspects of M-theory
must be systematically derivable, by rigorous mathematical deduction.

\medskip
Here we prove that \hyperlink{HypothesisH}{\it Hypothesis H} implies global consistency of the
full Hopf-Wess-Zumino term that appears
in the Green-Schwarz-type action functional of the M5-brane.
This used to be an open problem:

\newpage

\noindent {\bf The open problem.}
The full 6d Hopf-Wess-Zumino-term (Hopf-WZ-term) of the
single M5-brane (see \eqref{IntroductionWZTermWithLevel} below
for multiple M5-branes), originally proposed in \cite[p. 10]{Aharony96},
fully established by \cite[(1)]{BLNPST97},
is a functional of fields on a 6d \emph{worldvolume} manifold $\Sigma^6$ that may be expressed
in terms of auxiliary extended fields on a cobounding {\it extended worldvolume} manifold
$\widehat \Sigma^{\, 7}$, as follows
(full details and generality below in \cref{The6dWZWTerm}, see \hyperlink{TableA}{Table A}):

\vspace{.14cm}
\begin{tabular}{cc}
\hspace{-1cm}
\begin{minipage}[l]{8.4cm}
\begin{align}
  \label{IntroductionWZTerm}
  \underset{
    \mathclap{
    \raisebox{-3pt}{
      \tiny
      \color{darkblue}
      \bf
      Hopf-WZ term
    }
    }
  }{
    \widehat {S}^{\; 1 \,\rm M5}_{\mathrm{WZ}}=2\widehat {S}^{\; \rm M5}_{\mathrm{WZ}}
  }
  \;:=\;
  2
  \underset{\widehat \Sigma^{\, 7}}{\int}
  \Big(
    \tfrac{1}{2}
   \widehat H_3 \wedge \widehat f^\ast \widetilde G_4
   \;+\;
   \widehat f^\ast G_7
  \Big)
  \\
  \label{ConsistencyCondition}
  \exp
  \Big(
    2\pi i
    \big(
       \widehat S^{\; 1\, \rm  M5}_{\mathrm{WZ}}
    \big)
  \Big)
  \;\in\; U(1)
\end{align}
\end{minipage}
&
  \hspace{1.8cm}
  {\footnotesize
    \def\arraystretch{1.3}
    \begin{tabular}{|l||l|}
      \hline
      $\widehat \Sigma^{\, 7}$ & Extended worldvolume
      \\
      \hline
      $\widehat f$ & Extended sigma-model field
      \\
      \hline
      $\widehat H_3$ & Extended worldvolume higher gauge field
      \\
      \hline
      $\widetilde G_4$ & Shifted background C-field flux
      \\
      \hline
      $G_7$ & Dual background C-field flux
      \\
      \hline
    \end{tabular}
    }
\end{tabular}

\vspace{.2cm}
\noindent
The open problem is to show that this expression
\eqref{IntroductionWZTerm} is actually well-defined,
in that it is independent of the choice of extensions,
or at least independent up to integer shifts, so that
at least the exponentiated Wess-Zumino action functional
\eqref{ConsistencyCondition} is well-defined.

\medskip
This means, equivalently, that
the difference of \eqref{IntroductionWZTerm}
over any two extensions, hence the integral
as in \eqref{IntroductionWZTerm} over
{\it closed} 7-manifolds $\widetilde \Sigma^7$, is an integer.
In supergravity, this closed integral is also known as
the {\it Page charge} \cite[(8)]{Page83}\cite[(43)]{DuffStelle91}\cite[(1.5.2)]{BLMP12},
and a proof of its integrality
would be a proof of {\it Page charge quantization}
(full details and generality are given below in Def. \ref{TheAnomalyFunctional}).
In this form, the problem was raised in \cite{Moore05},
but the details of the solution were left open.

\medskip

\noindent {\bf Partial solution in the literature.}
A suggestive partial solution to this problem was proposed in \cite{Intriligator00}, by
\begin{enumerate}[{\bf (i)}]
\vspace{-2.9mm}
\hypertarget{Assumption1}{}
\item   assuming that $G_4$ is not only the form datum underlying
a topological cocycle in rational Cohomotopy, \footnote{This is an after the fact
statement, in that we recast it that way in our formulation.
In \cite{Intriligator00}, the 4-sphere was part of spacetime,
whereas in our Cohomotopy
formulation it serves as a classifying space
that receives maps out of spacetime;
and we view the formulation in \cite{Intriligator00} as a special case where part of spacetime
is identified with that classifying space.
The sphere as a classifying space
for Cohomotopy cohomology theory
generalizes the Eilenberg-MacLane spaces
that classify ordinary cohomology.
See \cite[\S 2]{FSS20b} for these general concepts
and \cite{GS20} for a detailed comparison between cohomology and Cohomotopy.
}
but even that of an actual
smooth function $c_{\mathrm{smth}}$ to the smooth 4-sphere
\cite[(5.3)]{Intriligator00};

\vspace{-3mm}
\hypertarget{Assumption2}{}
\item
focusing on the first summand \cite[(2.4)]{Intriligator00}
and disregarding the second summand in \eqref{IntroductionWZTerm},
leaving its understanding for later \cite[top of p. 16]{Intriligator00}.
\end{enumerate}
\vspace{-2mm}

\noindent With these simplifications imposed, expression \eqref{IntroductionWZTerm} reduces
on oriented difference manifolds $\widetilde \Sigma^7 := \widehat \Sigma^{\, 7}_1 - \widehat \Sigma^{\, 7}_2$ \eqref{DifferenceOriented}
to the classical Whitehead integral formula \cite{Whitehead47} (see \cite[Prop. 17.22]{BT82})
for the {\it Hopf invariant} $\mathrm{HI}( c_{\mathrm{smth}} \circ \widetilde f )$
of maps to the 4-sphere
(recalled as Def. \ref{TheHopfInvariant} below).
Since the Hopf invariant is an integer by its homotopy-theoretic definition,
\cite{Intriligator00} suggests that \eqref{ConsistencyCondition}
is satisfied and thus refers to the first summand in \eqref{IntroductionWZTerm} as the \emph{Hopf-Wess-Zumino term},
a terminology
that was used for other sigma-models before \cite{WilczekZee83}\cite{TN},
and which has become widely adopted for the M5 since
(e.g.
\cite[\S 3.2]{KalkkinenStelle03}\cite[(2)]{HuNanopoulos11}\cite[(4.13)]{Arvanitakis18}).
But, since assumption \hyperlink{Assumption1}{\bf (i)} is not
supposed to be generally satisfied, so that
disregarding  the second term \hyperlink{Assumption2}{\bf  (ii)} is not
generally possible, this is only a partial solution, and
the full problem of showing general consistency of \eqref{IntroductionWZTerm}
by demonstrating \eqref{ConsistencyCondition} had remained open.

\medskip

\noindent {\bf Solution by homotopy periods in Cohomotopy.}
We observe here
that the full Hopf-WZ-term \eqref{IntroductionWZTerm},
including the previously neglected summand $\widehat f^\ast G_7$,
has the form of a
secondary characteristic class descending from the intersection pairing,
originally called a ``functional cup product'' by Steenrod \cite{Steenrod49}
and more recently discussed under the name
{\it homotopy period} in \cite[Ex. 1.9]{SinhaWalter08}.
That these {\it homotopy Whitehead integrals}
are the proper homotopy-theoretic formulation of the
original Whitehead integral formula \cite{Whitehead47}
for the Hopf invariant,
was remarked already by Haefliger \cite[p. 17]{Haefliger78}.
For the analogous lower-dimensional case of
maps from the 3-sphere to the 2-sphere,
this had been worked out in \cite[\S 14.5]{GrMo13}.

\medskip
Our {\bf first main result} here (Theorem \ref{AnomalyFunctionalAsLiftInCohomotopy} below)
is a transparent proof that the full 6d Hopf-WZ-term \eqref{IntroductionWZTerm}
(including both summands) is a homotopy period/homotopy Whitehead integral in this sense, which
reduces to the Whitehead integral formula for the Hopf invariant in the respective special cases
(Remark \ref{WhiteheadIntegralFormulasInTheLiterature} below).
In fact, we prove a more general
twisted version of the homotopy Whitehead integral,
which incorporates also the topological twists that
account for the half-integral shift by $\tfrac{1}{4}p_1$ demanded by flux quantization
of the background C-field (Remark \ref{ShiftedFluxQuantizationCorrectionsToTheHopfWZTerm} below)
thus generalizing the 6d Hopf-WZ term \eqref{IntroductionWZTerm}
to curved backgrounds (Def. \ref{DefinitionOfWZWTermByFieldExtensionToCoboundary} below).

\medskip
This shows, in particular, that the two summands in \eqref{IntroductionWZTerm} can not be invariantly
separated,  and hence that it is really the full term \eqref{IntroductionWZTerm}  which deserves
to be called the \emph{Hopf-Wess-Zumino term}. Thereby the puzzlement expressed in \cite[top of p. 8]{Intriligator00} is resolved:
The first summand of \eqref{IntroductionWZTerm} by itself does not actually qualify as a Wess-Zumino term,
since it is not (the pullback of) a cocycle.
The full term \emph{is} a cocycle
(Remark \ref{TheHopfWZTermOverBFivebrane}), and in fact a cocycle in integral
cohomology if \hyperlink{HypothesisH}{\it Hypothesis H} is satisfied, by the proof of our second main result:

\medskip
Our {\bf second main result} (Theorem \ref{AnomalyIsIntegral} below)
shows that under \hyperlink{HypothesisH}{\it Hypothesis H}
the 6d Hopf-Wess-Zumino term \eqref{IntroductionWZTerm} is generally integral,
even in its topologically twisted generalization.
This topologically twisted/parametrized generalization of the
Hopf invariant
thus establishes \eqref{ConsistencyCondition} and hence
proves in generality that the 6d Hopf-Wess-Zumino term of the
M5-brane is well-defined (namely integral, level-quantized);
see \eqref{SummaryDiagram} for the conclusion.

\medskip

\noindent {\bf Consequences.} We briefly highlight
some consequences of and conclusions drawn from this result:

\vspace{-.4cm}
\paragraph{1. Level quantization.}
A key argument of  \cite[(2.8)]{Intriligator00} was that the mathematical incarnation
of $N$ coinciding M5-branes is in the bare Hopf-WZ term \eqref{IntroductionWZTerm}
$\widehat S_{\rm WZ}^{\; \rm M5}
  =
 \int \tfrac{1}{2} H_3 \wedge \widehat f^\ast G_4 + \cdots$
being multiplied by $N(N+1)$, at least in its first summand.
Since, by our result here, the two summands cannot be invariantly separated,
this means that the full term has to be multiplied this way,
hence that for $N$ coincident M5-branes the expression \eqref{IntroductionWZTerm}
generalizes to
\vspace{-2mm}
\begin{equation}
  \label{IntroductionWZTermWithLevel}
  \widehat {S}^{\; N \,\rm M5}_{\mathrm{WZ}}
  \;:=\;
  N(N+1)
  \underset{\widehat \Sigma^{\, 7}}{\int}
  \Big(
    \tfrac{1}{2}
   \widehat H_3 \wedge \widehat f^\ast \widetilde G_4
   \;+\;
   \widehat f^\ast G_7
  \Big)
  \phantom{AAAAAA}
  \mbox{
    \begin{tabular}{|l||l|}
      \hline
      $N$ & Number of coincident M5-branes
      \\
      \hline
    \end{tabular}
  }
\end{equation}

\vspace{-1mm}
\noindent
with the factor of 2 in \eqref{IntroductionWZTerm} being the case of $N = 1$.
Since $N(N+1)$ is even for all $N$, the condition that \eqref{ConsistencyCondition}
is well-defined {up to an integral shift}
(by Theorems \ref{AnomalyFunctionalAsLiftInCohomotopy} and
\ref{AnomalyIsIntegral}) implies that
\vspace{-3mm}
\begin{equation}
  \label{ConsistencyConditionWithLevel}
  \exp
  \Big(
    2\pi i
    \big(
       \widehat S^{\; N \, \rm M5}_{\mathrm{WZ}}
    \big)
  \Big)
  \;\in\; U(1)
\end{equation}

\vspace{-2mm}
\noindent
is also well-defined, for all $N$.
Thus the factor $N(N+1)$
plays the role of the \emph{level} of the 6d Wess-Zumino term
of the M5-brane;
and its even integral form is the \emph{level quantization}
for the 6d Hopf-Wess-Zumino term of the M5-brane,
in analogy with integral levels of
ordinary Wess-Zumino terms \cite{Witten83}.

\vspace{-3mm}
\paragraph{2. Dimensional generalization and the Hopf invariant one theorem.}
The full 6d Wess-Zumino term of the M5-brane \eqref{IntroductionWZTerm}
is evidently the special case $k = 1$ of a sequence of Wess-Zumino terms
$S^{ 1 \, B(4k+1)}_{\mathrm{WZ}}$ that exist for all $k \in \mathbb{N}$
on higher gauged $p$-brane sigma-model fields with $p = 4k+1$, hence the
notation $B(4k+1)$. It is precisely these worldvolume dimensions that admit self-dual higher gauge fields.
For trivial topological twist $\tau$ in \eqref{GeneralBackgroundFieldDataAsCohomotopyCocycle},
the proof of Theorem \ref{AnomalyFunctionalAsLiftInCohomotopy}
generalizes verbatim to this infinite hierarchy, simply by generalizing the degree of the generator
$\omega_4$ in \eqref{GeneralBackgroundFieldDataAsCohomotopyCocycle} to $2(k+1)$
and the degree of the generator $\omega_{\, 7}$  to $4k+3$. Similarly, Prop.
\ref{GWIIsHomotopyInvariantOfBackgroundFieldData} generalizes verbatim and shows that
for all $k \in \mathbb{N}$
the anomaly functionals $\widetilde S^{\; 1 \, B(4k+1)}_{\mathrm{WZ}}$
(Def. \ref{TheAnomalyFunctional})
of these Wess-Zumino terms compute, in the absence of topological twists and under \hyperlink{HypothesisH}{\it Hypothesis H},
the Hopf invariant of the composite of the brane's sigma-model field with the cocycle of the background field in Cohomotopy.

\medskip
It is interesting to note that, from this perspective,
we may take the classical \emph{Hopf invariant one theorem}
\cite{Adams60}
to say that if the oriented difference of extended worldvolumes
is the $(4k+1)$-sphere $\widetilde \Sigma^{4k+1} = S^{4k+1}$,
then for almost all values of $k \in \mathbb{N}$
the anomaly functional $\widetilde S^{\; 1 B(4k+1)}_{\mathrm{WZ}}$
(Def. \ref{TheAnomalyFunctional})
is an even integer,
in that the only values of $k$
for which it may take odd integer values
are precisely those that correspond to branes which actually appear in
string/M-theory:
\vspace{-2mm}
\begin{center}
\begin{tabular}{|c||c|c|c|}
  \hline
  $k =$ & $0$ & $1$ & $2$
  \\
  \hline
  \hspace{-.1cm}$(4k+1)$-brane\hspace{-.1cm}
    &
  \hspace{-.1cm}string\hspace{-.1cm}
    &
  \hspace{-.1cm}five-brane\hspace{-.1cm}
    &
  \hspace{-.1cm}nine-brane\hspace{-.1cm}
  \\
  \hline
\end{tabular}
\hspace{.2cm}
\begin{minipage}[l]{9.7cm}
  {\bf \footnotesize \hyperlink{HypothesisH}{\it Hypothesis H} with the
  \emph{Hopf invariant one theorem}} \footnotesize singles out
  the worldvolume dimensions $p+1 \in \{2,6,10\}$
  among $p$-branes admitting self-dual higher gauge fields,
  as those whose
  Wess-Zumino
  anomaly functional $\widetilde S^{\; 1 \, B (4k+1)}_{\mathrm{WZ}}$
  is integrally \emph{in}divisible.
\end{minipage}
\end{center}

\paragraph{3. Unifying role of the quaternionic Hopf fibration.}
It is noteworthy that the proofs of our main results
(Theorem \ref{AnomalyFunctionalAsLiftInCohomotopy} and Theorem \ref{AnomalyIsIntegral})
proceed entirely by characterizing lifts in Cohomotopy through the quaternionic Hopf fibration,
observing that it is such lifts which reflect, under \hyperlink{HypothesisH}{\it Hypothesis H},
the higher gauge field $H_3$ on the worldvolume of the M5-brane \cite[Prop. 3.20]{FSS19b}.
This tightly connects the discussion of the
6d Wess-Zumino term here to
the analogous cohomotopical discussion of its supersymmetric completion in  \cite{FSS15}\cite{FSS19d}
and to the anomaly cancellation conditions on the background fields
in \cite{FSS19b}\cite{SS20a}, all rigorously derived from first principles;
and thus suggests that a complete derivation of the
elusive quantum M5-brane may exist guided by
\hyperlink{HypothesisH}{\it Hypothesis H}.

{
\paragraph{4. Outlook -- Refinement to differential Cohomotopy.}
It is well known
(see \cite{FSS12b}, review in \cite{FSS13a})
that the definition of Wess-Zumino- and Chern-Simons-terms by field extensions over a cobounding manifold,
while an elegant method when
it applies, is not the most general definition of these terms. In cases
where such field extensions do not exist, the WZ- and CS-terms may
still exist, now defined as hypervolume holonomies of cocycles in
a \emph{differential} cohomology theory (see \cite[\S 4.3]{FSS20b}).
For the ordinary WZ- and CS-term this differential cohomology theory is differential
ordinary cohomology, represented equivalently as Cheeger-Simons
differential characters or as Deligne cohomology or as bundle gerbes with connections, or as $B^n U(1)$-principal connections.

\medskip
But for the case of the 6dWZW term of the M5-branes,
our results here show
that the appropriate differential cohomology theory that
generalizes the construction by field extension presented here must be
a differential refinement of Cohomotopy cohomology theory. We had
constructed one version of such a \emph{differential Cohomotopy cohomology theory} in
\cite[\S 4]{FSS15}\cite[\S 5.3]{FSS20b}, further discussed in \cite[\S 3]{GS20}.
Ultimately one should use
this
to generalize the results we present
here to situations where extensions of fields over cobounding manifolds
may not exist.}

\medskip
\noindent {\bf Outline.}
In \cref{The6dWZWTerm} we make precise the
6d Hopf-Wess-Zumino term and its anomaly, including topological twisting.
In \cref{InTermsOfCohomotopy} we establish that the full WZ term is
a homotopy period/homotopy Whitehead integral.
In \cref{HypothesisHImpliesAnomalyCancellation} we prove
that {\it Hypothesis H} implies that
the full 6d Hopf-Wess-Zumino term is well-defined.

\section{The full 6d Hopf-WZ term of the M5-brane}
 \label{The6dWZWTerm}

In this section we present a precise definition, paraphrasing from the informal literature,
of the 6d Hopf-Wess-Zumino term of the M5-brane, refine it to include topological
twists reflecting the shifted quantization condition on the C-field flux, and then prove
that the corresponding anomaly functional is a homotopy invariant.

 \medskip
First we state (in Def. \ref{LocalDefinitionOfWZWTerm} below) the
6d WZ term for ``small''
sigma-model fields as found in the original articles \cite[p. 11]{Aharony96}\cite[(1)]{BLNPST97},
then we consider its globalization via extension to cobounding extended worldvolumes as in
\cite[(5.4)]{Intriligator00} (Def. \ref{DefinitionOfWZWTermByFieldExtensionToCoboundary} below).
Throughout, we include the half-integral shift of $G_4$ by $\tfrac{1}{4}p_1$,
demanded by the flux quantization of the C-field \cite{Witten97a};
see Remark \ref{ShiftedFluxQuantizationCorrectionsToTheHopfWZTerm} below.
Finally, we discuss
the corresponding anomaly functional (Def. \ref{TheAnomalyFunctional} below) and show that
it is a homotopy invariant on the space of gauged sigma-model fields
(Lemma \ref{AnomalyFunctionalIsHomotopyInvariant}).

\medskip

To be precise, we begin by introducing the relevant ingredients:

 \newpage

\begin{defn}[Background C-field and higher gauged sigma-model fields]
  \label{BasicSetup}
   $\,$
   \vspace{-1mm}
   \item {\bf (i)}
   Let  $X^8
   $
   be a smooth 8-manifold \hspace{-1pt}which \hspace{-1pt}is \hspace{-1pt}connected,
   \hspace{-1pt}simply
   \hspace{-1pt}connected\footnote{
         All results in the following readily generalize
         to non-connected $X$, but nothing essential is
         gained thereby. The assumption that $X$ is simply connected
         is to allow the use of Sullivan model analysis
         in \cref{InTermsOfCohomotopy} and \cref{HypothesisHImpliesAnomalyCancellation}
         (as in {\cite[Rem 2.6]{FSS19b}\cite[Rem. 3.53]{FSS20b}}).
         For this it would be sufficient to assume that $X$ is
         \emph{nilpotent} \cite[Def. 3.52]{FSS20b} in that it has nilpotent fundamental group
         acting nilpotently on homotopy and homology groups
         of its universal cover. This assumption should not
         be necessary, but without it all proofs
         will become much more involved.
       } and spin, \hspace{-1pt}to \hspace{-1pt}be \hspace{-1pt}called \hspace{-1pt}the \hspace{-1pt}\emph{target spacetime}\footnote{
         This pertains to M-theory on 8-manifolds, see
         \cite[Remark 3.1]{FSS19b}. We will often just write $X$ for $X^8$.
       }.

          \vspace{-1mm}
   \item {\bf (ii)}
   Let  $\Sigma$ be a smooth manifold,
        which is compact and oriented, to be called
     \begin{enumerate}[{\bf (a)}]
     \vspace{-2mm}
     \item the \emph{worldvolume} if it is 6-dimensional
        $\Sigma := \Sigma^6$ without boundary;
      \vspace{-2mm}
       \item the \emph{extended worldvolume} if it is 7-dimensional
           $\Sigma := \widehat \Sigma^{\, 7}$,
         with collared boundary
         \vspace{-2mm}
         \begin{equation}
           \label{Collar}
           \xymatrix{
             \Sigma^6
             =
             \partial \widehat \Sigma^{\, 7}
            \; \ar@{^{(}->}[rr]^-{ (\mathrm{id},0) }
             &&
             \big(\partial \widehat \Sigma^{\, 7}\big)
             \times [0,1)
            \; \ar@{^{(}->}[r]
             &
             \widehat \Sigma^{\, 7}.
           }
         \end{equation}
       \vspace{-9mm}
       \item the \emph{oriented difference of extended worldvolumes}
         if it is 7-dimensional
         $\Sigma := \widetilde \Sigma^{\, 7}$ and
         arising as the oriented difference
         \begin{equation}
           \label{DifferenceOriented}
           \widetilde \Sigma^{\, 7}
           \;=\;
           \widehat \Sigma^{\, 7}_1
           -
           \widehat \Sigma^{\, 7}_2
           \;:=\;
           \widehat \Sigma^{\, 7}_1
           \cup_{\Sigma^6}
           \big(\, \widehat \Sigma^{\, 7}_2 \big)^{\mathrm{op}}
         \end{equation}
         (where $(-)^{\mathrm{op}}$ denotes orientation reversal)
         of two collared coboundary extension $\widehat \Sigma^{\, 7}_{1,2}$
         \eqref{Collar}
         of the same worldvolume
         $\partial \widehat \Sigma^{\, 7}_{1,2} = \Sigma^6$; in particular $\widetilde \Sigma^{\, 7}$ has no boundary.
     \end{enumerate}

   \item{\bf (iii)}
   A \emph{background field} configuration on $X^8$
   is
  \begin{enumerate}[{\bf (a)}]
  \vspace{-2mm}
   \item
   an affine $\mathrm{Spin}(8)$-connection $\nabla$ on
   the tangent bundle\footnote{
     The theorems below hold, as general statements
     about the 6d WZ term, for $\nabla$ a
     connection on any $\mathrm{Spin}$ bundle. But application to the
     actual M5-brane system requires $\nabla$ to be a tangent
     connection on spacetime.
   }
   $T X^8$;
   \vspace{-3mm}
   \item
   a pair of differential forms
    \vspace{-2mm}
  \begin{equation}
    \label{SpacetimeFormAssumption}
    \begin{aligned}
      \phantom{2}G_4 & \in\;  \Omega_{\mathrm{dR}}^4\big( X^8 \big)
      \\
      2G_7 & \in\; \Omega_{\mathrm{dR}}^7\big( X^8 \big)
    \end{aligned}
    \phantom{AAA}
    \mbox{such that}
    \phantom{AAA}
    \begin{aligned}
      d \; \phantom{2}G_4 & = 0\,,
      \\
      d \; 2G_7
        & =
        - G_4 \wedge G_4
        + \big( \tfrac{1}{4}p_1(\nabla) \big)
          \wedge
          \big( \tfrac{1}{4}p_1(\nabla) \big)
          \,,
    \end{aligned}
  \end{equation}
   \vspace{-2mm}
\noindent  where the Pontrjagin 4-form (e.g. \cite[\S XII.4]{KobayashiNomizu63})
 \vspace{0mm}
  \begin{equation}
    \label{Pontrjagin4Form}
    p_1(\nabla)
      :=
    \langle R_{\nabla} \wedge R_\nabla \rangle
      \end{equation}

   \vspace{-3mm}
\noindent
is the value of the
  curvature 2-form $R$ of $\nabla$ in the normalized
  Killing form invariant polynomial
  $\langle -,-\rangle$ on $\mathfrak{so}(8)$.
  Notice that in terms of the shifted flux form \cite[(1.2)]{Witten97a}\cite[(1.2)]{Witten97b}\cite[\S 3.4]{FSS19b}\cite{FSS20a}
  \begin{equation}
    \label{G4Plus}
    \widetilde G_4 \;:=\; G_4 + \tfrac{1}{4}p_1(\nabla)
  \end{equation}
  the second condition in
  \eqref{SpacetimeFormAssumption}
  \cite[(3.6)]{Witten97b}\cite[\S 3.3]{FSS19b}
  equivalently reads
  \begin{equation}
    \label{SecondVersionOfdG7Equation}
       d\,2 G_7
       =
      -
      \big(
        \widetilde G_4 \wedge \widetilde G_4
        -
        \tfrac{1}{2}p_1(\nabla) \wedge \widetilde G_4
      \big).
  \end{equation}
  \end{enumerate}

   \vspace{-3mm}
  \item {\bf (iv)}
    A \emph{higher gauged} \footnote{
    This is the higher analog of abelian gauging of 2d WZW
    model fields (e.g. \cite[(5)]{Forste03}),
    making the 6d Wess-Zumino term the action functional of
    a \emph{higher gauged Wess-Zumino model} \cite{FSS13}.
    }
    \emph{sigma-model field} is a pair
 \vspace{-1mm}
    \begin{equation}
      \label{GaugedExtendedSigmaModelFields}
      \big(
        f,
        \,
        H_3
      \big)
      \;=\;
      \Big(
        \,
        \Sigma \xrightarrow{\;f\;{\color{darkblue}\mathrm{smooth}}} X,
        \;\;
        d H_3
        \;=\;
        f^\ast
        \big(
          G_4 - \tfrac{1}{4} p_1(\nabla)
      \big)
      \Big)
    \end{equation}

 \vspace{-2mm}
\noindent
    consisting of

     \vspace{-2mm}
    \begin{enumerate}[{\bf (a)}]
     \item a smooth function $f$ from the {(extended)} worldvolume
      to spacetime,

\vspace{-2mm}
    \item a smooth differential 3-form
    $H_3$ on the {(extended)} worldvolume, which trivializes
    the pullback along $f$ of the difference between $G_4$ from \eqref{SpacetimeFormAssumption}
    and $\tfrac{1}{4}p_1(\nabla)$ from \eqref{Pontrjagin4Form},
   \end{enumerate}
   \vspace{-2mm}
   both required to have \emph{sitting instants} on any
   collared boundary \eqref{Collar},
   in that in some neighborhood of the boundary
     they are constant in the direction
      perpendicular to it \cite[Def. 4.2.1]{FSS10};

\item {\bf (v)}
  A \emph{gauge transformation} or \emph{homotopy}
  between two higher gauged sigma-model fields \eqref{GaugedExtendedSigmaModelFields}
 \vspace{-2mm}
  \begin{equation}
    \label{GaugedHomotopy}
    \xymatrix{
      \big( f_0,\; (H_3)_0 \big)
      \ar@{=>}[rr]^-{ \left( \eta,\; (H_3)_{[0,1]}\right) }
      &&
      \big( f_1,\; (H_3)_1 \big)
    }
  \end{equation}

 \vspace{-2mm}
\noindent
  is a pair consisting
  of a smooth homotopy $\eta$ from $f_0$ to $f_1$
  and a differential 3-form $(H_3)_{[0,1]} \in \Omega_{\mathrm{dR}}^3\big( \Sigma \times [0,1]\big)$ gauging $\eta$ and restricting to $(H_3)_{0,1}$
  at the boundaries of the interval:
   \vspace{-2mm}
  \begin{equation}
    \label{DataHomotopyGauged}
    \raisebox{47pt}{
    \xymatrix@C=4em{
      \Sigma
      \ar[d]_-{ (\mathrm{id}, 0) }
      \ar[drr]^-{ f_0 }
      &&&
      (\widetilde H_3)_0
      \\
      \Sigma \times [0,1]
      \ar[rr]^<<<<<<<<<{ \eta \;\, {\color{darkblue}\mathrm{smooth}} }
      &&
      X
      &
      (H_3)_{[0,1]}
      \ar@{|->}[u]_-{ (\mathrm{id},0)^\ast }
      \ar@{|->}[d]^-{ (\mathrm{id},1)^\ast }
      &
      d
      \big(
      (H_3)_{[0,1]}
      \big)
      =
      \eta^\ast
      \big(
        G_4
        -
        \tfrac{1}{4}p_1(\nabla)
      \big).
      \\
      \Sigma
      \ar[u]^-{ (\mathrm{id}, 1) }
      \ar[urr]_-{ f_1 }
      &&&
      (H_3)_1
    }
    }
  \end{equation}

 \vspace{-2mm}
  \item {\bf (vi)}
  We write
 \vspace{-.8cm}
  \begin{equation}
    \label{GaugeableFunctions}
    \mathrm{Maps}_{\mathrm{smth}}^{\mathrm{ggd}}
    (\Sigma,X)
    \;:=\;
    \big\{
      \big( f, H_3 \big)
    \big\}
    \,,
    \phantom{AAAA}
    \pi_0
    \big(
      \mathrm{Maps}_{\mathrm{smth}}^{\mathrm{ggd}}
      (\Sigma,X)
    \big)
    \;:=\;
    \big\{
      \big( f, H_3 \big)
    \big\}_{\big/\sim_{\mathrm{homotopy}}}
      \end{equation}

       \vspace{-2mm}
\noindent
  for the sets\footnote{
    The inclined reader will notice
    (see \cite{FSS13} for exposition) that
    the set $\mathrm{Maps}_{\mathrm{smth}}^{\mathrm{ggd}}(\Sigma,X)$
    is, of course, the underlying set of global sections of
    the atlas for the smooth \emph{moduli 2-stack} of higher gauged sigma-model fields on $\Sigma$
    \cite[\S 4.3]{FSS20b}, and
    $\pi_0\big(\mathrm{Maps}_{\mathrm{smth}}^{\mathrm{ggd}}(\Sigma,X)\big)$
    is the set of connected components of the geometric realization
    of this moduli 2-stack. All of the
    following discussion lifts to the higher differential geometry
    of moduli stacks of fields, but for the sake of brevity we
    will not further consider this here.
  }
  of
  higher gauged sigma-model fields \eqref{GaugedExtendedSigmaModelFields}
  and of their \emph{homotopy classes} \eqref{GaugedHomotopy},
  respectively.
\end{defn}

\begin{remark}[Shifted flux quantization corrections to the Hopf-WZ term]
  \label{ShiftedFluxQuantizationCorrectionsToTheHopfWZTerm}
  The existing literature on the Hopf-WZ term
  \cite[p. 11]{Aharony96}\cite[(1)]{BLNPST97}\cite[(2.4)]{Intriligator00}
  \cite[\S 3.2]{KalkkinenStelle03}\cite[(2)]{HuNanopoulos11}\cite[(4.13)]{Arvanitakis18}
  disregards any topological correction terms
  to the C-field flux proportional to $p_1(\nabla)$,
  shown in \eqref{G4Plus}, \eqref{SecondVersionOfdG7Equation}.
  Hence, in comparing to this literature
  (as in Def. \ref{LocalDefinitionOfWZWTerm}, Def. \ref{DefinitionOfWZWTermByFieldExtensionToCoboundary} below),
  one has to restrict to the
  special case that $p_1(\nabla) = 0$ (for instance in that $\nabla = 0$,
  hence that spacetime is assumed to be flat).
  Beyond this special case,
  such correction terms are famously thought to be required, by
  circumstantial arguments provided in \cite{Witten97a}\cite{Witten97b}.
  The result of \cite[\S 3.3]{FSS19b} (recalled as Remark \ref{BackgroundFieldsAsRationalCohomotopy} below)
  is that charge quantization of the C-field in J-twisted Cohomotopy theory
  \cite[\S 5.3]{FSS20b} implies exactly these corrections
  \eqref{SecondVersionOfdG7Equation}; and the main Theorem
  \ref{AnomalyIsIntegral} below says that with these $p_1$-corrections
  and this cohomotopical charge quantization,
  the full Hopf-WZ term is actually guaranteed to be anomaly-free (namely: integral, level-quantized).
\end{remark}

As an important example of Def. \ref{BasicSetup}, we offer the following:

\begin{lemma}[The 7-sphere as an extended worldvolume]
  \label{GaugebleSigmaModelMapsOn7Sphere}
  In the situation of Def. \ref{BasicSetup},
  let the oriented difference of extended worldvolumes
  be the 7-sphere:
  $\Sigma := \widetilde \Sigma^7 := S^7$.
  Then the set \eqref{GaugeableFunctions} of homotopy classes of extended gauged sigma-model fields
  is the set underlying the 7th homotopy group of
  target spacetime $X$:
  \begin{equation}
    \label{GaugeableFieldsOn7SphereFormpi7X}
    \pi_0
    \Big(
    \mathrm{Maps}_{\mathrm{smth}}^{\mathrm{ggd}}
    \big(
      S^7
      ,
      X
    \big)
    \Big)
    \;\simeq\;
    \pi_7(X)
    \,.
  \end{equation}
\end{lemma}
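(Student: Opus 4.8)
The plan is to realize the bijection \eqref{GaugeableFieldsOn7SphereFormpi7X} via the map that forgets the gauge datum, $[(f,H_3)] \mapsto [f]$, and to show this forgetful map is a bijection onto $[S^7,X] \cong \pi_7(X)$; the latter identification uses that $X$ is simply connected, so that free and based homotopy classes of maps out of $S^7$ coincide. The content of the lemma is then that the $3$-form $H_3$ carries no homotopical information beyond $f$ when $\Sigma = S^7$, which I will trace to the vanishing of $H^3_{\mathrm{dR}}(S^7)$ and $H^4_{\mathrm{dR}}(S^7)$. Throughout I abbreviate $\theta := G_4 - \tfrac{1}{4}p_1(\nabla)$, which is a closed $4$-form on $X$ by \eqref{SpacetimeFormAssumption} and \eqref{Pontrjagin4Form}; note that since $\widetilde\Sigma^7 = S^7$ is closed the sitting-instant conditions in Def. \ref{BasicSetup} are vacuous.

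For surjectivity, given any smooth $f \colon S^7 \to X$ the pullback $f^\ast \theta$ is a closed $4$-form on $S^7$, hence exact because $H^4_{\mathrm{dR}}(S^7) = 0$; any primitive $H_3$ with $dH_3 = f^\ast \theta$ then gives a gauged field $(f,H_3)$ in the sense of \eqref{GaugedExtendedSigmaModelFields} lying over $[f]$. That the forgetful map descends to homotopy classes is immediate from \eqref{DataHomotopyGauged}, since a gauge homotopy $(\eta,(H_3)_{[0,1]})$ has underlying smooth homotopy $\eta$ from $f_0$ to $f_1$.

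The crux is injectivity: I must show that whenever $f_0 \simeq f_1$, any gauged fields $(f_0,(H_3)_0)$ and $(f_1,(H_3)_1)$ over them are gauge-homotopic. Fixing a smooth homotopy $\eta$ from $f_0$ to $f_1$, the closed $4$-form $\eta^\ast\theta$ admits a primitive $\widetilde H$ on $S^7 \times [0,1]$, since $H^4_{\mathrm{dR}}(S^7\times[0,1]) \cong H^4_{\mathrm{dR}}(S^7) = 0$. The restrictions $\alpha_0 := (H_3)_0 - \widetilde H|_{t=0}$ and $\alpha_1 := (H_3)_1 - \widetilde H|_{t=1}$ are closed $3$-forms on $S^7$ --- indeed $d\alpha_0 = f_0^\ast\theta - (\eta^\ast\theta)|_{t=0} = 0$ because $\eta|_{t=0} = f_0$, and similarly for $\alpha_1$ --- so by $H^3_{\mathrm{dR}}(S^7)=0$ we may write $\alpha_0 = d\beta_0$ and $\alpha_1 = d\beta_1$.

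The remaining and main step is to correct $\widetilde H$ by an exact $3$-form matching the prescribed boundary data without disturbing $d\widetilde H = \eta^\ast\theta$; I would do this by an explicit cylinder interpolation. Choosing a smooth cutoff $\rho \colon [0,1] \to [0,1]$ with $\rho \equiv 1$ near $t=0$ and $\rho \equiv 0$ near $t=1$ (so that $\rho'$ vanishes at both endpoints), set $\gamma := \rho\,\pi^\ast\beta_0 + (1-\rho)\,\pi^\ast\beta_1$ with $\pi \colon S^7\times[0,1] \to S^7$ the projection, and put $(H_3)_{[0,1]} := \widetilde H + d\gamma$. Then $d(H_3)_{[0,1]} = \eta^\ast\theta$ as required by \eqref{DataHomotopyGauged}, while at each endpoint the $dt\wedge(\cdots)$ term of $d\gamma$ drops out because $\rho'$ vanishes there, leaving $(H_3)_{[0,1]}|_{t=0} = \widetilde H|_{t=0} + \alpha_0 = (H_3)_0$ and likewise $(H_3)_{[0,1]}|_{t=1} = (H_3)_1$. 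This exhibits the required gauge homotopy, proving injectivity and hence the bijection \eqref{GaugeableFieldsOn7SphereFormpi7X}.
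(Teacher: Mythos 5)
Your proposal is correct and rests on the same two cohomological facts as the paper's proof, namely $H^4_{\mathrm{dR}}(S^7)=0$ (existence of a gauging $H_3$ for any $f$) and $H^3_{\mathrm{dR}}(S^7)=0$ (uniqueness of the gauging up to gauged homotopy). The difference is in how injectivity of the forgetful map is organized. The paper reduces to the case of two gaugings $(\widetilde f,(\widetilde H_3)_0)$ and $(\widetilde f,(\widetilde H_3)_1)$ of the \emph{same} underlying map $\widetilde f$, and exhibits an explicit gauged homotopy over the constant homotopy, via $(\widetilde H_3)_{[0,1]} := (\widetilde H_3)_1 + (s-1)\,d\alpha + ds\wedge\alpha$ with $d\alpha = (\widetilde H_3)_1-(\widetilde H_3)_0$; it leaves implicit the step of transporting a gauging along a nonconstant smooth homotopy $\eta\colon f_0\simeq f_1$. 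You instead treat the general case $f_0\simeq f_1$ head-on: you take a primitive $\widetilde H$ of $\eta^\ast\theta$ on the cylinder (using $H^4_{\mathrm{dR}}(S^7\times[0,1])=0$) and correct it by $d\gamma$ with a cutoff interpolating between primitives $\beta_0,\beta_1$ of the closed endpoint discrepancies $\alpha_0,\alpha_1$. This is slightly more work but actually covers the one step the paper's reduction glosses over, so your argument is, if anything, the more complete of the two. One small remark: your appeal to simple connectedness of $X$ to identify free with based homotopy classes is the precise hypothesis needed here (the paper invokes only connectedness at this point, which on its own would not suffice for a general target); and note that under pullback along $\Sigma\times\{0,1\}\hookrightarrow \Sigma\times[0,1]$ the $ds$-components die in any case, so the vanishing of $\rho'$ at the endpoints is needed only for the sitting-instants requirement, not for the boundary-value computation itself.
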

\begin{proof}
  Since homotopy classes of continuous
  functions
  between smooth manifolds
  are given by smooth homotopy classes of smooth functions
  (e.g. \cite[Cor. 17.8.1]{BT82}) it follows that
  already smooth homotopy classes of ungauged
  sigma-model fields are in bijection to $\pi_7(X)$
  (since the target spacetime $X$ is assumed to be connected
  there is no dependence on a basepoint).
  Hence
  it only remains to show that,
  for any extended sigma-model field
  $\widetilde f$, there exists at least one
  gauging $\widetilde H_3$ \eqref{GaugedExtendedSigmaModelFields}
  and that, for any two such gaugings
  $\big( \widetilde f, (\widetilde H_3)_0\big)$ and
  $\big( \widetilde f, (\widetilde H_3)_1\big)$
  of the same extended sigma-model field
  $\widetilde f$,
  there exists a gauged homotopy \eqref{GaugedHomotopy}
   \vspace{-2mm}
  $$
    \xymatrix{
      \big( \widetilde f, (\widetilde H_3)_0 \big)
      \ar@{=>}[rr]^{
        \left(
          \widetilde \eta, (\widetilde H_3)_{[0,1]}
        \right)
      }
      &&
      \big( \widetilde f, (\widetilde H_3)_1 \big)
    }
  $$

   \vspace{-2mm}
\noindent
  between them.
  {For the existence of the gauging $\widetilde{H}_3$ for a given $\tilde{f}$, we only
  need to notice that because $H^4_{\mathrm{dR}}(S^7)\cong H^4(S^7;\mathbb{R})=0$,
  we have $f^*[ G_4 - \tfrac{1}{4} p_1(\nabla)]=0$ and so there exists $\widetilde{H}_3\in \Omega_{\mathrm{dR}}^3\big( S^7\big)$ such that $d\widetilde{H}_3=f^*\left(G_4 - \tfrac{1}{4} p_1(\nabla)\right)$. Similarly, given two gaugings $(\widetilde H_3)_0$ and $(\widetilde H_3)_1$ of $\tilde{f}$,
  }
  since $H^3_{\mathrm{dR}}\big( S^7\big) = 0$
  and $(\widetilde H_3)_1 - (\widetilde H_3)_0 \in \Omega_{\mathrm{dR}}^3\big( S^7\big)$
  is closed by assumption, there exists
  $$
    \alpha \in \Omega_{\mathrm{dR}}^2\big( S^7 \big)
    \phantom{AA}
    \mbox{\rm such that}
    \phantom{AA}
    d \alpha = (\widetilde H_3)_1 - (\widetilde H_3)_0
    \,.
  $$
  Thus
  $$
    \Big(
      \widetilde \eta : (x,s) \longmapsto \widetilde f(x)
      \,,
      \;\,
      (\widetilde H_3)_{[0,1]}
      \;:=\;
      (\widetilde H_3)_1
      +
      (s-1) \cdot d \alpha
      +
      (d s) \wedge \alpha
    \Big)
  $$
  constitutes a homotopy as required.
\end{proof}

We now consider the 6d Hopf-WZ term in its various incarnations,
surveyed in \hyperlink{TableA}{Table A}.

\medskip
\medskip
\hypertarget{TableA}{}
\hspace{-1cm}
\begin{tabular}{ll}
\scalebox{.9}{
\begin{tabular}{|c|l|c|}
  \hline
  $
    \xymatrix{
      \mathrm{Maps}_{\mathrm{smth}}^{\mathrm{ggd}}
      \big(
        \Sigma^6,
        X^{11}
      \big)
      \ar[r]^-{
        S
      }
      &
      \mathbb{R}
    }
  $
  &
  \begingroup
\setlength{\tabcolsep}{6pt} 
\renewcommand{\arraystretch}{1} 
  \begin{tabular}{l}
    Hopf-WZ action functional
    \\
    on worldvolume
    \\
    $\Sigma^6$
  \end{tabular}
  \endgroup
  &
  Def. \ref{LocalDefinitionOfWZWTerm}
  \\
  \hline
  $
    \xymatrix{
      \mathrm{Maps}_{\mathrm{smth}}^{\mathrm{ggd}}
      \big(
        \widehat \Sigma^{\, 7},
        X^{11}
      \big)
      \ar[r]^-{
        \widehat S
      }
      &
      \mathbb{R}
    }
  $
  &
    \begingroup
\setlength{\tabcolsep}{6pt} 
\renewcommand{\arraystretch}{1} 
  \begin{tabular}{l}
    Extended Hopf-WZ functional
    \\
    on coboundary
    \hspace{-.2cm}
    \\
    $
      \partial \widehat \Sigma^{\, 7} = \Sigma^6
    $
  \end{tabular}
  \endgroup
  &
  Def. \ref{DefinitionOfWZWTermByFieldExtensionToCoboundary}
  \\
  \hline
  $
    \xymatrix{
      \mathrm{Maps}_{\mathrm{smth}}^{\mathrm{ggd}}
      \big(
        \widetilde \Sigma^{\, 7},
        X^{11}
      \big)
      \ar[r]^-{
        \widetilde S
      }
      &
      \mathbb{R}
    }
  $
  &
    \begingroup
\setlength{\tabcolsep}{6pt} 
\renewcommand{\arraystretch}{1} 
  \begin{tabular}{l}
    Hopf-WZ anomaly functional
    \\
    on oriented difference
    \\
    $
      \widetilde \Sigma^{\, 7}
      =
      \widehat \Sigma_1^{\, 7}
      -
      \widehat \Sigma_2^{\, 7}
    $
  \end{tabular}
  \endgroup
  &
  Def. \ref{TheAnomalyFunctional}
  \\
  \hline
\end{tabular}
}
&
\hspace{-.3cm}
\begin{minipage}[l]{6.1cm}
  {\bf \footnotesize Table A -- Incarnations of the Hopf-WZ term.}
  {\footnotesize
    The 6d Hopf-WZ term functional $S := S^{\rm M5}_{\mathrm{WZ}}$
    is a priori defined on gauged sigma-model fields on $\Sigma^6$.
    Its global definition involves an extension $\widehat S$ to
    extended fields
    on a coboundary $\widehat \Sigma^{\, 7}$.
    The difference of any two extensions is
    the anomaly functional $\widetilde S$
    on fields on the oriented difference
    $\widetilde \Sigma^{\, 7}= \widehat \Sigma^{\, 7}_1 - \widehat \Sigma^{\, 7}_2$.
  }
\end{minipage}
\end{tabular}

\medskip
\begin{defn}[6d Hopf-WZ term for small sigma-model fields]
\label{LocalDefinitionOfWZWTerm}
In the setting of Def. \ref{BasicSetup}, 
let $U \subset X^{8}$ be a chart (a contractible open subset).
For $\Sigma^6$ any closed orientable 6-manifold, write
$
  \mathrm{Maps}^{\mathrm{ggd}}_{\mathrm{smth}}
  \big(
    \Sigma^6,
    U
  \big)
  \;\subset\;
  \mathrm{Maps}^{\mathrm{ggd}}_{\mathrm{smth}}
  \big(
    \Sigma^6,
    X
  \big)
$
for the subset of those higher gauged sigma-model fields \eqref{GaugeableFunctions}
which factor through $U \subset X$ (the ``$U$-small sigma-model fields'').
{As the de Rham cohomology of $U$ is trivial in positive degree,
we may choose local potentials
 $C^{\, U}_3\in \Omega_{\mathrm{dR}}^3(U)$ for
 $\iota_U^\ast (G_4+\frac{1}{4}p_1(\nabla))$  and
 $2C^{\, U}_6 \in \Omega_{\mathrm{dR}}^6(U)$ for
 $\iota_U^\ast 2G_7 + C^{\, U}_3 \wedge \iota_U^\ast (G_4- \tfrac{1}{4}p_1(\nabla) \big)$.}
 \vspace{-2mm}
\begin{equation}
  \label{LocalPotentials}
  \hspace{-7cm}
  \xymatrix{
    &&
    U_{\phantom{A}}
    \ar@{^{(}->}[d]^{ \iota_U }
    &
    \mathrlap{\footnotesize
      \begin{array}{lcl}
        d\, C^{\, U}_3
        & = &
        \iota_U^\ast \big( G_4 + \tfrac{1}{4}p_1(\nabla) \big)
        \\
        d\,2C^{\, U}_6
        & = &
        \iota_U^\ast 2G_7
        +
        C^{\, U}_3 \wedge \iota_U^\ast \big(G_4{- \tfrac{1}{4}p_1(\nabla) \big)}
      \end{array}
    }
    \\
    \Sigma^6
    \ar[rr]_-{
      f
    }
    \ar@{-->}[urr]^-{ f_U }
    &&
    X^{8}
    &
    \mathrlap{\footnotesize
      \begin{array}{clc}
        d\,G_4 &=  \quad 0
        \\
        d\,2G_7 &= - G_4 \wedge G_4 {+ \big( \tfrac{1}{4}p_1(\nabla) \big)
          \wedge
          \big( \tfrac{1}{4}p_1(\nabla) \big)}
      \end{array}
    }
  }
\end{equation}
 Then the \emph{M5 6d Wess-Zumino term action functional} on these small fields is the function
\begin{equation}
  \label{Local6dWZWFunctional}
  \xymatrix@R=-6pt{
    \mathrm{Maps}^{\mathrm{ggd}}_{\mathrm{smth}}
    \big(
      \Sigma^6,
      U
    \big)
    \ar[rr]^-{ S^{\mathrm{M5}}_{\mathrm{WZ}} }
    &&
    \mathbb{R}
    \\
    \big(
      f_{U}, H_3
    \big)
    \ar@{}[rr]|-{\longmapsto}
    &&
    S^{\rm M5}_{\mathrm{WZ}}
    \big(
       {f_U},
       H_3
    \big)
    \;:=\;
    \tfrac{1}{2}
    \underset{\Sigma^6}{\bigintsss}
    \big(
      - H_3 \wedge f_U^\ast C_3^{\, U}
      +
      f_U^\ast 2C_6^{\, U}
    \big).
    }
\end{equation}
\end{defn}
{
\begin{lemma}[Independence of choices]
The functional $S^{\rm M5}_{\mathrm{WZ}}
    \big(f_U,
       H_3
    \big)$
    \eqref{Local6dWZWFunctional}
    is indeed well defined,
    in that it does not depend on the choice
    of the local potentials $C^{\, U}_3$ and $C^{\, U}_6$ \eqref{LocalPotentials}.
\end{lemma}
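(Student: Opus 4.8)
The plan is to verify invariance separately under the two independent kinds of freedom in the choice of potentials: first the freedom in $2C_6^U$ with $C_3^U$ held fixed, and then the freedom in $C_3^U$ (carrying along a compensating change of $2C_6^U$ forced by its defining equation). Since any two admissible pairs $\big(C_3^U, 2C_6^U\big)$ and $\big((C_3^U)', (2C_6^U)'\big)$ can be connected by composing one change of each type, establishing invariance under each type separately will suffice. Throughout I use that $\Sigma^6$ is closed (Def.~\ref{BasicSetup}(ii)(a)), that $U$ is contractible so that $H^k_{\mathrm{dR}}(U)=0$ for $k>0$, and that $d\big(G_4-\tfrac14 p_1(\nabla)\big)=0$, since $dG_4=0$ and the Pontrjagin form is closed.

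For the first type, fix $C_3^U$ and let $(2C_6^U)'=2C_6^U+\gamma_6$ be a second potential. Both satisfy the \emph{same} equation $d(2C_6^U)=\iota_U^\ast 2G_7 + C_3^U\wedge\iota_U^\ast\big(G_4-\tfrac14 p_1(\nabla)\big)$, so $\gamma_6\in\Omega^6(U)$ is closed, hence $\gamma_6=d\gamma_5$ for some $\gamma_5\in\Omega^5(U)$ by contractibility. The induced change in the functional is $\tfrac12\int_{\Sigma^6} f_U^\ast\gamma_6 = \tfrac12\int_{\Sigma^6} d\big(f_U^\ast\gamma_5\big)$, which vanishes by Stokes' theorem because $\Sigma^6$ has no boundary.

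For the second type, let $(C_3^U)'=C_3^U+\delta$. Since both are potentials for the same closed $4$-form, $\delta\in\Omega^3(U)$ is closed, so $\delta=d\lambda$ with $\lambda\in\Omega^2(U)$. Using $d\big(G_4-\tfrac14 p_1(\nabla)\big)=0$ one checks directly that $(2C_6^U)':=2C_6^U+\lambda\wedge\iota_U^\ast\big(G_4-\tfrac14 p_1(\nabla)\big)$ is an admissible potential for $(C_3^U)'$. The resulting change in $2\,S^{\mathrm{M5}}_{\mathrm{WZ}}$ is then
\[
  \int_{\Sigma^6}\Big( -\,H_3\wedge f_U^\ast\,d\lambda \;+\; f_U^\ast\big(\lambda\wedge\iota_U^\ast(G_4-\tfrac14 p_1(\nabla))\big)\Big).
\]
In the first summand I integrate by parts via $d\big(H_3\wedge f_U^\ast\lambda\big)=dH_3\wedge f_U^\ast\lambda - H_3\wedge d\big(f_U^\ast\lambda\big)$ and the gauging condition $dH_3=f^\ast\big(G_4-\tfrac14 p_1(\nabla)\big)$ of \eqref{GaugedExtendedSigmaModelFields}; the exact term integrates to zero over the closed $\Sigma^6$, leaving $-\int_{\Sigma^6} f_U^\ast\iota_U^\ast\big(G_4-\tfrac14 p_1(\nabla)\big)\wedge f_U^\ast\lambda$. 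Because a $4$-form and a $2$-form commute under the wedge product, this is exactly the negative of the second summand, so the two cancel and the functional is unchanged.

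The only real subtlety — and the reason the two summands of $S^{\mathrm{M5}}_{\mathrm{WZ}}$ cannot be shown invariant in isolation — is that changing $C_3^U$ forces a compensating change of $2C_6^U$ through its defining equation. The cancellation succeeds precisely because the $4$-form wedged against $C_3^U$ in that equation is the very same $G_4-\tfrac14 p_1(\nabla)$ that governs $dH_3$; integration by parts against the gauging condition therefore reproduces exactly the compensating $C_6^U$-contribution. This matching is the heart of the argument and where care is needed.
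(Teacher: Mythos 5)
Your proof is correct and follows essentially the same route as the paper's: the paper writes a general change of potentials as $(d\alpha_2^U,\ \alpha_2^U\wedge\iota_U^\ast(G_4-\tfrac14 p_1(\nabla))+d\alpha_5^U)$ and cancels everything in one Stokes computation using $dH_3=f^\ast(G_4-\tfrac14 p_1(\nabla))$, which is exactly your two types combined. Your separation into the two independent freedoms is only an organizational difference; the key identities (contractibility of $U$, the gauging condition, closedness of $\Sigma^6$) are used identically.
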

\begin{proof}
A different choice of local potentials is of the form
$
  (C_3^{\, U}+\alpha_3^U, 2C_6^{\, U}+2\alpha^U_6)
$,
with differentials
$
  d\alpha_U^3=0
$
and
$
  d 2\alpha_6^U
    =
  \alpha^U_3 \wedge \iota_U^\ast (G_4- \tfrac{1}{4}p_1(\nabla) \big)
$.
As the local chart $U$ is contractible, this implies
$\alpha_3^U = d\alpha_2^U$
and
$
  2\alpha_6^U
  =
  \alpha^U_2
    \wedge
  \iota_U^\ast
  \big(
    G_4- \tfrac{1}{4}p_1(\nabla)
  \big)
  +
  d\alpha_5^U
$.
Therefore, we have
\begin{align*}
  &
  \int_{\Sigma^6}
    \Big(
      -
      H_3
       \wedge
      f_U^\ast
      \big( C_3^{\, U} + \alpha_3^U \big)
      +
      2 f_U^\ast
      \big(
        C_6^{\, U} + \alpha_6^U
      \big)
    \Big)
    -
    \int_{\Sigma^6}
    \Big(
      - H_3 \wedge f_U^\ast \big( C_3^{\, U} \big)
      +
      f_U^\ast \big( 2 C_6^{\, U} \big)
    \Big)
    \\
    & =
    \int_{\Sigma^6}
      - H_3 \wedge d f_U^\ast\big( \alpha_2^U \big)
      +
      f_U^\ast
      \Big(
        \alpha^U_2
          \wedge
        \iota_U^\ast
        \big(
          G_4 - \tfrac{1}{4}p_1(\nabla)
        \big)
      \Big)
      +
      d f_U^*
      \big(
        \alpha_5^U
      \big)
      \\
      &=
      \int_{\Sigma^6}
      -
      H_3 \wedge d f_U^\ast \big( \alpha_2^U \big)
      +
      f_U^\ast \big( \alpha^U_2 \big)
        \wedge
      f^\ast
      \big(
        G_4- \tfrac{1}{4}p_1(\nabla)
      \big)
      +
      df_U^*
      \big(
        \alpha_5^U
      \big)
      \\
      &=
      \int_{\Sigma^6}
      - H_3 \wedge d f_U^\ast \big( \alpha_2^U \big)
      +
      f_U^\ast \big(\alpha^U_2 \big) \wedge dH_3
      +
      df_U^*\big( \alpha_5^U \big)
      \\
      &=
     \int_{\Sigma^6}
     d
     \Big(
     H_3 \wedge
     f_U^\ast \big( \alpha_2^U \big)
     +
     f_U^*\big(\alpha_5^U\big)
     \Big)
     =
     0.
\end{align*}

\vspace{-5mm}
\end{proof}
}

Now we globalize this definition,
following the well-known procedure originally introduced in the 2-dimensional case in \cite{Witten83}.

\begin{defn}[Global 6d Hopf-Wess-Zumino term via extended worldvolumes]
  \label{DefinitionOfWZWTermByFieldExtensionToCoboundary}
  In the situation of Def. \ref{BasicSetup},  for $\Sigma^6$ a given worldvolume,
  let $\widehat \Sigma^{\, 7}$ be a compact oriented smooth
  collared cobounding 7-manifold\footnote{This always exists, since the oriented cobordism ring in
  dimension 6 is trivial and by the collar neighbourhood theorem.} according to \eqref{Collar}
   \begin{equation}
    \label{ChosenCoboundary}
    \Sigma^6 := \partial \widehat \Sigma^{\, 7}\;.
  \end{equation}

   \vspace{-2mm}
\noindent
    Then we say that the corresponding \emph{extended action functional} for the
    6d Hopf-WZ-term on the closed manifold $\Sigma^6$ is the function
     \vspace{-2mm}
  \begin{equation}
    \label{TheGlobalActionFunctional}
    \xymatrix@R=-6pt{
      \mathrm{Maps}^{\mathrm{ggd}}_{\mathrm{smth}}
      \big(\,
        \widehat \Sigma^{\, 7},
        X
      \big)
      \ar[rr]^-{ \widehat S^{\; \rm M5}_{\mathrm{WZ}} }
      &&
      \mathbb{R}
      \\
      \big(
        \widehat f, \, \widehat H_3
      \big)
      \ar@{}[rr]|-{\longmapsto}
      &&
      \widehat S^{\; \rm M5}_{\mathrm{WZ}}
      \big(
        \widehat f, \widehat H_3
      \big)
      \;:=\;
      \tfrac{1}{2}
      \underset
      {\widehat \Sigma^{\, 7}}
      {\bigintsss}
      \Big(
        \widehat H_3
          \wedge
        {\widehat f}^\ast
        \big(
          G_4
          +
          \tfrac{1}{4}p_1(\nabla)
        \big)
        +
        \widehat f^\ast 2G_7
      \Big)
    }
  \end{equation}

   \vspace{-2mm}
\noindent
  on the set of extended gauged sigma-model fields
  \eqref{GaugedExtendedSigmaModelFields}.
  (For flat backgrounds, $\nabla = 0$, this reduces to
  the  \eqref{IntroductionWZTerm}, see Remark \ref{ShiftedFluxQuantizationCorrectionsToTheHopfWZTerm}.)
\end{defn}

\begin{lemma}[Global Hopf-WZ-term restricts to local Hopf-WZ-term]
\label{LocalWZWTermViaExtendedActionFunctional}
In the situation of Def. \ref{LocalDefinitionOfWZWTerm},
consider a worldvolume $\Sigma^6$.
Then, for every choice of extended worldvolume $\widehat \Sigma^{\, 7}$
\eqref{ChosenCoboundary}
the corresponding \emph{extended action functional}
$\widehat S$ (Def. \ref{DefinitionOfWZWTermByFieldExtensionToCoboundary})
coincides, for any chart $U \subset X$,
on $U$-small extended sigma-model fields
$\widehat f = \iota_U \circ \widehat f_U$ \eqref{LocalPotentials}
with the local action functional $S$ (Def. \ref{LocalDefinitionOfWZWTerm})
evaluated on the boundary values $f := \widehat f\;\vert_{\Sigma^6}$ of the extended fields:
  \vspace{-2mm}
$$
  \raisebox{20pt}{
  \xymatrix@C=3em{
    \mathrm{Maps}_{\mathrm{smth}}^{\mathrm{ggd}}
    \big(\,
      \widehat \Sigma^{\, 7},
      U
    \big)
    \ar[d]_{ (-)\vert_{\partial \widehat \Sigma^{\, 7}} }
    \ar[rr]^-{
      \widehat S^{\; \rm M5}_{\mathrm{WZ}}
    }
    &&
    \mathbb{R}
    \\
    \mathrm{Maps}_{\mathrm{smth}}^{\mathrm{ggd}}
    \big(
      \Sigma^6,
      U
    \big)
    \ar[urr]_-{ S^{\rm M5}_{\mathrm{WZ}} }
  }
  }
  \phantom{AAAAAAA}
  \widehat S^{\; \rm M5}_{\mathrm{WZ}}\big( \widehat f, \widehat H_3\big)
  \;=\;
  S^{\rm M5}_{\mathrm{WZ}}
  \Big(
    f := {\widehat f}\;\vert_{\partial \Sigma^7},
    \;
    H_3 := \big( \widehat H_3 \big)\vert_{\partial \widehat \Sigma^{\, 7}}
  \Big)
  \,.
$$
\end{lemma}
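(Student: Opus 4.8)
The plan is to apply Stokes' theorem on the compact oriented manifold-with-boundary $\widehat\Sigma^7$ with $\partial\widehat\Sigma^7=\Sigma^6$, by exhibiting the integrand of the extended functional $\widehat S$ as an exact form $d\beta$ whose chosen primitive $\beta$ restricts on the boundary to exactly the integrand of the local functional $S$. The natural candidate is the ``off-boundary'' version of the local integrand, namely the $6$-form
\[
\beta \;:=\; -\,\widehat H_3 \wedge \widehat f_U^\ast C_3^U \;+\; \widehat f_U^\ast 2 C_6^U \;\in\; \Omega^6\big(\widehat\Sigma^7\big),
\]
where $\widehat f = \iota_U\circ\widehat f_U$ is the factorization of the $U$-small extended field through the chart $U$.

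First I would use the hypothesis $p_1(\nabla)_{\vert U}=0$ to collapse all half-integral shifts. Since $\widehat f$ factors through $U$ we have $\widehat f^\ast p_1(\nabla)=0$, so on these fields the extended integrand's $\widehat f^\ast(G_4+\tfrac14 p_1(\nabla))$, the gauging constraint $d\widehat H_3=\widehat f^\ast(G_4-\tfrac14 p_1(\nabla))$, and the defining relation $d\,2C_6^U=\iota_U^\ast 2G_7+C_3^U\wedge\iota_U^\ast(G_4-\tfrac14 p_1(\nabla))$ all reduce to their counterparts with $G_4$ alone. Writing $A:=\widehat f_U^\ast C_3^U$, this gives the three identifications $d\widehat H_3=\widehat f^\ast G_4$, $dA=\widehat f_U^\ast\iota_U^\ast G_4=\widehat f^\ast G_4$, and $\widehat f_U^\ast d\,2C_6^U=\widehat f^\ast 2G_7+A\wedge\widehat f^\ast G_4$.

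Next I would compute $d\beta$ by the Leibniz rule. The first summand contributes $-d\widehat H_3\wedge A+\widehat H_3\wedge dA$ (the sign on the second term coming from $\widehat H_3$ having odd degree), and the second contributes $\widehat f^\ast 2G_7+A\wedge\widehat f^\ast G_4$. Substituting $d\widehat H_3=dA=\widehat f^\ast G_4$ from the previous step, the term $-dA\wedge A$ cancels the term $A\wedge dA$ (these are equal because $A$ has odd degree, so $dA\wedge A=A\wedge dA$), leaving
\[
d\beta \;=\; \widehat H_3\wedge \widehat f^\ast G_4 + \widehat f^\ast 2G_7,
\]
which is precisely the integrand defining $\widehat S$ (after the same $p_1\vert_U=0$ simplification). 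Since $\beta$ manifestly restricts on $\Sigma^6=\partial\widehat\Sigma^7$ to $-H_3\wedge f_U^\ast C_3^U+f_U^\ast 2C_6^U$ with $H_3:=\widehat H_3\vert_{\Sigma^6}$ and $f_U:=\widehat f_U\vert_{\Sigma^6}$, which is precisely the integrand defining $S$, Stokes' theorem gives $\widehat S=\tfrac12\int_{\widehat\Sigma^7}d\beta=\tfrac12\int_{\Sigma^6}\beta\vert_{\Sigma^6}=S$.

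I expect the only delicate points to be the sign bookkeeping in the cancellation of the $A\wedge dA$ terms, and the observation that the hypothesis $p_1(\nabla)_{\vert U}=0$ is exactly what reconciles the three distinct appearances of $G_4\pm\tfrac14 p_1(\nabla)$ (in the extended action, in the gauging constraint, and in the primitive $2C_6^U$); without it these would leave residual $p_1$-terms and $\beta$ would no longer satisfy that $d\beta$ equals the $\widehat S$-integrand. The sitting-instants/collar conditions built into Def.~\ref{BasicSetup} ensure that $\beta$ is smooth up to the boundary and that Stokes' theorem applies cleanly, so no further analytic care is required.
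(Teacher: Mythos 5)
Your proof is correct and is essentially the paper's own argument: the same primitive $\beta = -\,\widehat H_3\wedge \widehat f_U^\ast C_3^U + \widehat f_U^\ast 2C_6^U$, the same Leibniz computation showing $d\beta$ equals the extended integrand \eqref{ComputingLocalExactness}, and the same application of Stokes' theorem. One small remark: the paper carries the $\pm\tfrac14 p_1(\nabla)$ terms through the computation and finds that the two appearances of $\widehat f^\ast\big(G_4-\tfrac14 p_1(\nabla)\big)\wedge \widehat f_U^\ast C_3^U$ cancel on their own, so your closing caveat that $p_1(\nabla)\vert_U=0$ is what reconciles the three shifted forms somewhat overstates its role in the $d\beta$ computation, though invoking the hypothesis as you do is of course harmless.
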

\begin{proof}
Observe, with \eqref{GaugedExtendedSigmaModelFields} and \eqref{LocalPotentials}, that
\begin{equation}
  \label{ComputingLocalExactness}
  \hspace{-1cm}
  \begin{aligned}
    d
    \big(
      -
      \widehat H_3 \wedge \widehat f_U^\ast C_3^{\, U}
      +
      \widehat f_U^\ast 2C_6^{\, U}
    \big)
    &
    \;=\;
      -
      \underset{
        \mathclap{
          \scalebox{.6}{$
          \widehat f^\ast
          \big(
            G_4 - \tfrac{1}{4}p_1(\nabla)
          \big)
          $}
        }
      }
      {
        \underbrace{
          d \widehat H_3
        }
      }
    \;
    \wedge
    \;
    \widehat f_U^\ast C_3^{\, U}
    \;+\;
    \widehat H_3
    \;\wedge\;
    \underset{
      \mathclap{
        \scalebox{.6}{$
          \widehat f^\ast (G_4{+\frac{1}{4}p_1(\nabla)})
        $}
        \;
      }
    }{
      \underbrace{
        f_U^\ast d C^{\, U}_3
      }
    }
    \;\;+\;\;
    \underset{
      \mathclap{
      \;
      \scalebox{.6}{$
        \begin{array}{l}
        \qquad \widehat f^\ast (G_4{-\frac{1}{4}p_1(\nabla)})
        \wedge
        \widehat f_U^\ast C_3^{\, U}
        \\ \qquad +
        \widehat f^\ast 2G_7
        \end{array}
      $}
      }
    }{
      \underbrace{
        \widehat f_U^\ast d 2C_6^{\, U}
      }
    }
    \\
    & =
      \widehat H_3
          \wedge
        {\widehat f}^\ast
        \big(
          G_4
          {+
          \tfrac{1}{4}p_1(\nabla)}
        \big)
    \;+\;
    \widehat f^\ast 2G_7\;.
  \end{aligned}
\end{equation}

  \vspace{-2mm}
\noindent
With this, the claim follows by Stokes' theorem:
  \vspace{-2mm}
\begin{equation}
  \begin{aligned}
    \widehat S^{\; \rm M5}_{\mathrm{WZ}}
    \big(
      \widehat f,\, \widehat H_3
    \big)
    & :=
    \tfrac{1}{2}
    \underset{\widehat \Sigma^7}{\int}
    \Big(
      \widehat H_3
          \wedge
        {\widehat f}^\ast
        \big(
          G_4
          {+
          \tfrac{1}{4}p_1(\nabla)}
        \big)
      +
      \widehat f_U^\ast 2G_7
    \Big)
    \\
    & =
    \tfrac{1}{2}
    \underset{\widehat \Sigma^7}{\int}
    d
    \big(
      -
      \widehat H_3 \wedge \widehat f_U^\ast C_3^{\, U}
      +
      \widehat f_U^\ast 2C_6^{\, U}
    \big)
    \\
    & =
    \tfrac{1}{2}
    \underset{\partial \widehat \Sigma^7}{\int}
    \big(
      -
      \widehat H_3 \wedge \widehat f_U^\ast C_3^{\, U}
      +
      \widehat f_U^\ast 2C_6^{\, U}
    \big)\vert_{\partial \Sigma^7}
    \\
    & =
    \tfrac{1}{2}
    \underset{\Sigma^6}{\int}
    \big(
      -
      H_3 \wedge f_U^\ast C_3^{\, U}
      +
      f_U^\ast 2C_6^{\, U}
    \big)
    \\
    & =:
    S^{\rm M5}_{\mathrm{WZ}}\big( 
    {f_U}
    ,\, H_3\big)
    \,.
  \end{aligned}
\end{equation}

\vspace{-7mm}
\end{proof}

\begin{example}[Coboundaries for $\Sigma^6 = S^3 \times S^3$]
\label{CoboundariesForS3xS3}
In the situation of Def. \ref{BasicSetup}, consider
as worldvolume the product manifold of two 3-spheres (this is considered in \cite[Example 2]{MS}
in the non-commutative setting):
  $
  \Sigma^6
  \;=\;
  S^3 \times S^3
 $.
In this case, there is a canonical choice of cobounding manifold
$\widehat \Sigma^{\, 7}$ \eqref{ChosenCoboundary}
given by the Cartesian product of
the 4-disk $D^4$ (the closed 4-dimensional ball) with the 3-sphere, in either order (as in \cite{Sati13}):
  \vspace{0mm}
\begin{equation}
  \label{CoboundaryForS3xS3}
  \widehat \Sigma^{\, 7}_L
  \;:=\;
  D^4 \times S^3
  \phantom{AA}\mbox{and}\phantom{AA}
  \widehat \Sigma^{\, 7}_R
  \;:=\;
  \big(
    S^3 \times D^4
  \big)^{\mathrm{op}}
  \,.
\end{equation}

  \vspace{-2mm}
\noindent
Here we are equipping each of
  \vspace{-2mm}
$$
  \left.
  \begin{array}{ccc}
    S^3 \times S^3,
    \\
    D^4 \times S^3 & = & \phantom{(\partial}D^4\phantom{)} \times \big( \partial D^4 \big),
    \\
    S^3 \times D^4 & = & \big(\partial D^4\big) \times \phantom{(\partial}D^4\phantom{)},
  \end{array}
  \right\}
  \;\subset\;
    D^4 \times D^4
  \;\subset\;
  \mathbb{R}^8
$$

  \vspace{-1mm}
\noindent
with the orientation induced from the canonical embedding into
$\mathbb{R}^8$, which implies, by the odd-dimensionality of
$S^3$, that the boundary of $S^3 \times D^4$ is
$\big( S^3 \times S^3 \big)^{\mathrm{op}}$ (opposite orientation).
This way, with \eqref{CoboundaryForS3xS3} we indeed have
  \vspace{-1mm}
$$
  \partial
  \widehat \Sigma^{\, 7}_{L,R}
  \;=\;
  \Sigma^6 := S^3 \times S^3
$$

  \vspace{-1mm}
\noindent
as oriented manifolds.
Observe that the union of one of these coboundaries with the orientation reversal of the other is
the 7-sphere (as considered in Lemma \ref{GaugebleSigmaModelMapsOn7Sphere}):
\footnote{Note that a different manipulation treats these, untraditionally, as manifolds with corners \cite{framed}\cite{Sati13}.}
\vspace{0mm}
\begin{equation}
  \begin{aligned}
    \widetilde \Sigma^{\, 7}
    \;:=\;
    \widehat \Sigma^{\, 7}_L
    \cup
    \big(\,
      \widehat \Sigma^{\, 7}_R
    \big)^{\mathrm{op}}
    & =
    D^4 \times \big(\partial D^4\big)
    \;\cup\;
    \big( \partial D^4 \big) \times D^4
    \\
    & =
    \partial
    \big(
      D^4 \times D^4
    \big)
    \\
    & \simeq
    \partial D^8
    \\
    & =
    S^7\;.
  \end{aligned}
  \phantom{AAAAAA}
  \raisebox{26pt}{
  \xymatrix{
    S^3 \times S^3
    \ar@{}[dr]|-{ \mbox{\tiny (po)} }
    \ar[d]
    \ar[r]
    &
    D^4 \times S^3
    \ar[d]
    \\
    S^3 \times D^4
    \ar[r] & S^7
  }
  }
\end{equation}
\end{example}

While Def. \ref{DefinitionOfWZWTermByFieldExtensionToCoboundary}
gives global meaning to the local Hopf-WZ term (Def. \ref{LocalDefinitionOfWZWTerm}),
by Lemma
\ref{LocalWZWTermViaExtendedActionFunctional},
this potentially comes
at the cost that the global definition depends on the choice of coboundary \eqref{ChosenCoboundary}. The following definition measures this
potential dependency:

\begin{defn}[Hopf-WZ anomaly functional/Page charge]
\label{TheAnomalyFunctional}
In the situation of Def. \ref{BasicSetup}, with given worldvolume $\Sigma^6$,
consider in Def. \ref{DefinitionOfWZWTermByFieldExtensionToCoboundary}
two choices $\widehat \Sigma_{L,R}^7$ of
collared cobounding extended worldvolumes \eqref{Collar}
$
  \partial \Sigma^7_{L,R}
  =
  \Sigma^6
$.
This makes their oriented difference
\eqref{DifferenceOriented}
a smooth closed 7-manifold
$
 \widetilde \Sigma^{\, 7}
  \;:=\;
  \widehat \Sigma^{\, 7}_L
  -
  \widehat \Sigma^{\, 7}_R
  \;:=\;
  \widehat \Sigma^{\, 7}_L
  \cup_{\Sigma^6}
  \big(\,
    \widehat \Sigma^{\, 7}_R
  \big)^{\mathrm{op}}
  \,.
$
Then for
$$
  \xymatrix@C=3em{
    \widehat \Sigma^{\, 7}_L
    \ar[drr]^-{ \widehat f_L }
    && &
    d \big( \widehat H_3 \big)_L = \widehat f_L^\ast G_4
    \ar@{|->}[d]^-{ \iota_{\partial L}^\ast }
    \\
    \Sigma^6_{\phantom{A}}
    \ar@{^{(}->}[u]^-{\iota_{\partial L}}
    \ar@{_{(}->}[d]_-{ \iota_{\partial R} }
    \ar[rr]|-{\, f  \,}
    &&
    X
    &
    d \, H_3 \,  = f^\ast G_4
    \\
    \widehat \Sigma^{\, 7}_R
    \ar[urr]_-{ \widehat f_R }
    && &
    d \big( \widehat H_3 \big)_R = \widehat f_R^\ast G_4
    \ar@{|->}[u]^-{ \iota_{\partial R}^\ast }
  }
$$
any pair of gauged extended sigma-model fields \eqref{GaugedExtendedSigmaModelFields},
extending the same ordinary sigma-model field $f$ over the
two choices of coboundaries, respectively,
we obtain a gauged extended sigma-model field
$\big( \widetilde f, \widetilde H_3 \big)$
on the closed 7-manifold $\widetilde \Sigma^7$ \eqref{DifferenceOriented}
(which is smooth by the assumption of sitting instants in \eqref{GaugedExtendedSigmaModelFields}):
\begin{equation}
  \label{GluingOfTwoCompatibleExtendedSigmaModelFields}
  \xymatrix@C=3em{
    \widehat \Sigma^{\, 7}_{L_{\phantom{A}}}
    \ar@{^{(}->}[d]^-{\iota_L}
    \ar[drr]^-{ \widehat f_L }
    && &
    d \big( \widehat H_3 \big)_L = \widehat f_L^\ast G_4
    \ar@{<-|}[d]^-{ \iota_L^\ast }
    \\
    \widetilde \Sigma^7
    \ar[rr]|-{\, \widetilde f \,}
    &&
    X
    &
    d \, \widetilde H_3 \,  = \widetilde f^\ast G_4
    \\
    \widehat \Sigma^{\, 7}_R
    \ar[urr]_-{ \widehat f_R }
    \ar@{_{(}->}[u]_-{ \iota_R }
    && &
    d \big( \widehat H_3 \big)_R = \widehat f_R^\ast G_4
    \ar@{<-|}[u]^-{ \iota_R^\ast }
  }
\end{equation}
In terms of this, the difference between the two extended action functionals
(Def. \ref{DefinitionOfWZWTermByFieldExtensionToCoboundary})
corresponding to the two choices of coboundaries
may be expressed as a single integral over $\widetilde \Sigma^7$:
\begin{equation}
  \label{DifferenceBetweenAnyTwoExtendedActionFunctionals}
  \begin{aligned}
  \widetilde S
  \big(
    \widetilde f,
    \,
    \widetilde H_3
  \big)
  & :=
  \widehat S
  \big(
    \widehat f_L, (\widehat H_3)_L
  \big)
  -
  \widehat S
  \big(
    \widehat f_R, (\widehat H_3)_R
  \big)
  \\
  & =
  \tfrac{1}{2}
  \underset{ \widetilde \Sigma^7 }{\int}
  \big(
    \widetilde H_3
      \wedge
    \widetilde f^\ast
    \big(
      G_4
      +
      \tfrac{1}{4}p_1(\nabla)
    \big)
    +
    \widetilde f^\ast 2G_7
  \big).
  \end{aligned}
\end{equation}
We call expression \eqref{DifferenceBetweenAnyTwoExtendedActionFunctionals}
the \emph{anomaly functional} of the 6d Hopf-Wess-Zumino term.
In supergravity this expressin is
also known as the {\it Page charge}
(traditionally disregarding the shift by $\tfrac{1}{4}p_1$, see
Remark \ref{ShiftedFluxQuantizationCorrectionsToTheHopfWZTerm}).
\end{defn}

\begin{lemma}[Hopf-WZ anomaly functional is homotopy invariant]
  \label{AnomalyFunctionalIsHomotopyInvariant}
  In the situation of Def. \ref{BasicSetup},
  let $\Sigma := \widetilde \Sigma^7$ be a closed 7-manifold.
  Then the
  Hopf-WZ anomaly functional \eqref{DifferenceBetweenAnyTwoExtendedActionFunctionals}
  is well-defined on the set \eqref{GaugeableFunctions}
  of homotopy classes of
  higher gauged sigma-model fields:
    \vspace{-2mm}
  \begin{equation}
    \label{GWIFunction}
    \xymatrix@R=-8pt{
      \pi_0
      \Big(
        \mathrm{Maps}_{\mathrm{smth}}^{\mathrm{ggd}}
        \big(\,
          \widetilde \Sigma^7
          ,
          X
        \big)
      \Big)
      \ar[rr]^{
        \widetilde S
      }
      &&
      \mathbb{R}
      \\
      \big[
        \widetilde f, \widetilde H_3
      \big]
       \ar@{}[rr]|-{\quad \longmapsto}
      &&
      \tfrac{1}{2}
      \underset{\widehat \Sigma^7 }{\bigintsss}
      \left(
        \widetilde H_3
          \wedge
        \widetilde f^\ast
        \big(
          G_4
          +
          \tfrac{1}{4}
          p_1(\nabla)
        \big)
        +
        \widetilde f^\ast 2G_7
      \right)
    }
  \end{equation}

  \vspace{-3mm}
\noindent  in that the integral on the right is independent of the choice of
  representative $(\widetilde f, \widetilde H_3)$ in its homotopy class.
\end{lemma}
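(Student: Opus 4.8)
The plan is to reduce the claim to a single application of Stokes' theorem on the cylinder $\widetilde \Sigma^7 \times [0,1]$, the cancellation in the integrand being governed by the twisted structure equations \eqref{SpacetimeFormAssumption}. Concretely, given a gauged homotopy $\big(\eta, (\widetilde H_3)_{[0,1]}\big)$ as in Def. \ref{BasicSetup} (v) between two representatives $\big(\widetilde f_0, (\widetilde H_3)_0\big)$ and $\big(\widetilde f_1, (\widetilde H_3)_1\big)$ of the same homotopy class, I would introduce on the $8$-manifold $\widetilde\Sigma^7 \times [0,1]$ the interpolating $7$-form
\[
  \Omega
  \;:=\;
  (\widetilde H_3)_{[0,1]}
    \wedge
  \eta^\ast\big( G_4 + \tfrac{1}{4} p_1(\nabla)\big)
  \;+\;
  \eta^\ast 2 G_7
  \,,
\]
whose pullback along the two end-inclusions $(\mathrm{id},0)$ and $(\mathrm{id},1)$ recovers, by the boundary conditions in \eqref{DataHomotopyGauged}, exactly the integrands of $2\widetilde S\big(\widetilde f_0, (\widetilde H_3)_0\big)$ and $2\widetilde S\big(\widetilde f_1, (\widetilde H_3)_1\big)$ respectively, the restriction killing any $dt$-component of $(\widetilde H_3)_{[0,1]}$.

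The key step is to show that $\Omega$ is closed. Using the Leibniz rule (with the sign from $(\widetilde H_3)_{[0,1]}$ being a $3$-form), the gauging condition $d (\widetilde H_3)_{[0,1]} = \eta^\ast\big(G_4 - \tfrac{1}{4} p_1(\nabla)\big)$, the closedness $d G_4 = 0$ together with $d\,p_1(\nabla) = 0$ (so that $d\big(G_4 + \tfrac{1}{4} p_1(\nabla)\big) = 0$), and finally the structure equation $d\,2G_7 = - G_4 \wedge G_4 + \big(\tfrac{1}{4} p_1(\nabla)\big) \wedge \big(\tfrac{1}{4} p_1(\nabla)\big)$ from \eqref{SpacetimeFormAssumption}, one finds
\[
  d\Omega
  \;=\;
  \eta^\ast\Big[
    \big(G_4 - \tfrac{1}{4} p_1(\nabla)\big)\wedge\big(G_4 + \tfrac{1}{4} p_1(\nabla)\big)
    -
    G_4 \wedge G_4
    +
    \big(\tfrac{1}{4} p_1(\nabla)\big)\wedge\big(\tfrac{1}{4} p_1(\nabla)\big)
  \Big]
  \;=\;
  0\,,
\]
the vanishing being the algebraic identity that, since $G_4$ and $p_1(\nabla)$ are both $4$-forms and hence commute, the cross terms cancel and the product collapses to $G_4 \wedge G_4 - \big(\tfrac{1}{4} p_1(\nabla)\big)\wedge\big(\tfrac{1}{4} p_1(\nabla)\big)$, which is then precisely cancelled by $\eta^\ast d\,2G_7$.

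It then remains to invoke Stokes' theorem. Since $\widetilde\Sigma^7$ is closed (Def. \ref{TheAnomalyFunctional}), the only boundary of $\widetilde\Sigma^7 \times [0,1]$ is $\widetilde\Sigma^7 \times \{1\} - \widetilde\Sigma^7 \times \{0\}$, so
\[
  0
  \;=\;
  \int_{\widetilde\Sigma^7\times[0,1]} d\Omega
  \;=\;
  \int_{\widetilde\Sigma^7\times\{1\}}\Omega
  -
  \int_{\widetilde\Sigma^7\times\{0\}}\Omega
  \;=\;
  2\,\widetilde S\big(\widetilde f_1, (\widetilde H_3)_1\big)
  -
  2\,\widetilde S\big(\widetilde f_0, (\widetilde H_3)_0\big)\,,
\]
which gives the claimed independence of the representative. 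I expect the only genuinely delicate point to be the closedness computation, in particular getting the Leibniz signs right and recognizing that the twist term $\tfrac{1}{4} p_1(\nabla)$ enters the first summand of $\widetilde S$ with the opposite sign from the one appearing in the gauging condition $dH_3 = f^\ast\big(G_4 - \tfrac{1}{4} p_1(\nabla)\big)$, which is exactly what makes the $p_1$-contributions cancel against the right-hand side of \eqref{SpacetimeFormAssumption}; the smoothness of $\Omega$ across the glued interface of $\widetilde\Sigma^7$ is guaranteed by the sitting-instants assumption in \eqref{GaugedExtendedSigmaModelFields}, and everything else is routine.
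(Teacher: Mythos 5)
Your proposal is correct and is essentially the paper's own proof: both arguments integrate the same interpolating $7$-form $(\widetilde H_3)_{[0,1]}\wedge\eta^\ast\big(G_4+\tfrac{1}{4}p_1(\nabla)\big)+\eta^\ast 2G_7$ over the cylinder $\widetilde\Sigma^7\times[0,1]$ and cancel its exterior derivative against the structure equation $d\,2G_7=-G_4\wedge G_4+\big(\tfrac{1}{4}p_1(\nabla)\big)^2$, the only difference being that you establish closedness first and then apply Stokes, whereas the paper writes the difference of functionals as a boundary integral and then applies Stokes in the other direction. The sign bookkeeping and the cancellation of the cross terms in $\big(G_4-\tfrac{1}{4}p_1\big)\wedge\big(G_4+\tfrac{1}{4}p_1\big)$ match the paper's computation exactly.
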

\begin{proof}
  Given a gauge transformation/homotopy \eqref{GaugedHomotopy}
  between two extended gauged sigma-model fields
    \vspace{-2mm}
  $$
    \xymatrix{
      \big( \widetilde f_0, (\widetilde H_3)_0 \big)
      \ar@{=>}[rr]^-{ \left( \widetilde \eta, (\widetilde H_3)_{[0,1]}\right) }
      &&
      \big( \widetilde f_1, (\widetilde H_3)_1 \big)
    }
  $$

    \vspace{-2mm}
\noindent
  we need to show that
  $
    \widetilde S
    \big(
      \big[ \widetilde f_1, (\widetilde H_3)_1 \big]
    \big)
    \;=\;
    \widetilde S
    \big(
      \big[ \widetilde f_0, (\widetilde H_3)_0 \big]
    \big)
      $.
  With the data \eqref{DataHomotopyGauged}
  and using Stokes' theorem
  we directly compute as follows:
  $$
    \begin{aligned}
      \widetilde S
      \Big(
        \big[ \widetilde f_1, (\widetilde H_3)_1 \big]
      \Big)
      -
      \widetilde S
      \Big(
        \big[ \widetilde f_0, (\widetilde H_3)_0 \big]
      \Big)
      & =
      \tfrac{1}{2}
      \underset{
        \partial\big(
          \widetilde \Sigma^7 \times [0,1]
        \big)
      }{
        \int
      }
      \!\!
      \Big(
        \big(
          \widetilde H_3
        \big)_{[0,1]}
        \wedge
        \widetilde \eta^\ast \big( G_4 + \tfrac{1}{4}p_1(\nabla) \big)
        +
        \widetilde \eta^\ast 2G_7
      \Big)
      \\
      & =
      \tfrac{1}{2}
      \underset{
        \widetilde \Sigma^7 \times [0,1]
      }{
        \int
      }
      \!\!
      d
      \Big(
        \big(
          \widetilde H_3
        \big)_{[0,1]}
        \wedge
        \widetilde \eta^\ast \big( G_4 + \tfrac{1}{4}p_1(\nabla) \big)
        +
        \widetilde \eta^\ast 2G_7
      \Big)
      \\
      & =
      \tfrac{1}{2}
      \underset{
        \widetilde \Sigma^7 \times [0,1]
      }{
        \int
      }
            \Big(
        \underset{
          \widetilde \eta^\ast
          \big(
            G_4
            -
            \frac{1}{4}p_1(\nabla)
          \big)
        }
        {
          \underbrace{
            d
            \Big(
            \big(
              \widetilde H_3
            \big)_{[0,1]}
            \Big)
          }
        }
        \wedge
        \widetilde \eta^\ast \big( G_4 + \tfrac{1}{4}p_1(\nabla) \big)
        +
        \widetilde \eta^\ast d 2G_7
      \Big)
      \\
      & =
      \tfrac{1}{2}
      \underset{
        \widetilde \Sigma^7 \times [0,1]
      }{
        \int
      }
      \!\!
      \widetilde \eta^\ast
      \Big(
        \underset{
          = 0
        }{
        \underbrace{
          \big(G_4 - \tfrac{1}{4}p_1(\nabla) \big)
          \wedge
          \big( G_4 + \tfrac{1}{4}p_1(\nabla) \big)
          +
          d 2 G_7
        }
        }
      \Big)
      \\
      & = 0
      \,,
    \end{aligned}
  $$
  where in the last step, under the brace, we used the condition \eqref{SpacetimeFormAssumption}.
\end{proof}

\section{The full M5 Hopf-WZ anomaly is a homotopy Whitehead integral}
\label{InTermsOfCohomotopy}

We first recall from \cite{FSS19b}\cite[\S 5.3]{FSS20b}
how the background C-field $(G_4, 2G_7)$ is a cocycle in twisted rational Cohomotopy; this is Remark \ref{BackgroundFieldsAsRationalCohomotopy} below. Then we prove, in Theorem
\ref{AnomalyFunctionalAsLiftInCohomotopy},
that the Hopf-WZ anomaly functional from \cref{The6dWZWTerm} is equivalently
a lift in rational Cohomotopy through
the equivariant quaternionic Hopf fibration,
hence is in particular a homotopy invariant of both the
gauged sigma-model fields and the background fields in Cohomotopy.
Further below, in \cref{HypothesisHImpliesAnomalyCancellation},
this allows us to identify the anomaly functional
as a twisted/parametrized generalization
of a homotopy Whitehead integral.

\medskip

\noindent {\bf Notions from rational homotopy theory.}
In the following, we make free use of Sullivan model
dgc-algebras in rational homotopy theory (i.e., what in supergravity
are called ``FDA''s \cite{FSS13}\cite{FSS19a});
see \cite{Sullivan77}\cite{BousfieldGugenheim76} for the original accounts,
\cite{Hess06} for introduction,
\cite{GrMo13} for a standard textbook account,
and see \cite{FSS16a}\cite{FSS17} \cite{FSS19a}\cite{FSS20b} for review
streamlined towards our application.
As in these references, for $X$ a simply-connected topological space of finite rational type, we write
$\mathrm{CE}( \mathfrak{l} X)$ for its minimal Sullivan model
differential graded-commutative algebra (dgc-algebra),
indicating that this is the Chevalley-Eilenberg
algebra of the minimal Whitehead $L_\infty$-algebra $\mathfrak{l}X$
corresponding to the loop group of $X$ (see \cite[Prop. 3.64]{FSS20b}).
For making Sullivan models explicit, we display the list
of differential relations on each generator, thereby
declaring what the generators are (see \cite[(99)]{FSS20b}),
as shown in the following examples.

\begin{example}[Quaternionic Hopf fibration]
We denote the minimal relative Sullivan model for the
plain quaternionic Hopf fibration $h_{\mathbb{H}}$
 as follows (see \cite[Lemma 3.18]{FSS19b}):

\vspace{-.3cm}
\begin{equation}
  \label{TheQuaternionicHopfFibration}
  \hspace{-2cm}
  \raisebox{40pt}{
  \xymatrix@R=3pt{
    S^7
    \ar[dddd]^-{
      h_{\mathbb{H}}
    }_-{
      \mbox{
        \tiny
        \color{darkblue}
        \bf
        {\begin{tabular}{c}
          quaternionic
          \\
          Hopf fibration
        \end{tabular}}
      }
    }
    &
    {\phantom{AAAA}}
    &
    \mathclap{\phantom{\vert^{\vert^{\vert}}}}
    \mathrm{CE}\big( \mathfrak{l}S^7 \big)
    \ar@{=}[r]
    \ar@{<-^{)}}[dddd]^{
      \mathrm{CE}(\mathfrak{l} h_{\mathbb{H}} )
    }_-{
      \mbox{
        \tiny
        \color{darkblue}
        \bf
        \begin{tabular}{c}
          minimal relative
          \\
          Sullivan model
          \\
          (``FDA'')
        \end{tabular}
      }
    }
    &
    \big( \!\!\!
      {\begin{array}{lcl}
        d\,\omega_{\, 7} &=& 0
      \end{array}}
    \!\!\!  \big)
    \ar@{<-}[dddd]^-{
      \scalebox{.7}{
        $
        \arraycolsep=1.4pt
        \begin{array}{cc}
          0 & \omega_{\, 7}
          \\
          \mapsup & \mapsup
          \\
          \omega_4 & \omega_{\, 7}
        \end{array}
        $
      }
    }
    \\
    \\
    \\
    \\
    S^4
    &&
    \mathrm{CE}\big( \mathfrak{l}S^4 \big)
    \ar@{=}[r]
    &
    \left(
      \!\!\!
      {\begin{array}{lcl}
        d\,\omega_4 &=& 0
        \\
        d\,\omega_{\, 7} &=& - \omega_4 \wedge \omega_4
      \end{array}}
      \!\!\!
    \right)
    \\
    \mbox{
      \color{darkblue}
      \begingroup
\setlength{\tabcolsep}{3pt} 
\renewcommand{\arraystretch}{.8} 
  \small
  \bf
      \begin{tabular}{c}
        topological
        homotopy theory
      \end{tabular}
\endgroup
    }
    &&
    \mathrlap{
    \mbox{
      \color{darkblue}
      \begingroup
      \setlength{\tabcolsep}{3pt} 
\renewcommand{\arraystretch}{.8} 
  \small
  \bf
      \begin{tabular}{c}
        dgc-algebraic
        rational homotopy theory
      \end{tabular}
      \endgroup
    }
    }
  }
  }
\end{equation}
\end{example}
\begin{example}[Classifying space of Fivebrane-extended Spin-group]
  \label{ClassifyingSpaceOfHigherExtendedQuaternionUnitaryGroup}
The minimal relative Sullivan model for the
homotopy fiber $B \widehat {\mathrm{Sp}(2)}$
of the classifying
map for the Euler 8-class $\rchi_8$ on the classifying space
for the quaternionic unitary group
$\mathrm{Sp}(2)\hookrightarrow \mathrm{Spin}(8)$
(see \cite[\S A]{FSS20a})
is as follows (using \cite[(97)]{FSS19b}\cite[Lem. 4.24]{FSS20b}):

\vspace{-.2cm}
\begin{equation}
  \label{HomotopyFiberSpace}
\hspace{-1.8cm}
  \raisebox{23pt}{
  \xymatrix@C=15pt{
    B \widehat{\mathrm{Sp}(2)}
    \ar[d]|-{
      \mathclap{\phantom{\vert^{\vert^{\vert}}}}
     \scalebox{.6}{$  \mathrm{hofib}(\rchi_8)$}
      \mathclap{\phantom{\vert_{\vert_{\vert}}}}
    }
    \ar[r]^-{\simeq}
    &
    B \widehat{\mathrm{Spin}(5)}
    \ar[d]|-{
      \mathclap{\phantom{\vert^{\vert^{\vert}}}}
      \scalebox{.6}{$
      \mathrm{hofib}
      \big(
        - \tfrac{1}{4}p_2
        +
        \big(
          \tfrac{1}{4}p_1
        \big)^2
      \big)
      $}
      \mathclap{\phantom{\vert_{\vert_{\vert}}}}
    }
    &
    \ar@{}[d]|<{
      \mbox{
        \tiny
        \color{darkblue}
        \bf
        {\begin{tabular}{c}
          homotopy fiber
          \\
          trivializing
          \\
          obstructing 8-class
        \end{tabular}}
      }
    }
    &
    &&
    \left(
    \scalebox{.8}{$
    {\begin{aligned}
      d\, {\color{darkblue}\theta_7} & = \rchi_8
      \\
      d \, \rchi_8 & = 0
      \\
      d \, \tfrac{1}{2} p_1 & = 0
    \end{aligned}}
    $}
    \right)
    \ar@{<-^{)}}[d]_-{
      \mbox{
        \tiny
        \color{darkblue}
        \bf
        \begin{tabular}{c}
          minimal relative
          \\
          Sullivan model
        \end{tabular}
      }
    }
    \ar@{<-}[r]^-{\simeq}
    &
    \left(
    \scalebox{.8}{$
    {\begin{aligned}
      d\, {\color{darkblue}\theta_7}
        & =
      -\tfrac{1}{4}p_2 + \big( \tfrac{1}{4}p_1 \big)^2
      \\
      d\, p_2 & = 0
      \\
      d\, \tfrac{1}{2}p_1 & = 0
    \end{aligned}}
    $}
    \right)
    \ar@{<-^{)}}[d]
    \\
    B \mathrm{Sp}(2)
    \ar@/_1.5pc/[rrr]|-{
      \;\rchi_8\;
    }
    \ar[r]_-{ \simeq }^-{
      \scalebox{.6}{$
        \mathrm{tri}
      $}
    }
    &
    B \mathrm{Spin}(5)
    \ar[rr]^-{
      \scalebox{.6}{$
        -\tfrac{1}{4}p_2
        + \big(\tfrac{1}{4}p_1\big)^2
      $}
    }
    &{\phantom{AA}}&
    K(\mathbb{Z},8)
    &&
    \left(
    \scalebox{.8}{$
    {\begin{aligned}
      d \, \rchi_8 & = 0
      \\
      d \, \tfrac{1}{2} p_1 & = 0
    \end{aligned}}
    $}
    \right)
    \ar@{<-}@/_1.6pc/[rrr]_-{
      \scalebox{.6}{$
        \rchi_8 \;\;\mapsfrom\;\; c_8
      $}
    }
    \ar@{<-}[r]_-{ \simeq }^{ \mathrm{tri}^\ast }
    &
    \left(
    \scalebox{.8}{$
    {\begin{aligned}
      d \, p_2 & = 0
      \\
      d \, \tfrac{1}{2} p_1 & = 0
    \end{aligned}}
    $}
    \right)
    \ar@{<-}[rr]^-{
      \scalebox{.6}{$
        -\tfrac{1}{4}p_2
        + \big(\tfrac{1}{4}p_1\big)^2
        \;\mapsfrom\;
        c_8
      $}
    }
    &&
    \mathrlap{
    \left(
    \,
    \scalebox{.8}{$
    {\begin{aligned}
      d \, c_8 & = 0
    \end{aligned}}
    $}
    \!\!\!\!\!\!\!
    \right)
    .
    }
  }
  }
\end{equation}
\vspace{-.3cm}

\noindent Notice that the  higher extension
$\widehat {\mathrm{Sp}(2)}$ in \eqref{HomotopyFiberSpace}
is a version of the {\it Fivebrane 6-group},
and tangential $\widehat {\mathrm{Sp}(2)}$-structure
trivializing this 8-class is a kind of
{\it Fivebrane structure} according to \cite{SSS09}\cite{SSS12}:
a higher
analog of {\it String structure}
(trivializing $\tfrac{1}{2}p_1$), which itself is
a higher analog of $\mathrm{Spin}$-structure
(trivializating $w_2$).
In higher analogy to how
Spin-structure is the topological condition on target spacetime
needed for anomaly cancellation of the spinning particle \cite{Witten85},
and String-structure is the topological condition
for anomaly cancellation of the (heterotic) string
(i.e. for the Green-Schwarz mechanism, see \cite{SSS12}\cite{FSS20}\cite{FSS20a}),
so Fivebrane structure is meant to be the topological
condition needed for anomaly cancellation of the
(heterotic) five-brane.
That this is the case for $\widehat {\mathrm{Sp}(2)}$-structure,
as concerns the Hopf-WZ term of the M5-brane,
is brought out by our main
Theorems \ref{AnomalyFunctionalAsLiftInCohomotopy} and \ref{AnomalyIsIntegral} below,
see Remark \ref{TheHopfWZTermOverBFivebrane} below.
\end{example}

\noindent {\bf Rationalization over the real numbers.}
In order to have a {\it smooth} non-abelian de Rham theorem
(\cite[Thm. 3.87]{FSS20b})
involving the real de Rham dg-algebras $\Omega_{\mathrm{dR}}^\bullet(-)$
of smooth differential forms,
we take the rational
base field to be $\mathbb{R}$ instead of $\mathbb{Q}$
(as in \cite{GrMo13}\cite[Rem. 3.51]{FSS20b}), so that
our ``rational homotopy groups'' are actually ``real homotopy groups'' $\pi(X) \otimes_{\mathbb{Z}} \mathbb{R}$;
which makes no essential difference
(by \cite[Lem. 11.7]{BousfieldGugenheim76}).
Accordingly, for $X$ a simply-connected topological space, we write
  \vspace{-4mm}
\begin{equation}
  \label{Rationalization}
  \xymatrix{
    X
    \ar[rr]^-{ \eta^{\mathbb{R}}_X }_-{
      \mbox{
        \tiny
        \color{greenii}
        rationalization
      }
    }
    &&
    L_{\mathbb{R}}X
  }
\end{equation}

  \vspace{-3mm}
\noindent
for its rationalization
(localization over the real numbers, see \cite[Def. 3.55]{FSS20b}).

\begin{example}[Background C-field is cocycle in rational twisted Cohomotopy]
  \label{BackgroundFieldsAsRationalCohomotopy}
  The minimal Sullivan model (``FDA'') of the 4-sphere is
  free on generators $\omega_4$ and $\omega_{\, 7}$
  (in degrees 4 and 7, respectively), subject to differential relations
  shown on the bottom of \eqref{TheQuaternionicHopfFibration}.
This  means
  (\cite[\S 2.5]{Sati13},
  see also \cite[\S 2]{FSS16a}\cite[(59)]{FSS19a}\cite[Ex. 3.81]{FSS20b})
  that the background
  C-field data \eqref{SpacetimeFormAssumption}
  in the case that $p_1(\nabla) = 0$ \eqref{Pontrjagin4Form},
  is equivalently a flat $L_\infty$-algebra valued differential form
  \cite[Def. 3.77]{FSS20b} with values in the
  Whitehead $L_\infty$-algebra $\mathfrak{l}S^4$
  \cite[Ex. 3.68]{FSS20b}, namely a
  dg-algebra homomorphism from
  $\mathrm{CE}\big( \mathfrak{l}S^4\big)$
  to the de Rham algebra of $X$:

  \vspace{-.6cm}
  \begin{equation}
    \label{BackgroundFieldsAsdgcAlgebraHomomorphism}
    \xymatrix@C=24pt@R=0pt{
      X
      \ar[rr]^-{ (G_4, 2 G_7) }_>>>>>>>>>{\ }="s"
        &&
      L_{\mathbb{R}}S^4
      &
      \xleftrightarrow{
        \mbox{
          \tiny
          {\begin{tabular}{c}
            \color{greenii}
            \bf
            non-abelian
            de Rham theorem
            \\
            \cite[Thm. 3.87]{FSS20b}
          \end{tabular}}
        }
      }
      &
      \Omega_{\mathrm{dR}}^\bullet(X)
      \ar@{<-}[rr]^-{
        \mbox{
          \tiny
          $
          \begin{array}{rcl}
            G_4 &\!\!\!\!\!\!\mapsfrom\!\!\!\!\!\!& \omega_4
            \\
            2G_7 &\!\!\!\!\!\!\mapsfrom\!\!\!\!\!\!& \omega_{\, 7}
          \end{array}
          $
        }
      }
      &&
      \left(
        \!\!\!\!\!\!
        \mbox{
        \small
        $
        {\begin{array}{ll}
          d\,\omega_4 & = 0
          \\
          d\,\omega_{\, 7}
            & =
            - \omega_4 \wedge \omega_4
        \end{array}}
        $
        }
        \!\!\!\!\!\!
      \right)
      \\
      \ar@{}[rr]|-{
        \mathclap{
        \raisebox{10pt}{
          \footnotesize
          \color{darkblue}
          \bf
          \begin{tabular}{c}
            cocycle in
            \\
            rational 4-Cohomotopy
          \end{tabular}
        }
        }
      }
      &&
      &&
      \ar@{}[rr]_-{
        \mbox{
          \footnotesize
          \color{darkblue}
          \bf
          dg-algebra homomorphism
        }
      }
      &&
    }
  \end{equation}

  \vspace{-2mm}
\noindent
  More generally \cite[Prop. 3.20]{FSS19b}, consider
  $X = X^8$ a spin 8-manifold
  for M-theory compactified on 8-manifolds \cite[Rem. 3.1]{FSS19b},
  hence such that:

  \noindent
  {\bf (i)}
  the tangent bundle of $X^8$
  is equipped with tangential
  $\mathrm{Sp}(2) \to \mathrm{Spin}(8)$-structure
  $\tau$ (reflecting M2-brane background, see \cite[p. 8 \& \S 2.3]{FSS19b})
  with compatible connection $\nabla$
  (see \cite[Def. 5.25]{FSS20b}):

  \vspace{-.6cm}
  \begin{equation}
    \label{TangentialSp2Structure}
    \xymatrix@C=30pt{
      X^8
      \ar[dr]_{
        \underset{
          \mathllap{
            \raisebox{-3pt}{
              \tiny
              \color{darkblue}
              \bf
              \begin{tabular}{c}
                classifying map
                \\
                of spin-structure
              \end{tabular}
            }
            \;\;\;\;\;\;
          }
        }{
          T X^8
          \;
        }
      }^-{\ }="t"
      \ar[rr]^-{
        \overset{
          \mathclap{
          \raisebox{3pt}{
            \tiny
            \color{darkblue}
            \bf
            \begin{tabular}{c}
              tangential
              \\
              $\mathrm{Sp}(2)$-structure
            \end{tabular}
          }
          }
        }{
          \tau
        }
      }_>>>{\ }="s"
      &&
      B \mathrm{Sp}(2)
      \ar[dl]
      &{\phantom{}}&
      \Omega^\bullet_{\mathrm{dR}}
      \big(
        X^8
      \big)
      \ar@{<-}[dr]_-{
        \scalebox{.7}{$
        \arraycolsep=1.4pt
        \begin{array}{rcl}
          p_2(\nabla)
          &\mapsfrom&
          p_2
          \\
          p_1(\nabla)
          &\mapsfrom&
          p_1
        \end{array}
        $}
      }
      \ar[rr]^-{
        \scalebox{.7}{
        $
        \arraycolsep=1.4pt
        {\begin{array}{rcl}
          \rchi_{8}(\nabla)
          &\mapsfrom&
          \rchi_8
          \\
          \tfrac{1}{2}p_1(\nabla)
          &\mapsfrom&
          \tfrac{1}{2}p_1
        \end{array}}
        $
      }
      }
      &&
      \left(
      {\begin{aligned}
        d\, \rchi_8 & = 0
        \\
        d\, \tfrac{1}{2}p_1 & = 0
      \end{aligned}}
      \right)
      \ar@{<-}[dl]^-{
        \scalebox{.7}{$
        \arraycolsep=1.4pt
        \begin{array}{rcl}
          p_2
          &\mapsto&
          4
          \Big(
            \big(
              \tfrac{1}{4}p_1
            \big)^2
            -
            \rchi_8
          \Big)
          \\
          p_1
          &\mapsto&
          p_1
        \end{array}
        $}
      }
      \\
      &
      B \mathrm{Spin}(8)
      &
      &&
      &
      \left(
      {\begin{aligned}
        d\, p_2 & = 0
        \\
        d\, p_1 & = 0
      \end{aligned}}
      \right)
      \ar@{=>} "s"; "t"
    }
  \end{equation}
  \vspace{-.3cm}

  \noindent
  {\bf (ii)}
  the corresponding Euler 8-form $\rchi_8(\nabla)$ trivializes
  (meaning that the singular M2-brane loci themselves are removed from $X^8$, see \cite[\S 2.5]{FSS19b})
  \begin{equation}
    \label{EulerFormAsPfaffian}
    \Theta_7 \in \Omega_{\mathrm{dR}}^7(X)
    \phantom{AA}
    \mbox{\rm such that}
    \phantom{AA}
    d \Theta_7
    =
    \rchi_8(\nabla)
      :=
    \mathrm{Pf}(R)
    \,,
  \end{equation}
  which means
  (by Example
  \ref{ClassifyingSpaceOfHigherExtendedQuaternionUnitaryGroup})
  that the $\mathrm{Sp}(2)$-structure on $X$ further lifts to
  $\widehat {\mathrm{Sp}(2)}$-structure $\widehat \tau$:

  \vspace{-.6cm}
  \begin{equation}
    \label{hatSp2Structure}
    \xymatrix@C=2.7em{
      X
      \ar[dr]_-{ \tau }^-{\ }="t"
      \ar[rr]^-{
        \widehat \tau
      }_>>>>>{\ }="s"
      &&
      B \widehat{ \mathrm{Sp}(2) }
      \ar[dl]^-{
              \mathrm{hofib}(\rchi_8)
              }
      &&
      \Omega^\bullet_{\mathrm{dR}}
      \big(
        X^8
      \big)
      \ar@{<-}[rr]|-{
        \;
        \scalebox{.7}{$
          \begin{array}{rcl}
            \Theta_7 &\mapsfrom& \theta_7
            \\
            \rchi_8(\nabla) &\mapsfrom& \rchi_8
            \\
            \tfrac{1}{2}p_1(\nabla) &\mapsfrom& \tfrac{1}{2}p_1
               \end{array}
        $}
        \;
      }
      \ar@{<-}[dr]_-{
        \scalebox{.7}{$
          \begin{array}{rcl}
            \rchi_8(\nabla)
            &\mapsfrom&
            \rchi_8
            \\
            \tfrac{1}{2}p_1(\nabla)
            &\mapsfrom&
            \tfrac{1}{2}p_1
                 \end{array}
        $}
      }
      &&
      \left(
      {\begin{aligned}
        d\, {\color{blue}\theta_7} & = \rchi_8
        \\
        d\, \rchi_8 & = 0
        \\
        d\, \tfrac{1}{2}p_1 & = 0
      \end{aligned}}
      \right)
      \ar@{<-_{)}}[dl]
      \\
      &
      B \mathrm{Sp}(2)
      &
      &&
      &
            \mathrm{CE}
      \big(
        \mathfrak{l} B \mathrm{Sp}(2)
      \big)
      \ar@{=>} "s"; "t"
    }
  \end{equation}
  \vspace{-.4cm}

  \noindent Then the general background field data \eqref{SpacetimeFormAssumption}
  (now including the $p_1$-terms, Remark \ref{ShiftedFluxQuantizationCorrectionsToTheHopfWZTerm})
  may be identified with a cocycle in rational
  \emph{$\tau$-twisted Cohomotopy} (see also \cite[Ex. 3.96]{FSS20b}):
  \begin{equation}
    \label{GeneralBackgroundFieldDataAsCohomotopyCocycle}
    \hspace{2mm}
    \scalebox{.9}{
    \raisebox{20pt}{
    \xymatrix@C=13pt@R=20pt{
      X
      \ar[rr]^-{ (G_4, 2G_7) }_>>>>>>>>>{\ }="s"
      \ar[dr]_-{\tau}^{\ }="t"
        &&
        L_{\mathbb{R}}
      \big(
        S^4
        \!\sslash\!
        \mathrm{Sp}(2)
      \!\big),
      \ar[dl]
      &
      {\phantom{AAA}}
      &
      \;\;\;
      \Omega_{\mathrm{dR}}^\bullet(X)
      \ar@{<-}[rr]^-{
        \mbox{
          \tiny
          $
          \begin{array}{lcl}
            \phantom{2}G_4 &\!\!\!\!\!\!\mapsfrom\!\!\!\!\!\!& \omega_4
            \\
            2G_7
            -
            \Theta_7
              &\!\!\!\!\!\!\mapsfrom\!\!\!\!\!\!&
            \omega_{\, 7}
          \end{array}
          $
        }
      }
      \ar@{<-}[dr]_<<<<<<{\!\!
        \mbox{
          \tiny
          $
          \begin{array}{rcl}
            p_1(\nabla) &\!\!\!\!\!\!\mapsfrom\!\!\!\!\!\!& p_1
            \\
            p_2(\nabla) &\!\!\!\!\!\!\mapsfrom\!\!\!\!\!\!& p_2
            \\
            \rchi_8(\nabla) &\!\!\!\!\!\!\mapsfrom\!\!\!\!\!\!& \rchi_8
          \end{array}
          $
        }
        \!\!\!\!\!
        \!\!\!\!\!
        \!\!\!
      }
      &&
      \left(
        \!\!\!\!\!\!
        \mbox{
        \small
        $
        {\begin{array}{ll}
          d\,\omega_4 & \!\!\!\!\!\!\! = 0
          \\
          d\,\omega_{\, 7}
           &\!\!\!\!\!\!\!  =
            - \omega_4 \wedge \omega_4
            + \tfrac{1}{4}p_1 \wedge \tfrac{1}{4}p_1
            \\
            &\!\!\!\!\!\! \phantom{=}
            - \rchi_8
        \end{array}}
        $
        }
        \!\!\!\!\!\!\!\!
      \right)
     \ar@{<-_{)}}[dl]^<<<<<<{
        \mbox{\;
          \tiny
          $
          \begin{array}{lcl}
            p_1 &\!\!\!\!\!\!\mapsto\!\!\!\!\!\!& p_1
            \\
            p_2 &\!\!\!\!\!\!\mapsto\!\!\!\!\!\!& p_2
            \\
            \rchi_8 &\!\!\!\!\!\!\mapsto\!\!\!\!\!\!& \rchi_8
          \end{array}
          $
        }
      }
      \\
      &
      L_{\mathbb{R}}
      B \mathrm{Sp}(2)
      &
      &&
      &
      \mathrm{CE}
      \big(
        \mathfrak{l}
        B \mathrm{Sp}(2)
      \!\big)
      \ar@{=>} "s"; "t"
      \\
      \ar@{}[rr]|-{
        \mathclap{
        \mbox{
          \small
          \color{darkblue}
          \bf
          \begin{tabular}{c}
            cocycle in
            twisted rational 4-Cohomotopy
          \end{tabular}
        }
        }
      }
      &&
      &&
      \ar@{}[rr]|-{
        \mathclap{
        \mbox{
          \small
          \color{darkblue}
          \bf
          \begin{tabular}{c}
            relative
            dg-algebra homomorphism
          \end{tabular}
        }
        }
      }
      &&
     }
     }
    }
  \end{equation}
\end{example}

Our first main Theorem \ref{AnomalyFunctionalAsLiftInCohomotopy}
says that not only
does rational twisted Cohomotopy naturally encode the
background C-field,
via Remark \ref{BackgroundFieldsAsRationalCohomotopy},
but that it also naturally encodes
the gauging \eqref{GaugedExtendedSigmaModelFields}
of the M5-brane sigma-model fields (as in {\cite[Rem. 3.17]{FSS19b}})
as well as the anomaly functional of the 6d Hopf-WZ term
(Def. \ref{TheAnomalyFunctional})
as a homotopy invariant (Lemma \ref{AnomalyFunctionalIsHomotopyInvariant}):

\begin{theorem}[6d Hopf-WZ anomaly functional is lift through $h_{\mathbb{H}}$]
  \label{AnomalyFunctionalAsLiftInCohomotopy}
  In the situation of Def. \ref{BasicSetup},
  consider a closed extended worldvolume $\Sigma := \widetilde \Sigma^7$.
  Then, under the identification of the background field
  with a cocycle $c$ in rational twisted Cohomotopy,
  via Remark \ref{BackgroundFieldsAsRationalCohomotopy},
  we have:
  \vspace{-3mm}
  \begin{enumerate}[{\bf (i)}]
    \item The homotopy classes \eqref{GaugeableFunctions}
    of gaugings $\widetilde H_3$ \eqref{GaugedExtendedSigmaModelFields}
    of an extended sigma-model field $\widetilde f$
    are in bijection to homotopy classes of
    homotopy lifts $\widehat{ c \circ \widetilde f}$
    through the quaternionic Hopf fibration $h_{\mathbb{H}}$
    \eqref{TheQuaternionicHopfFibration}
    of the composite $c \circ \widetilde f$
    of $\widetilde f$
    with the classifying map $c$
    \eqref{GeneralBackgroundFieldDataAsCohomotopyCocycle} of the background C-field:
    \vspace{-2mm}
    \begin{equation}
      \label{GaugingHomotopy}
      \pi_0
      \big(
        \mathrm{Maps}^{\mathrm{ggd}}_{\mathrm{smth}}
        (
          \Sigma, X
        )
      \big)_{\vert_{\widetilde f}}
      \;\;\simeq\;\;
      \left\{
        \raisebox{22pt}{
        \xymatrix@C=24pt@R=2.5em{
          \widetilde \Sigma^7
          \ar@{-->}[rr]^-{
            \widehat{
              c \circ \widetilde f
            }
          }_>>>>>>>>>>{\ }="s"
          \ar[dr]_-{
            c \circ \widetilde f
            \;
          }^{\ }="t"
          &&
          L_{\mathbb{R}}
          \big(
          S^7
          \!\sslash\!
          \mathrm{Sp}(2)
          \!\big)
          \ar[dl]^<<<<<<<<<<{
            \;\;\;\;\;\;
            L_{\mathbb{R}}
            (
            h_{\mathbb{H}}
            \sslash
              \mathrm{Sp}(2)
            )
          }
          \\
          &
          L_{\mathbb{R}}
          \big(
            S^4
            \!\sslash\!
            \mathrm{Sp}(2)
          \!\big)
          \ar@{=>} "s"; "t"
        }
        }
      \right\}_{\!\!\!\!\!\Big/\sim_{
        \!\!\!\!\!\!\!\!\!\!\!\!\!\!\!\!\!
        \mbox{
          \tiny
          \rm
          \def\arraystretch{.5}
          \begin{tabular}{c}
            relative
            \\
            homotopy
          \end{tabular}
        }
      }}
          \end{equation}

          \vspace{-4mm}
    \item
    Under this bijection \eqref{GaugingHomotopy},
    twice the anomaly functional (Def. \ref{TheAnomalyFunctional})
    equals
    the correction by the Euler-potential $\Theta_7$ \eqref{EulerFormAsPfaffian} of the
    integral
    \begin{equation}
      \label{TheEulerCorrectedTerm}
      2\widetilde S\big(\widetilde f, \widetilde H_3\big)
      \;=\;
      \underset{
        \widetilde \Sigma^7
      }{
        \int
      }
      \Big(
      \big(
        \widehat{ c\circ \widetilde f }\;
      \big)^\ast
      (
        \omega_{\, 7}
      )
      \;+\;
      f^*\Theta_7
      \Big)
    \end{equation}

      \vspace{-2mm}
\noindent
    of the pullback of the angular cochain $\widetilde \omega_{\, 7}$
    on the universal 7-spherical fibration \eqref{GeneralBackgroundFieldDataAsCohomotopyCocycle}
    which
    is fiberwise
    the unit volume form on $S^7$ and which trivializes minus the {universal} Euler form:
      \vspace{-2mm}
    \begin{equation}
      \label{TopRightomega7}
      \langle \widetilde \omega_{\, 7}, S^7\rangle
      =
      1
      \,,
      \phantom{AA}
      d \widetilde \omega_{\, 7} = - \rchi_8
      \,.
    \end{equation}
  \end{enumerate}
\end{theorem}
\begin{proof}
  By \cite[Lem. 3.19]{FSS19b}
  the dgc-algebra model for the situation is as shown
  on the right in the following diagram,
  where the generator $\widetilde \omega_{\, 7}$ in the top right
  satisfies \eqref{TopRightomega7} by \cite[Prop. 2.5 (39)]{FSS19b}:
    \vspace{-2mm}
    \begin{equation}
      \label{DiagramExhibitinAnomalyFunctionalAsHopfInvariant}
      \hspace{-.15cm}
      \raisebox{78pt}{
      \xymatrix@C=14pt@R=3.2em{
        \widetilde \Sigma^7
        \ar[dd]_-{ \widetilde f }^>>>>>>>>>>>>>>{\ }="t"
        \ar@{-->}[rr]^-{
          \widehat{ c \circ \widetilde f }
        }_>>>>{\ }="s"
        &&
        L_{\mathbb{R}}
        \big(
          S^7
            \!\sslash\!
          \mathrm{Sp}(2)
        \!\big)
        \ar[dd]|-{
          L_{\mathbb{R}}
          (
            h_{\mathbb{H}}
            \sslash
            \mathrm{Sp}(2)
          )^{\phantom{A}}_{\phantom{A}}
        }
        &\;\;&
        \Omega_{\mathrm{dR}}^\bullet
        \big(\,
          \widetilde \Sigma^7
        \big)
        \ar@{<-}[rrr]^-{
          \mbox{
            \tiny
            $
            \begin{array}{lcl}
              \left({\color{darkblue}2\widetilde S}
              \big(
                \widetilde f,
                \widetilde H_3
              \big){-\int_{\tilde{\Sigma}^7}f^*\Theta_7}\right)
                \cdot
              \mathrm{vol}_{\widetilde \Sigma^7}
              &\!\!\!\!\!\!\!\!\!\mapsfrom\!\!\!\!\!\!\!\!\!&
              \widetilde \omega_{\, 7}
            \end{array}
            $
          }
        }_>>>>>{\ }="s2"
        \ar@{<-}[drrr]_-{
          \mbox{
            \tiny
            $
            \arraycolsep=2.2pt
            \begin{array}{rcl}
              \mathclap{\phantom{\vert^{\vert^{\vert}}}}
              \widetilde H_3
                &\!\!\!\!\mapsfrom\!\!\!\!&
               h_3
              \\
              \widetilde f^\ast \phantom{2}G_4
                &\!\!\!\!\mapsfrom\!\!\!\!&
              \omega_4
              \\
              \widetilde f^\ast ( 2G_7 - \Theta_7)
                  &\!\!\!\!\mapsfrom\!\!\!\!&
              \omega_{\, 7}
            \end{array}
            $
          }
        }^>>>>>{\ }="t2"
        \ar@{<-}[dd]_{
          \widetilde f^\ast
        }
        &&&
        \big(
          \!
          \arraycolsep=2.2pt
          {\begin{array}{lcl}
            d\,\widetilde \omega_{\, 7} & =& - \rchi_8
          \end{array}}
          \!
        \big)
        \ar@{<-}[d]_-{\simeq}^-{
          \mbox{
            \tiny
            $
            \arraycolsep=1.4pt
            \begin{array}{ccc}
              0 & \tfrac{1}{4}p_1 & \widetilde \omega_{\, 7}
              \\
              \mapsup & \mapsup & \mapsup
              \\
              h_3 & \omega_4 & \omega_{\, 7}
            \end{array}
            $
          }
        }
        \\
        &
        &
        &&
        &&&
        \left(
          \!\!\!\!
          {\begin{array}{lcl}
            d\,h_3 &\!\!\!\!\!=\!\!\!\!\!& \omega_4 - \tfrac{1}{4}p_1
            \\
            d\,\omega_4 &\!\!\!\!\!=\!\!\!\!\!& 0
            \\
            d\,\omega_{\, 7}
              &\!\!\!\!\!=\!\!\!\!\!&
              - d h_3 \wedge \big( \omega_4 + \tfrac{1}{4}p_1\big)
            \\
              &&
              - \rchi_8
          \end{array}}
          \!\!\!\!\!\!\!
        \right)
        \ar@{<-^{)}}[d]^-{
          \mbox{
            \tiny
            $
            \arraycolsep=1.4pt
            \begin{array}{cc}
              \omega_4 & \omega_{\, 7}
              \\
              \mapsup & \mapsup
              \\
              \omega_4 & \omega_{\, 7}
            \end{array}
            $
          }
        }
        \\
        X
        \ar[dr]_{ \tau }
        \ar[rr]^-{ c }
        &&
        L_{\mathbb{R}}
        \big(
          S^4
          \!\sslash\!
          \mathrm{Sp}(2)
        \!\big)
        \ar[dl]
        &&
        \Omega_{\mathrm{dR}}^\bullet(X)
        \ar@{<-}[rrr]^-{
          \mbox{
            \tiny
            $
            \arraycolsep=2.2pt
            \begin{array}{lcl}
              \phantom{2}G_4 &\!\!\!\mapsfrom\!\!\!& \omega_4
              \\
              2G_7 - \Theta_7 &\!\!\!\mapsfrom\!\!\!& \omega_{\, 7}
            \end{array}
            $
          }
        }
        \ar@{<-}[dr]_<<<<<<<<<<{
          \mbox{
            \tiny
            $
            \arraycolsep=1.4pt
            \begin{array}{lcl}
              p_1(\nabla) &\mapsfrom& p_1
              \\
              p_2(\nabla) &\mapsfrom& p_2
              \\
              \rchi_8(\nabla) &\mapsfrom& \rchi_8
            \end{array}
            $
          }
          \!\!\!\!\!
        }
        &&&
        \left(
          \!\!\!\!
          {\begin{array}{lcl}
            d\,\omega_4 &\!\!\!\!\!=\!\!\!\!\!& 0
            \\
            d\,\omega_{\, 7}
              &\!\!\!\!\!=\!\!\!\!\!&
              - \omega_4 \wedge \omega_4
              + \big(\tfrac{1}{4}p_1\big)^2
            \\
              &&
              - \rchi_8
          \end{array}}
          \!\!\!\!\!\!
        \right)
        \ar@{<-_{)}}[dll]^<<<<<<<<<<{
          \mbox{
            \tiny
            $
            \arraycolsep=1.4pt
            \begin{array}{lcl}
              p_1 &\mapsto& p_1
              \\
              p_2 &\mapsto& p_2
              \\
              \rchi_8 &\mapsto& \rchi_8
            \end{array}
            $
          }
        }
        \\
        &
        \mathrm{B}
        \mathrm{Sp}(2)
        &
        &
        &&
         \mathrm{CE}
        \big(
          \mathfrak{l}
          B
          \mathrm{Sp}(2)
        \big)
        \ar@{=>}^-{\simeq}_-{ \widetilde H_3 }  "s"; "t"
        \ar@{<=}^-{\simeq}_-{ \eta^\ast } "s2"; "t2"
      }
      }
    \end{equation}
  Here the right vertical morphism
  is the relative minimal Sullivan model (e.g. \cite[Prop. 3.17]{FSS20b})
  of the parametrized quaternionic Hopf fibration,
  which is a cofibration out of a cofibrant object
  (e.g. \cite[Prop. 3.43]{FSS20b}). Since, moreover, every dgc-algebra
  is projectively fibrant (e.g. \cite[Rem. 3.37]{FSS20b}),
  any homotopy as on the left in \eqref{DiagramExhibitinAnomalyFunctionalAsHopfInvariant}
  is represented by a homotopy $\eta^\ast$
  as shown on the right (e.g. \cite[Prop. A.16]{FSS20b}).

  \noindent {\bf (i)}
  The diagonal morphism on the right of
  \eqref{DiagramExhibitinAnomalyFunctionalAsHopfInvariant}
  manifestly exhibits a choice of gauging $\widetilde H_3$
  of $\widetilde f$. So to prove the first claim
  it just remains to see that this
  establishes a bijection on homotopy classes.
  Observe that a homotopy of homotopy lifts is now of this form:
  \vspace{-2mm}
    \begin{equation}
      \hspace{-.8cm}
      \raisebox{50pt}{
      \scalebox{.95}{
      \xymatrix@C=80pt{
        \Omega_{\mathrm{dR}}^\bullet
        \big(
          \widetilde \Sigma^7
        \big)
        \ar@/_2.8pc/@{<-}[drrr]|-{
          \mbox{
            \scalebox{0.7}{
            $
            \begin{array}{lcl}
              \phantom{f^\ast 2}(\widetilde H_3)_1
                &\!\!\!\!\mapsfrom\!\!\!\!&
              h_3
              \\
              \widetilde f^\ast \phantom{2}G_4
                &\!\!\!\!\mapsfrom\!\!\!\!&
              \omega_4
              \\
              \widetilde f^\ast 2G_7 - \widetilde f^\ast \Theta_7 &\!\!\!\!\mapsfrom\!\!\!\!& \omega_{\, 7}
            \end{array}
            $
            }
          }
        }^>>>>>>>>>>>{\ }="t2"
        \ar@/^2.8pc/@{<-}[drrr]|-{
          \mbox{
            \scalebox{0.7}{
            $
            \begin{array}{lcl}
              \phantom{f^\ast 2}(\widetilde H_3)_0
                &\!\!\!\!\mapsfrom\!\!\!\!&
              h_3
              \\
              \widetilde f^\ast \phantom{2}G_4
                &\!\!\!\!\mapsfrom\!\!\!\!&
              \omega_4
              \\
              \widetilde f^\ast 2G_7 - \widetilde f^\ast \Theta_7
                &\!\!\!\!\mapsfrom\!\!\!\!&
              \omega_{\, 7}
            \end{array}
            $
            }
          }
        }_>>>>>>>>>>>>>{\ }="s2"
        \ar@{<-}[dd]_{
          \widetilde f^\ast
        }
        &&&
        \\
        &&&
        \left(
          \!\!\!\!
          {\begin{array}{lcl}
            d\,h_3 &\!\!\!\!\!=\!\!\!\!\!& \omega_4 - \tfrac{1}{4}p_1
            \\
            d\,\omega_4 &\!\!\!\!\!=\!\!\!\!\!& 0
            \\
            d\,\omega_{\, 7}
              &\!\!\!\!\!=\!\!\!\!\!&
              - d h_3 \wedge \big( \omega_4 + \tfrac{1}{4}p_1\big)
              - \rchi_8
          \end{array}}
          \!\!\!\!
        \right)
        \ar@{<-^{)}}[d]^-{
          \mbox{
          \tiny
            $
            \arraycolsep=1.4pt
            \begin{array}{cc}
              \omega_4 & \omega_{\, 7}
              \\
              \mapsup & \mapsup
              \\
              \omega_4 & \omega_{\, 7}
            \end{array}
            $
                                  }
        }
        \\
        \Omega_{\mathrm{dR}}^\bullet(X)
        \ar@{<-}[rrr]^-{
          \mbox{
         \scalebox{0.7}{
            $
            \begin{array}{lcl}
              \phantom{2}G_4 &\!\!\!\mapsfrom\!\!\!& \omega_4
              \\
              2G_7 - \Theta_7 &\!\!\!\mapsfrom\!\!\!& \omega_{\, 7}
            \end{array}
            $
            }
          }
        }
        &&&
        \left(
          \!\!\!\!
          {\begin{array}{lcl}
            d\,\omega_4 &\!\!\!\!\!=\!\!\!\!\!& 0
            \\
            d\,\omega_{\, 7}
              &\!\!\!\!\!=\!\!\!\!\!&
              - \omega_4 \wedge \omega_4
              + \big(\tfrac{1}{4}p_1\big)^2
              - \rchi_8
          \end{array}}
          \!\!\!\!
        \right)
        \,.
        \ar@{<=}_-{ \eta^\ast } "s2"; "t2"
      }
      }
      }
    \end{equation}

    \vspace{-2mm}
\noindent
   Hence,
    since path space objects of de Rham dgc-algebras
    over $X$ are given by de Rham dgc-algebras over $X \times [0,1]$
    (e.g. \cite[Lem. 3.88]{FSS20b}),
    this is equivalently
    (e.g. \cite[Prop. A.16]{FSS20b})
    a dgc-algebra homomorphism making the following
    diagram commute
    under $\mathrm{CE}\big( \mathfrak{l}B \mathrm{Sp}(2) \big)$
    (where $s$ denotes the canonical coordinate function on $[0,1]$):
\vspace{-2mm}
    \begin{equation}
    \label{AlgebraicHomotopyForGugedGaugedTransformation}
    \hspace{-1cm}
    \raisebox{51pt}{
  \xymatrix@C=7em@R=3pt{
    \Omega_{\mathrm{dR}}^\bullet
    \big(\,
      \widetilde \Sigma^7
    \big)
    \ar@{<-}[d]_-{
      \mbox{
        \tiny
        $
        \arraycolsep=1.4pt
        \begin{array}{cc}
          0 & 0
          \\
          \mapsup & \mapsup
          \\
          s & d s
        \end{array}
        $
              }
    }
    \ar@{<-}[drr]^-{\;\;\;
      \mbox{
        \scalebox{0.65}{
        $
                \;\;\;\;\;\;\;\;\;\;
        \arraycolsep=1.4pt
        \begin{array}{lcl}
          \phantom{f^\ast 2}\big(\widetilde H_3\big)_0
          &\mapsfrom &
          h_3
          \\
          \widetilde f^\ast \phantom{2}G_4
          &\mapsfrom&
          \omega_4
          \\
          \widetilde f^\ast 2G_7 - \widetilde f^\ast \Theta_7
          &\mapsfrom&
          \omega_{\, 7}
        \end{array}
        $
        }
      }
    }
    \\
    \Omega_{\mathrm{dR}}^\bullet
    \Big(
      \widetilde \Sigma^7
      \times [0,1]
    \Big)
    \ar@{<-}[rr]|-{\; \eta^\ast }
    &&
    \left(
      \!\!\!
      {\begin{array}{lcl}
        d\,h_3 &=& \omega_4 - \tfrac{1}{4}p_1
        \\
        d\,\omega_4 &=& 0
        \\
        d\,\omega_{\, 7}
          &=&
          - \omega_4 \wedge \omega_4
          + \big( \tfrac{1}{4} p_1\big)^2
          \\
          & & - \rchi_8
      \end{array}}
      \!\!\!
    \right).
    \\
    \Omega_{\mathrm{dR}}^\bullet
    \big(\,
      {\widetilde{\Sigma}^7}
    \big)
    \ar@{<-}[u]^-{
      \mbox{
        \tiny
        $
        \arraycolsep=1.4pt
        \begin{array}{cc}
          s & d s
          \\
          \mapsdown & \mapsdown
          \\
          1 & 0
        \end{array}
        $
      }
    }
    \ar@{<-}[urr]_{\!\!\!\!\!\!\!\!
      \mbox{
        \scalebox{0.65}{
        $
         \arraycolsep=1.4pt
         \begin{array}{lcl}
           \mathclap{\phantom{\vert^{\vert^{\vert}}}}
           \phantom{f^\ast 2}\big(\widetilde H_3\big)_1 & \mapsfrom & h_3
           \\
           \widetilde f^\ast \phantom{2}G_4 & \mapsfrom & \omega_4
           \\
           \widetilde f^\ast 2G_7 - \widetilde f^\ast \Theta_7
             & \mapsfrom &
           \omega_{\, 7}
         \end{array}
        $
        }
      }
    }
    \\
    \big(\widetilde H_3\big)_{[0,1]}
    \ar@{<-|}[rr]
    &&
    h_3
  }
  }
  \end{equation}
  But this homotopy diagram \eqref{AlgebraicHomotopyForGugedGaugedTransformation}
  manifestly exhibits the same data and conditions as
  in \eqref{DataHomotopyGauged} for a homotopy of
  gaugings of a sigma-model field $\widetilde f$:

  \vspace{-.6cm}
  $$
    \xymatrix{
      \big(
        \widetilde f, (H_3)_0
      \big)
      \ar@{=>}[rr]^-{ \left( \mathrm{id}, (\widetilde H_3)_{[0,1]} \right) }
      &&
      \big(
        \widetilde f, (H_3)_1
      \big)
    }
    \,,
  $$
  and hence homotopy classes are equivalent to gauge equivalence classes,
  as claimed.

  \noindent {\bf (ii)}
  Consider in the following  any 7-form
  on $\widetilde \Sigma^7$ of unit volume:
  \begin{equation}
    \label{VolumeForm}
    \mathrm{vol}_{ \widetilde \Sigma^7 }
      \;\in\;
    \Omega_{\mathrm{dR}}^7\big( \widetilde \Sigma^7\big)
    \phantom{AA}
    \mbox{\rm such that}
    \phantom{AA}
    \underset{\widetilde \Sigma^7}{\int}
      \mathrm{vol}_{\widetilde \Sigma^7}
    \;=\;
    1\;.
  \end{equation}
  Again using the above path space objects,
  the homotopy $\eta^\ast$
  on the right in \eqref{DiagramExhibitinAnomalyFunctionalAsHopfInvariant} is a dgc-algebra homomorphism
  that makes the following diagram commute
  under
  $\mathrm{CE}\big( \mathfrak{l}B \mathrm{Sp}(2) \big)$:

\begin{equation}
  \label{ThedgcHomotopy}
  \hspace{-1.5cm}
  \raisebox{54pt}{
  \xymatrix@C=7em@R=2pt{
    \Omega_{\mathrm{dR}}^\bullet
    \big(\,
      \widetilde \Sigma^7
    \big)
    \ar@{<-}[d]_-{
      \mbox{
        \tiny
        $
        \arraycolsep=1.4pt
        \begin{array}{cc}
          0 & 0
          \\
          \mapsup & \mapsup
          \\
          s & d s
        \end{array}
        $
      }
    }
    \ar@{<-}[drr]^>>>>>>>>>>>>>>>>>>>>>>>>>>>>>{\;\;\;
      \mbox{
        \scalebox{0.7}{
        $
        \arraycolsep=1.5pt
        \begin{array}{ccl}
          \left({\color{darkblue} 2\widetilde S}{-\int_{\tilde{\Sigma}^7}f^*\Theta_7}\right)
            \cdot
          \mathrm{vol}_{\widetilde \Sigma^7}
          & \mapsfrom &
          \omega_{\, 7}
          \\
          \tfrac{1}{4} \widetilde f^\ast p_1(\nabla)
            &\mapsfrom&
            \omega_4
          \\
          0 &\mapsfrom& h_3
        \end{array}
        $
        }
      }
    }
    \\
    \Omega_{\mathrm{dR}}^\bullet
    \Big(
      \widetilde \Sigma^7
      \times [0,1]
    \Big)
    \ar@{<-}[rr]|-{\; \eta^\ast }
    &&
    \left(
      \!\!\!
      {\begin{array}{lcl}
        d\,h_3 &=& \omega_4 - \tfrac{1}{4}p_1
        \\
        d\,\omega_4 &=& 0
        \\
        d\,\omega_{\, 7}
          &=&
          - \omega_4 \wedge \omega_4
          + \big( \tfrac{1}{4} p_1\big)^2
          \\
          & & - \rchi_8
      \end{array}}
      \!\!\!
    \right).
    \\
    \Omega_{\mathrm{dR}}^\bullet
    \big(
      S^7
    \big)
    \ar@{<-}[u]^-{
      \mbox{
        \tiny
        $
        \arraycolsep=1.4pt
        \begin{array}{cc}
          s & d s
          \\
          \mapsdown & \mapsdown
          \\
          1 & 0
        \end{array}
        $
      }
    }
    \ar@{<-}[urr]_<<<<<<<<<<<<<<<<<<<<<<<<<<<<<<<<<{\!\!\!\!
      \mbox{
        \scalebox{0.7}{
        $
         \arraycolsep=1.4pt
         \begin{array}{lcl}
           \mathclap{\phantom{\vert^{\vert^{\vert^{\vert}}}}}
           \phantom{f^\ast 2}\widetilde H_3 & \mapsfrom & h_3
           \\
           \widetilde f^\ast \phantom{2}G_4
             & \mapsfrom &
           \omega_4
           \\
           \widetilde f^\ast 2G_7 - \widetilde f^\ast \Theta_7
             & \mapsfrom &
           \omega_{\, 7}
         \end{array}
        $
        }
      }
    }
    }
    }
\end{equation}

\noindent
We claim that such an $\eta^\ast$ is given by:
\vspace{-2mm}
 $$
 \hspace{-2mm}
   \begin{array}{lcl}
         s \widetilde H_3
          &\!\longmapsfrom\!&
          h_3
          \\
      ds \wedge \widetilde H_3
          +
          s
          \cdot
          \widetilde f^\ast\big( G_4 \big)
          +
          (1-s)
          \tfrac{1}{4} \widetilde f^\ast\big( p_1(\nabla) \big)
          &\overset{\eta^\ast}{\!\longmapsfrom\!}&
          \omega_4
          \\
          s
            \cdot
          \Big(
            \widetilde f^\ast\big( 2G_7 \big)
            -
            \widetilde f^\ast \big( \Theta_7 \big)
          \!\! \Big)
          +
          \Big(
            {
              \color{darkblue}
              2\widetilde S
            }
            -
            \int_{\tilde{\Sigma}^7}
            \tilde{f}^*\big( \Theta_7 \big)
          \!\! \Big)\!
          \cdot \!
          (1-s)\cdot \mathrm{vol}_{ \widetilde \Sigma^7 }
          +
          s (1-s) \cdot
          \widetilde H_3
            \wedge
          \widetilde f^\ast
          \big(
            G_4
            -
            \tfrac{1}{4}p_1(\nabla)
          \big)
          +
          d s \wedge Q_6
          &\!\longmapsfrom\!&
          \omega_{\, 7}
        \end{array}
  $$

\noindent
where $Q_6 \in \Omega_{\mathrm{dR}}^6\big(\, \widetilde \Sigma^7 \big)$
is any differential form which satisfies
\vspace{-3mm}
$$
  d Q_6
  \;=\;
  \Big(
    \widetilde H_3
      \wedge
    \widetilde f^\ast
    \big(
      G_4 + \tfrac{1}{4}p_1(\nabla)
    \big)
    +
    \widetilde f^\ast
    \big(
      2G_7
      -
      \Theta_7
    \big)
  \Big)
  -
  \overset{
    =:
    {
      \color{darkblue}2\widetilde S
    }
    -\int_{\tilde{\Sigma}^7}\tilde{f}^*( \Theta_7 )
  }{
  \Bigg(\;
  \overbrace{
  \underset{ \widetilde \Sigma^7 }{\int}
    \Big(
    \widetilde H_3
      \wedge
    \widetilde f^\ast
    \big(
      G_4 + \tfrac{1}{4}p_1(\nabla)
    \big)
    +
    \widetilde f^\ast
    \big(
      2G_7
      -
      \Theta_7
    \big)
  \Big)
  }
 \; \Bigg)
  }
  \cdot
  \mathrm{vol}_{S^7}
  \,.
$$

\vspace{-1mm}
\noindent
This exists by \eqref{VolumeForm} and
because cohomology classes of differential forms in top degree on
compact connected manifolds are in bijection with the values of their integrals
(e.g. \cite[\S 7.3, Thm. 7.5]{Lafontaine15}).
It is clear that $\eta^\ast$ thus defined satisfies
the required boundary conditions of
a homotopy in \eqref{ThedgcHomotopy}.
Hence it only
remains to check that it is indeed a dg-algebra homomorphism,
in that it respects the differentials on the generators.
This is verified by direct computation:
\vspace{-1mm}
$$
\openup-.1\jot
  \begin{aligned}
    d \eta^\ast(\omega_{\, 7})
    & =
    d
    \Bigg(\!\!
      s
      \cdot \!
      \Big(
        \widetilde f^\ast \big( 2G_7 \big)
        -
        \widetilde f^\ast \big( \Theta_7 \big)
     \!\! \Big)
      +
      (1-s)
      \cdot \!
      \left(\! {\color{darkblue} 2\widetilde S }{ - \!\!\int_{\tilde{\Sigma}^7}\tilde{f}^*\Theta_7} \!\! \right)
      \! \cdot \mathrm{vol}_{S^7}
      +
      s (1-s) \cdot \widetilde H_3
        \wedge
      \widetilde f^\ast
      \big(
        G_4
        -
        \tfrac{1}{4}p_1(\nabla)
      \big)
      +
      d s \wedge Q_6
    \!\! \Bigg)
    \\
    & =
    d s \wedge
    \Big(
      \widetilde f^\ast \big( 2G_7 \big)
      -
      \widetilde f^\ast \big( \Theta_7 \big)
    \Big)
    -
    d s \wedge
    \Big(
      {
        \color{darkblue} 2\widetilde S
      }
      -
      \int_{\tilde{\Sigma}^7}\tilde{f}^*\big( \Theta_7 \big)
    \Big)
    \cdot
    \mathrm{vol}_{S^7}
    +
    d s \wedge \widetilde H_3
      \wedge
    \widetilde f^\ast \big( G_4 - \tfrac{1}{4}p_1(\nabla) \big)
    \\
    &
    \phantom{=}\;
    -
    2 s \cdot d s
      \wedge
    \widetilde H_3
      \wedge
    \widetilde f^\ast \big( G_4 - \tfrac{1}{4}p_1(\nabla) \big)
    -
    d s \wedge d Q_6
    \\
    & =
    d s
    \wedge
    \left(
      \big(
        \widetilde f^\ast ( 2G_7 )
        -
        \widetilde f^\ast ( \Theta_7 )
      \big)
      +
      \widetilde H_3
        \wedge
      \widetilde f^\ast
      \big(
        G_4 - \tfrac{1}{4}p_1(\nabla)
      \big)
      -
      \left(
        {\color{darkblue}2\widetilde S}
        -
        \int_{\widetilde \Sigma^7}
        \tilde f^\ast(\Theta_7)
      \right)
        \cdot
        \mathrm{vol}_{S^7}
      -
      d Q_6
    \right)
    \\
    &
    \phantom{=}
    -
    2 s \cdot d s \wedge \widetilde H_3
      \wedge
    \widetilde f^\ast \big( G_4 - \tfrac{1}{4}p_1(\nabla) \big)
    \\
    & =
    d s
    \wedge
    \underset{
      = 0
    }{
      \underbrace{
      \left(
        \widetilde f^\ast
        \big(
          2G_7 - \Theta_7
        \big)
        +
        \widetilde H_3
          \wedge
        \widetilde f^\ast
        \big(
          G_4 {\color{darkblue}+} \tfrac{1}{4}p_1(\nabla)
        \big)
        -
        \left({\color{darkblue} 2\widetilde S }{-\int_{\tilde{\Sigma}^7}\tilde{f}^*\Theta_7}\right) \cdot \mathrm{vol}_{S^7}
        -
        d Q_6
      \right)
      }
    }\\
    &
    \phantom{=}\;
    -
    2 \cdot
    d s \wedge \widetilde H_3
      \wedge
    \widetilde f^\ast \big( s \cdot G_4 + (1-s)\tfrac{1}{4}p_1(\nabla) \big)
    \\
    & =
    - \eta^\ast\big( \omega_4\big) \wedge \eta^\ast\big( \omega_4 \big)
    \\
    & =
    \eta^\ast\big(  d \omega_{\, 7}  \big).
  \end{aligned}
 $$

\vspace{-1mm}
\noindent
Here the crucial non-trivial step is the fourth (third line from below).
In the last two steps we used that all 8-forms
on $\widetilde \Sigma^7$ vanish, so that only the 8-forms
on $\widetilde \Sigma^7 \times [0,1]$ with one factor of
$d s$ survive.

\noindent The verification on the other two generators is immediate:
\vspace{-1mm}
$$
  \begin{aligned}
    d \eta^\ast(h_3)
    & =
    d
    \big(
      s \widetilde H_3
    \big)
    \\
    & =
    d s \wedge \widetilde H_3
    +
    s
      \cdot
    \Big(
      \widetilde f^\ast\big( G_4 \big)
      -
      \tfrac{1}{4} \widetilde f^\ast \big( p_1(\nabla) \big)
    \Big)
    \\
    & =
    \eta^\ast
    (
      \omega_4
    )
    -
    \tfrac{1}{4} \widetilde f^\ast\big( p_1(\nabla) \big)
    \\
    & =
    \eta^\ast
    (
      d h_3
    )
    \,,
\end{aligned}
\qquad \qquad
\begin{aligned}
    d \eta^\ast( \omega_4 )
    & =
    -
    d s \wedge d \widetilde H_3
    \\
    & \phantom{=}\;
    +
    d
    \Big(
      s \cdot \widetilde f^\ast\big( G_4 \big)
      +
      (1-s)\tfrac{1}{4} \widetilde f^\ast\big( p_1(\nabla) \big)
    \Big)
    \\
    & = 0
    \\
    & = \eta^\ast
    (
      d \omega_4
    )
    \,.
    \\
 \end{aligned}
$$

\vspace*{-1.2\baselineskip}
\end{proof}

\newpage

\section{Hypothesis H implies M5 Hopf-WZ anomaly cancellation}
\label{HypothesisHImpliesAnomalyCancellation}

In view of the rational cohomotopical
interpretation of background C-field
(Remark \ref{BackgroundFieldsAsRationalCohomotopy})
and of the 6d Hopf-WZ anomaly functional
(Theorem \ref{AnomalyFunctionalAsLiftInCohomotopy})
it is natural to hypothesize that
the topological sector of the background C-field
should be required to be a cocycle in actual twisted Cohomotopy.
This non-abelian charge-quantization condition
(\cite[\S 5.3]{FSS20b})
is called \hyperlink{HypothesisH}{\it Hypothesis H} in \cite{FSS19b};
we recall the precise statement as
Def. \ref{BackgroundFieldsSatisfyingHypothesisH} below.

\medskip
We observe in Prop. \ref{GWIIsHomotopyInvariantOfBackgroundFieldData}
that,
under \hyperlink{HypothesisH}{\it Hypothesis H}
and in the absence of topological twisting,
Theorem \ref{AnomalyFunctionalAsLiftInCohomotopy}
exhibits the M5 Hopf-WZ anomaly functional
as the homotopy Whitehead integral formula
(see Remark \ref{WhiteheadIntegralFormulasInTheLiterature} below)
for the Hopf invariant (recalled in Def. \ref{TheHopfInvariant} below).
This proves the anomaly cancellation \eqref{ConsistencyCondition}
for the special case
of oriented differences of extended worldvolumes
being the 7-sphere and for vanishing topological twist $\tfrac{1}{4}p_1$
(Remark \ref{ShiftedFluxQuantizationCorrectionsToTheHopfWZTerm}).
Finally we establish a twisted/parametrized generalization of the
integral Hopf invariant in Theorem \ref{AnomalyIsIntegral},
which proves the anomaly cancellation condition
\eqref{ConsistencyCondition} generally.

\begin{defn}[{\it Hypothesis H} {\cite{FSS19b}}]
 \label{BackgroundFieldsSatisfyingHypothesisH}
 In the situation of Def. \ref{BasicSetup} we say that:

\vspace{-1mm}
 \item {\bf (i)} the background fields $(G_4, 2G_7)$  \eqref{SpacetimeFormAssumption}
 {\it satisfy \hyperlink{HypothesisH}{\it Hypothesis H}}
 if they are classified as in \cite[Def. 3.5]{FSS19b}
 by an actual cocycle $c$ in twisted Cohomotopy \cite[\S 2.1]{FSS19b},
 hence if their classifying map in rational twisted Cohomotopy
 from Remark \ref{BackgroundFieldsAsRationalCohomotopy}
 factors, up to homotopy, through the homotopy quotient
 $S^4 \!\sslash\! \mathrm{Sp}(2)$ of the
 4-sphere canonically acted on by
 $\mathrm{Sp}(2) \simeq \mathrm{Spin}(5)$
 (see \cite[Prop. 2.1]{FSS20a}), followed by the rationalization map
 \eqref{Rationalization};

\vspace{-1mm}
 \item  {\bf (ii)}  the (extended or not) higher gauged
  sigma-model fields $(\widetilde f, \widetilde H_3)$  \eqref{GaugedExtendedSigmaModelFields}
  {\it satisfy \hyperlink{HypothesisH}{\it Hypothesis H}}
  if the corresponding lift \eqref{DiagramExhibitinAnomalyFunctionalAsHopfInvariant}
  through the rationalized parametrized
  quaternionic Hopf fibration, which classifies them
  by Theorem \ref{AnomalyFunctionalAsLiftInCohomotopy},
  factors as a lift through the actual parametrized
  quaternionic Hopf fibration $h_{\mathbb{H}} \!\sslash\! \mathrm{Sp}(2)$
  (\cite[Prop. 2.22]{FSS19b}):
\vspace{-1mm}
 \begin{equation}
   \label{CohomotopyCocycle}
   \hspace{-4mm}
   \raisebox{45pt}{
   \xymatrix@C=6em@R=30pt{
     \widetilde \Sigma^7
     \ar[dd]_-{\widetilde f}^>>>>>>{\ }="t"
     \ar@{-->}[rr]|-{\;\;
       \widehat { c \circ \widetilde f }
     \;\;}_<<<<<<<<<<<<<<<<{\ }="s"
     \ar@/^2pc/[rrrr]|-{
       \overset{
         \raisebox{3pt}{
           \tiny
           \color{darkblue}
           \bf
           rational Hopf-WZ term
         }
       }{
       \;
       2 \widetilde S
       (
         \widetilde H_3,
         \,
         G_4\,
         2G_7
       )
       \;
       }
     }
     &&
     S^7
     \!\sslash\!
       \mathrm{Sp}(2)
     \ar[rr]|-{
       \;\;
       \mbox{
         \tiny
         \color{greenii}
         \bf
         rationalization
       }
       \;\;
     }
     \ar[dd]|-{
       \overset{
         \raisebox{3pt}{
           \tiny
           \color{greenii}
           \bf
           \begin{tabular}{c}
             $\mathrm{Sp}(2)$-parametrized
             \\
             quaternionic Hopf fibration
           \end{tabular}
         }
       }{
         h_{\mathbb{H}}
         \sslash
         \mathrm{Sp}(2)
       }
     }
     &&
     L_{\mathbb{R}}
     \big(
       S^7
       \!\sslash\!
       \mathrm{Sp}(2)
     \!\big)
     \ar[dd]^-{
       L_{\mathbb{R}}
       (
       h_{\mathbb{H}}
       \sslash
       \mathrm{Sp}(2)
       )
     }
     \\
     \ar@{}[rr]|-{
       \mbox{
         \tiny
         \color{darkblue}
         \bf
         \begin{tabular}{c}
           lift to actual
           \\
           twisted Cohomotopy
         \end{tabular}
       }
     }
     &&
     \ar@{}[rr]|-{
       \mbox{
         \tiny
         \color{darkblue}
         \bf
         \begin{tabular}{c}
           rational
           \\
           twisted Cohomotopy
         \end{tabular}
       }
     }
     &&
     \\
     X
     \ar[dr]_-{ \tau }
     \ar@/_2pc/[rrrr]_>>>>>>>>>>>>>>>>>>>>>{
       \underset{
         \raisebox{-3pt}{
           \tiny
           \color{darkblue}
           \bf
           \begin{tabular}{c}
             cocycle in rational
             \\
             twisted Cohomotopy
           \end{tabular}
         }
       }{
         (G_4, 2G_7)
       }
     }
     \ar@{-->}[rr]|-{\; c \; }
     &&
     S^4
     \!\sslash\!
     \mathrm{Sp}(2)
     \ar[rr]|-{
       \;\;
       \mbox{
         \tiny
         \color{greenii}
         \bf
         rationalization
       }
       \;\;
     }
     \ar[dl]|-{\;\;\;\;\;\;\;\;\; { {\phantom{A} \atop \phantom{A}} \atop \phantom{A} } }
     &&
     L_{\mathbb{R}}
     \big(
       S^4
       \!\sslash\!
       \mathrm{Sp}(2)
     \big)
     \\
     &
     B
     \mathrm{Sp}(2)
     \ar@{=>}^-{ \widetilde H_3 } "s"; "t"
   }
   }
 \end{equation}
\end{defn}

\noindent {\bf Hopf-WZ term in terms of Fivebrane-extended $\mathrm{Sp}(2)$-structure }
For transparent formulation of the proof of the following integrality theorem (Theorem \ref{AnomalyIsIntegral} below),
it is useful to re-cast the result of Theorem \ref{AnomalyFunctionalAsLiftInCohomotopy}
in terms of the
Fivebrane-extended $\widehat{\mathrm{Sp}(2)}$-structure
from Example \ref{ClassifyingSpaceOfHigherExtendedQuaternionUnitaryGroup}:

\begin{defn}[Quaternionic Hopf fibration parametrized over Fivebrane-extended $\mathrm{Sp}(2)$]
  \label{QuaternionicHopfFibrationParametrizedOverHigherExtendedSp2}
Consider the homotopy pullback (e.g. \cite[Def. A.23]{FSS20b})
of the $\mathrm{Sp}(2)$-parametrized quaternionic Hopf fibration
\eqref{CohomotopyCocycle} along the Fivebrane-extension
$\widehat {\mathrm{Sp}(2)} \to \mathrm{Sp}(2)$ (Example \ref{ClassifyingSpaceOfHigherExtendedQuaternionUnitaryGroup}).
By the pasting law and the homotopy-restriction map on $\infty$-actions
(see \cite[Prop. 2.23, 2.85]{SS20b}),
we may denote this as follows:
\vspace{-1mm}
\begin{equation}
  \label{PullbackToUniversalSpaceOnWhichChi8Trivializes}
  \raisebox{20pt}{
  \xymatrix@R=8pt@C=4em{
    &
    S^7 \!\sslash\! \widehat {\mathrm{Sp}(2)}
    \ar@{}[ddrr]|-{
      \mbox{
        \tiny
        (pb)
      }
    }
    \ar[rr]
    \ar[dd]_-{
      \mathllap{
        \mbox{
          \tiny
          \color{greenii}
          \bf
          \begin{tabular}{c}
            $\widehat{\mathrm{Sp}(2)}$-parametrized
            \\
            quaternionic Hopf fibration
          \end{tabular}
        }
      }
      \scalebox{0.6}{$
        h_{\mathbb{H}} \sslash \widehat{\mathrm{Sp}(2)}
      $}
    }
    &&
    S^7 \!\sslash\! \mathrm{Sp}(2)
    \ar[dd]^-{\scalebox{0.6}{$
      h_{\mathbb{H}} \sslash \mathrm{Sp}(2)
      $}
      \mathrlap{
        \mbox{
          \tiny
          \color{darkblue}
          \bf
          \begin{tabular}{c}
            $\mathrm{Sp}(2)$-parametrized
            \\
            quaternionic Hopf fibration
          \end{tabular}
        }
      }
    }
    \\
    \\
    &
    S^4 \!\sslash\! \widehat{\mathrm{Sp}(2)}
    \ar@{}[ddrr]|-{
      \mbox{
        \tiny
        (pb)
      }
    }
    \ar[dd]_-{ \widehat \rho_{S^4} }
    \ar[rr]
    &&
    S^4 \!\sslash\! \mathrm{Sp}(2)
    \ar[dd]^-{ \rho_{S^4} }
    \\
    \\
    &
    \mathllap{
    \mbox{
      \tiny
      \color{darkblue}
      \bf
      \begin{tabular}{c}
        classifying space for
        \\
        Fivebrane-extended
        $\mathrm{Sp}(2)$-structure
      \end{tabular}
    }
    }
    B \widehat {\mathrm{Sp}(2)}
    \ar[rr]_-{\scalebox{0.7}{$ \mathrm{hofib}(\rchi_8)$} }
    &&
    B \mathrm{Sp}(2)
    \mathrlap{
    \mbox{
      \tiny
      \color{darkblue}
      \bf
      \begin{tabular}{c}
        classifying space
        for $\mathrm{Sp}(2)$-structure
        \\
        (BPS M2-brane backgrounds)
      \end{tabular}
    }
    }
    &
  }
  }
\end{equation}
\vspace{-.4cm}

Notice, by Example \ref{ClassifyingSpaceOfHigherExtendedQuaternionUnitaryGroup},
that the minimal relative Sullivan model for the
$\widehat{\mathrm{Sp}(2)}$-parametrized quaternionic Hopf fibration
on the left of \eqref{PullbackToUniversalSpaceOnWhichChi8Trivializes}
is just like that of the $\mathrm{Sp}(2)$-parametrized
Hopf fibration on the right of \eqref{PullbackToUniversalSpaceOnWhichChi8Trivializes},
as given in \eqref{DiagramExhibitinAnomalyFunctionalAsHopfInvariant},
except that all entries have the generator
$\theta_7$ adjoined, with $d \theta_7 = \rchi_8$
(universal rational Fivebrane-structure).

\end{defn}

\begin{remark}[The Hopf-WZ term in terms of Fivebrane $\widehat{\mathrm{Sp}(2)}$-structure ]
 \label{TheHopfWZTermOverBFivebrane}
 In terms of the $\widehat{\mathrm{Sp}(2)}$-parametrized
 quaternionic Hopf fibration (Def. \ref{QuaternionicHopfFibrationParametrizedOverHigherExtendedSp2}),
 the content of Theorem \ref{AnomalyFunctionalAsLiftInCohomotopy}
 becomes the following transparent statement:

\noindent {\bf (i)} By the assumption \eqref{EulerFormAsPfaffian}
 that the Euler 8-class of $X^8$ is equipped with a
 trivialization, hence that we have
 $\widehat{\mathrm{Sp}(2)}$-structure $\widehat \tau$
 \eqref{hatSp2Structure}
 on $X^8$, we may pull back the situation in
 \eqref{DiagramExhibitinAnomalyFunctionalAsHopfInvariant}
 along
 ${ B \widehat{\mathrm{Sp}(2)} \xrightarrow{\mathrm{hofib}(\rchi_8)}  B \mathrm{Sp}(2) }$
 and regard the twisted Cohomotopy classes $c$
 and $\widehat{ c \circ \widetilde f }$ as having local coefficients
 in the $\widehat{\mathrm{Sp}(2)}$-parametrized quaternionic Hopf fibration
 from Def. \ref{QuaternionicHopfFibrationParametrizedOverHigherExtendedSp2}.

\noindent {\bf (ii)} In this formulation, Theorem \ref{AnomalyFunctionalAsLiftInCohomotopy},
 says equivalently that there is a single rational 7-class
 on the total space of the universal $\widehat{\mathrm{Sp}(2)}$-parametrized
 7-sphere fibration, represented by the sum of the
 generators \eqref{TopRightomega7} and \eqref{HomotopyFiberSpace}:

 \vspace{-.6cm}
 \begin{equation}
   \label{UniversalRationalHopfWZFlux}
   \overset{
     \mathclap{
     \hspace{24pt}
     \rotatebox[origin=c]{34}{
       \tiny
       \color{greenii}
       \bf
       \def\arraystretch{.8}
       \begin{tabular}{c}
         universal
         \\
         Hopf-WZ term
       \end{tabular}
     }
     }
   }{
     2 \widetilde {\mathbf{S}}
   }
   \;:=\;
   \overset{
     \mathclap{
     \hspace{24pt}
     {
     \rotatebox[origin=c]{34}{
       \tiny
       \color{darkblue}
       \bf
       \def\arraystretch{.8}
       \begin{tabular}{c}
         fiberwise
         \\
         volume form
         on $S^7$
       \end{tabular}
     }
     }
     }
   }{
     \widetilde \omega_{\, 7}
   }
   \;\;
     +
   \;\;
   \overset{
     \mathclap{
     {
     \hspace{27pt}
     \rotatebox[origin=c]{34}{
       \tiny
       \color{darkblue}
       \bf
       \def\arraystretch{.8}
       \begin{tabular}{c}
         universal
         \\
         Fivebrane structure
       \end{tabular}
     }
     }
     }
   }{
     \theta_7
   }
   \;\;\;\;
   \in
   \;
   \mathrm{CE}
   \Big(
     \mathfrak{l}
     \big(
       S^7 \!\sslash\! \widehat{\mathrm{Sp}(2)}
     \big)
   \Big)
   \,,
   \;\;\;\;\;\;\;\;\;
   \big[
     \widetilde {\mathbf{S}}
   \big]
   \;\in\;
   H^7
   \Big(
     S^7 \!\sslash\! \widehat{\mathrm{Sp}(2)}
     ;
     \;
     \mathbb{R}
   \Big)
 \end{equation}
 which is the universal avatar of the Hopf-WZ term
 (Def. \ref{DefinitionOfWZWTermByFieldExtensionToCoboundary}, under
 $H^7_{\mathrm{dR}}\big( \widetilde \Sigma^7 \big) \,\simeq\, \mathbb{R}$):

\vspace{-6mm}
\begin{equation}
  \label{Rational7ClassOnE7}
  \hspace{-3mm}
  \xymatrix@C=12pt@R=16pt{
    \widetilde \Sigma^7
    \ar[rr]^-{\scalebox{0.7}{$
      \widehat{
        c \,\circ\, \tilde f
      }
      $}
    }_>>>{\ }="s"
    \ar[d]_-{
      \widetilde f
    }^>>>{\ }="t"
    &&
    L_{\mathbb{R}}
    \big(
      S^7 \!\sslash\! \widehat{\mathrm{Sp}(2)}
    \big)
    \mathrlap{\;,}
    \ar[d]|-{
      \mathclap{\phantom{\vert^{\vert^{\vert}}}}
     L_{\mathbb{R}}
      (
        h_{\mathbb{H}}
        \sslash
        \widehat {\mathrm{Sp}(2)}
      )
           \mathclap{\phantom{\vert_{\vert_{\vert}}}}
    }
    &&
    \Omega^\bullet_{\mathrm{dR}}
    \big(\,
      \widetilde \Sigma^7
    \big)
    \ar@{<-}[rr]^-{
      \overset{
        \mathclap{
          \raisebox{3pt}{
            \tiny
            \color{darkblue}
            \bf
            Hopf-WZ term
          }
        }
      }{
        2 \widetilde S
        (
          \widetilde H_3, G_4, G_7
        )
      }
      \;\;\mapsfrom\;\;
      \overset{
        \mathclap{
        \raisebox{3pt}{
          \tiny
          \color{greenii}
          \bf
          \begin{tabular}{c}
            universal
            \\
            Hopf-WZ term
          \end{tabular}
        }
        }
      }{
        \big[
        2{\widetilde {\mathbf{S}}}
        :=
        \widetilde \omega_{\, 7} + \theta_7
        \big]
      }
    }_>>>{\ }="s2"
    \ar@{<-}[d]_-{ \widetilde f^\ast }^>>>{\ }="t2"
    &&
    \left(
      {\begin{aligned}
      d\, \theta_7 & = \phantom{+} \rchi_8
      \\
      d\, \widetilde \omega_{\, 7} & = - \rchi_8
      \end{aligned}}
    \right)
    \ar@{<-}[d]
    \\
    X^8
    \ar[dr]_-{
      \underset{
        \mathllap{
          \mbox{
          \tiny
          \color{darkblue}
          \bf
          \begin{tabular}{c}
            Fivebrane
            \\
            structure
          \end{tabular}
          }
        }
      }{
        \widehat \tau
        \;\;\;
      }
    }^>>>>>>>>>{\ }="t3"
    \ar[rr]|-{
      \;\;c\;\;
    }^-{
      \mbox{
        \tiny
        \color{darkblue}
        \bf
        \begin{tabular}{c}
          cocycle in rational
          \\
          twisted 4-Cohomotopy
        \end{tabular}
      }
    }_>>>{\ }="s3"
    &&
    L_{\mathbb{R}}
    \big(
      S^4 \!\sslash\! \widehat{\mathrm{Sp}(2)}
    \big)
    \ar[dl]
    &&
    \Omega^\bullet_{\mathrm{dR}}\big(X^8\big)
    \ar@{<-}[rr]|-{
      \;\;
      \overset{
        \raisebox{3pt}{
          \tiny
          \color{darkblue}
          \bf
          fluxes
        }
      }{
      \scalebox{.7}{$
        \arraycolsep=1.4pt
        {\begin{array}{rcl}
          2G_7 - \Theta_7 &\mapsfrom& \omega_{\, 7}
          \\
          G_4 &\mapsfrom& \omega_4
          \\
          {\phantom{a}}
        \end{array}}
      $}
      }
      \;\;
    }
    \ar@{<-}[dr]
    &&
    \left(
    {\begin{aligned}
      d\, \theta_7 & = \rchi_8
      \\
      d\, \omega_{\, 7}
        & =
        - \omega_4 \wedge \omega_4
        + \big( \tfrac{1}{4}p_1\big)^2
        \\
        &
        \phantom{=}\,
        - \rchi_8
      \\
      d\, \omega_4 & = 0
    \end{aligned}}
    \right)
    \ar@{<-^{)}}[dl]
    \\
    &
    B \widehat{\mathrm{Sp}(2)}
    &&
    &&
    \left(
    {\begin{aligned}
      d\, \theta_7 & = \rchi_8
      \\
      d\, \rchi_8 & = 0
      \\
      d\, \tfrac{1}{2}p_1 & = 0
    \end{aligned}}
    \right)
    \ar@{=>}_{  } "s"; "t"
    \ar@{=>}^{ \widetilde H_3 } "t2"; "s2"
    \ar@{=>}_{  } "s3"; "t3"
  }
\end{equation}
\end{remark}

\noindent {\bf The plain Hopf invariant via homotopy Whitehead integral.}
Before analyzing the implications of
\hyperlink{HypothesisH}{\it Hypothesis H}
in the general twisted/shifted case
(Remark \ref{ShiftedFluxQuantizationCorrectionsToTheHopfWZTerm}),
we recall the
definition of the plain Hopf invariant
and show how this is computed by the untwisted Hopf-WZ term,
as a homotopy Whitehead integral.
\begin{defn}[Hopf invariant (e.g. {\cite[\S 4]{MosherTangora08}})]
  \label{TheHopfInvariant}
  For $k \in \mathbb{N}$ with $k \geq 1$, let
  \vspace{-2mm}
  \begin{equation}
    \label{MapPhiBetweenSpheres}
    \xymatrix{
      S^{4k-1}
      \ar[r]^-{ \phi }
      &
      S^{2k}
    }
  \end{equation}

    \vspace{-2mm}
\noindent  be a continuous function between higher dimensional spheres, as shown.
  Then the homotopy cofiber space of $\phi$ has integral cohomology given by
       \vspace{-2mm}
  $$
    H^p
    \big(
      \mathrm{cofib}
      (
        \phi
      ), \Z
    \big)
    \;\simeq\;
    \left\{
      \begin{array}{ccc}
        \mathbb{Z} &\vert& p \in \{ 2k , 4k\}
        \\
        0 &\vert& \mbox{otherwise}
      \end{array}
    \right.
     $$

    \vspace{-2mm}
\noindent
Hence, with generators denoted
\vspace{-2mm}
  $$
    \omega_{2k}
    \;:=\;
    \pm 1
    \in
    \mathbb{Z}
    \;\simeq\;
    H^{2k}
    \big(
      \mathrm{cofib}(\phi), \Z
    \big)
    \,,
    \phantom{A}
    \omega_{4k}
    \;:=\;
    \pm 1
    \in
    \mathbb{Z}
    \;\simeq\;
    H^{4k}\big(
      \mathrm{cofib}(\phi), \Z
    \big),
  $$
  there exists a unique integer
  \begin{equation}
    \label{HopfInvariant}
    \mathrm{HI}(\phi)
    \;\in\;
    \mathbb{Z}
    \,,
    \phantom{AA}
    \mbox{\rm such that}
    \phantom{AA}
    \omega_{2k} \cup \omega_{2k}
    \;=\;
    \mathrm{HI}(\phi)
    \cdot
    \omega_{4k}
  \end{equation}
  relating the cup-product square of the first to a multiple of the second.
  This integer is called the \emph{Hopf invariant} $\mathrm{HI}(\phi)$ of $\phi$.
  It depends on the choice of generators only up to a sign.
\end{defn}

We make the following basic observation:

\begin{lemma}[Recognition of Hopf invariants from Sullivan models]
  \label{RecognitionOfHopfInvariantFromSullivanModel}
  The unique coefficient in the minimal Sullivan model for a map $\phi$ of spheres
  as in \eqref{MapPhiBetweenSpheres}
  is the Hopf invariant $\mathrm{HI}(\phi)$ (Def. \ref{TheHopfInvariant}):
  \vspace{-2mm}
  \begin{equation}
    \label{CoefficientOfCEphiIsHopfInvariant}
    \xymatrix@R=-5pt@C=41pt{
      S^{4k-1}
      \ar[rrr]^-{ \phi }
      &&&
      S^{2k}
      \\
      \scalebox{0.85}{$
      \left(
      {\begin{aligned}
        d\, \omega_{4k-1} & = 0 \!\!\!\!\!\!\!
      \end{aligned}}
      \right)
      $}
      \ar@{<-}[rrr]|-{
        \tiny
        \;\;
        \begin{aligned}
          {\color{darkblue} \mathrm{HI}(\phi)}
          \cdot
          \omega_{4k-1}
          &
          \mapsfrom \omega_{4k-1}
          \\
          0 & \mapsfrom \omega_{2n}
          \\
        \end{aligned}
        \;\;
      }
      &&&
      \scalebox{0.85}{$
      \left(\!\!
      {\begin{array}{lcl}
        d \, \omega_{4k-1} & = &  - \omega_{2k} \wedge \omega_{2k}
        \\
        d \, \omega_{4k} & = &  0
      \end{array}}
      \!\!\right).
      $}
    }
  \end{equation}
\end{lemma}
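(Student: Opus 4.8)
The plan is to compute the minimal Sullivan model of $\phi$ explicitly through de Rham realizations and to recognize the resulting coefficient as the classical Whitehead integral, which is the Hopf invariant. First I would fix the minimal models $\mathrm{CE}(\mathfrak{l}S^{2k}) = (\Lambda(\omega_{2k}, \omega_{4k-1}),\, d\,\omega_{4k-1} = -\omega_{2k}\wedge\omega_{2k})$ and $\mathrm{CE}(\mathfrak{l}S^{4k-1}) = (\Lambda(\omega_{4k-1}),\, 0)$. Since $k \geq 1$ forces $2k < 4k-1$, the only generator of the target model sits in degree $4k-1$, so by degree reasons the induced morphism $\mathrm{CE}(\mathfrak{l}\phi)$ must send $\omega_{2k} \mapsto 0$ and $\omega_{4k-1} \mapsto \lambda\,\omega_{4k-1}$ for a single real coefficient $\lambda$; this is the unique coefficient of the statement, and the goal is to show $\lambda = \mathrm{HI}(\phi)$.

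Next I would pass to de Rham models. Choose a unit-volume representative $\omega \in \Omega^{2k}(S^{2k})$ of $\omega_{2k}$ and a unit-volume form $\mathrm{vol}_{S^{4k-1}} \in \Omega^{4k-1}(S^{4k-1})$. These realize the minimal models as quasi-isomorphisms $m_Y$ and $m_X$: note that $\omega^2$ vanishes on $S^{2k}$ by dimension, so the relation $d\,\omega_{4k-1} = -\omega_{2k}\wedge\omega_{2k}$ may be realized by setting $m_Y(\omega_{4k-1}) = 0$. Since $H^{2k}(S^{4k-1};\mathbb{R}) = 0$, write $\phi^\ast\omega = d\beta$ with $\beta \in \Omega^{2k-1}(S^{4k-1})$. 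The defining property of $\mathrm{CE}(\mathfrak{l}\phi)$ is that, over these realizations, the square built from $\phi^\ast$ commutes up to a dgc-algebra homotopy, i.e. up to a morphism $\eta^\ast$ into the path object $\Omega^\bullet(S^{4k-1}) \otimes \mathbb{R}[s,ds]$ interpolating between $\phi^\ast \circ m_Y$ (sending $\omega_{2k} \mapsto \phi^\ast\omega$, $\omega_{4k-1} \mapsto 0$) and $m_X \circ \mathrm{CE}(\mathfrak{l}\phi)$ (sending $\omega_{2k} \mapsto 0$, $\omega_{4k-1} \mapsto \lambda\,\mathrm{vol}_{S^{4k-1}}$).

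The computational heart is to construct this homotopy explicitly, exactly in the style of the proof of Theorem \ref{AnomalyFunctionalAsLiftInCohomotopy}. I would set $\eta^\ast(\omega_{2k}) = (1-s)\,\phi^\ast\omega - ds \wedge \beta$, which is closed in the path object and has the correct restrictions at $s=0,1$, and then let $\eta^\ast(\omega_{4k-1})$ be a $(4k-1)$-form vanishing at $s=0$, restricting to $\lambda\,\mathrm{vol}_{S^{4k-1}}$ at $s=1$, and with differential reproducing $-\eta^\ast(\omega_{2k})\wedge\eta^\ast(\omega_{2k})$. Because $(\phi^\ast\omega)^2 = 0$ and $(ds\wedge\beta)^2 = 0$ by dimension, the only surviving term is $-\eta^\ast(\omega_{2k})^2 = 2(1-s)\,ds \wedge \phi^\ast\omega \wedge \beta$; matching this against $d\,\eta^\ast(\omega_{4k-1})$ and integrating the $ds$-component over $[0,1]$, with $\int_0^1 2(1-s)\,ds = 1$, forces $\lambda = \int_{S^{4k-1}} \beta \wedge d\beta = \int_{S^{4k-1}} \beta \wedge \phi^\ast\omega$ (the wedge-order sign being trivial since $(2k-1)(2k)$ is even). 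This is precisely the classical Whitehead integral formula for the Hopf invariant, so by \cite{Whitehead47} and \cite[Prop.~17.22]{BT82} one concludes $\lambda = \mathrm{HI}(\phi)$.

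\textbf{Main obstacle.} The substantive step is the third one: verifying that the explicit assignment for $\eta^\ast$ is a genuine dgc-algebra homomorphism, in particular that it respects the quadratic relation $d\,\omega_{4k-1} = -\omega_{2k}\wedge\omega_{2k}$, which is where the normalizing factor producing $\int_0^1 2(1-s)\,ds = 1$ arises, and correcting $\eta^\ast(\omega_{4k-1})$ by a $ds\wedge Q$ term so that the top-degree integral extracts $\lambda$ cleanly. (Conceptually, $\lambda$ is the functional cup-square of the fundamental class, in the sense of Steenrod invoked in the introduction, which is why it must equal the Hopf invariant; the de Rham computation makes this identification self-contained.) Once $\lambda$ is exhibited as the Whitehead integral, the identification with $\mathrm{HI}(\phi)$ is immediate from the cited classical result.
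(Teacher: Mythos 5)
Your proof is correct, but it takes a genuinely different route from the paper's. The paper never touches de Rham forms here: it computes the minimal Sullivan model of the homotopy cofiber $\mathrm{cofib}(\phi)$ as a pullback of dgc-algebras (modelling the pushout $S^{2k} \leftarrow S^{4k-1} \rightarrow D^{4k}$), reads off the relation $d\,\omega_{4k-1} = -\omega_{2k}\wedge\omega_{2k} + h\cdot\omega_{4k}$ in the cofiber's model, and identifies $h$ with $\mathrm{HI}(\phi)$ by direct comparison with the cup-product definition \eqref{HopfInvariant} -- so the only external input is that Sullivan models compute non-torsion cohomology. You instead realize the models by differential forms, build an explicit homotopy $\eta^\ast$ into the path object $\Omega^\bullet(S^{4k-1})\otimes\mathbb{R}[s,ds]$ (your formulas check out: $\eta^\ast(\omega_{2k})=(1-s)\phi^\ast\omega - ds\wedge\beta$ is closed with the right boundary values, its negative square is $2(1-s)\,ds\wedge\phi^\ast\omega\wedge\beta$, and fiber-integrating the $ds$-component forces $\lambda=\int\beta\wedge d\beta$), and then invoke the classical Whitehead integral formula to conclude $\lambda=\mathrm{HI}(\phi)$. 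Both arguments are valid; the implicit step you should make explicit is that the coefficient $\lambda$ is a homotopy invariant of the dgc-morphism (immediate here since $\Lambda(\omega_{4k-1})$ has no decomposables in degree $4k-1$), so exhibiting one homotopy representative suffices. The trade-off: your route is continuous with the path-object technique of Theorem \ref{AnomalyFunctionalAsLiftInCohomotopy} and makes the Whitehead integral visible inside the lemma itself, but it imports \cite{Whitehead47}\cite[Prop. 17.22]{BT82} as an input, whereas the paper's cofiber argument is what allows Remark \ref{WhiteheadIntegralFormulasInTheLiterature} to present the homotopy Whitehead formula as a \emph{consequence} of this lemma rather than an ingredient of it.
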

\begin{proof}
  By cofibrant replacement in the classical model structure on
  topological spaces (e.g. \cite[Ex. A.7]{FSS20b}),
  the homotopy cofiber is represented by the ordinary pushout of topological spaces
  as shown in the following diagram on the left.
  By the fundamental theorem of dg-algebraic rational homotopy theory
  \cite[Thm. 9.4]{BousfieldGugenheim76} (see \cite[Prop. 3.60]{FSS20b}),
  this is dg-algebraically represented by the
  pullback of dgc-algebras as shown on the right,
  which is computed as a fiber product
  of underlying cochain complexes (e.g. \cite[Prop. 3.41]{FSS20b}):

  \vspace{-.5cm}
  \begin{equation}
    \label{HopfRat}
    \hspace{-3mm}
  \raisebox{40pt}{
  \xymatrix@C=8pt{
   &
    \mathrm{cofib}(\phi)
    \ar@{}[dd]|-{ \mbox{\tiny (po)} }
    \\
    S^{2k}
    \ar[ur]
    &&
    D^{4k}
    \ar[ul]
    \\
    &
    S^{4k-1}
    \ar[ul]^-{\phi}
    \ar[ur]_-{  }
  }
  }
  \hspace{.7cm}
  \raisebox{78pt}{
  \xymatrix@C=10pt@R=2em{
    &
    \scalebox{.85}{$
    \left(
    \begin{aligned}
      d\, \omega_{4k-1}
        & =\;
        - \omega_{2k} \wedge \omega_{2k}
        \\
        & \phantom{=}\;\;
        +
        {\color{darkblue} h} \cdot \omega_{4k}
      \\
      d\, \omega_{4k\phantom{-1}} & =\; 0
      \\
      d\, \omega_{2k\phantom{-1}} & =\; 0
      \\
      \omega_{2k}
        \wedge
      \omega_{4k}
        & = \; 0
    \end{aligned}
    \right)
    $}
    \ar@{}[dd]|-{ \mbox{\tiny \rm (pb)} }
    \ar[dl]_>>>>>{
      \mbox{
        \tiny
        $
        \arraycolsep=1.4pt
        {\begin{array}{lcl}
          \omega_{2k}
           & \mapsfrom &
          \omega_{2k}
          \\
          \omega_{4k-1}
            & \mapsfrom &
          \omega_{4k-1}
          \\
          0
            & \mapsfrom &
          \omega_{4k}
        \end{array}}
        $
      }
    }
    \ar[dr]^>>>>>{
      \mbox{
        \tiny
        $
        \arraycolsep=1.4pt
        {\begin{array}{lcl}
          \omega_{2k} & \mapsto & 0
          \\
          \omega_{4k-1} & \mapsto & {\color{darkblue} h} \cdot \omega_{4k-1}
          \\
          \omega_{4k} & \mapsto & \omega_{4k}
        \end{array}}
        $
      }
    }
    \\
    \scalebox{.85}{$
    \left(
      \!\!\!
      {\begin{array}{lcl}
        d\,\omega_{4k-1}
         & \!\!\!\! = & \!\!\!\!
        - \omega_{2k} \wedge \omega_{2k}
        \\
        d\,\omega_{2k} & \!\!\!\! = & \!\!\!\! 0
      \end{array}}
      \!\!\!
    \right)
    $}
    \ar[dr]_{
      \mbox{
        \tiny
        $
        \arraycolsep=1.4pt
        {\begin{array}{lcl}
          \omega_{2k}
            & \mapsto &
          0
          \\
          \omega_{4k-1}
            & \mapsto &
          {\color{darkblue}{h}}
          \cdot
          \omega_{4k-1}
        \end{array}}
        $
      }
    }
    &&
    \scalebox{.85}{$
    \left(
      \!\!\!
      {\begin{array}{ll}
        d\,\omega_{4k-1} & \!\!\!\!= \omega_{4k}
        \\
        d\,\omega_{4k} & \!\!\!\! = 0
      \end{array}}
      \!\!\!
    \right)
    $}
    \ar[dl]^<<<<<<<{
      \mbox{ \;\;
        \tiny
        $
        \arraycolsep=1.4pt
        {\begin{array}{lcl}
          \omega_{4k-1}
            &  \mapsfrom &
          \omega_{4k-1}
          \\
          0
            & \mapsfrom &
          \omega_{4k}
        \end{array}}
        $
      }
    }
    \\
    &
    \scalebox{.85}{$
    \big(
      d\,\omega_{4k-1}  = 0
    \big)
    $}
    &
    {\phantom{AAAAAAAAAAAAAAA}}
  }
  }
\end{equation}

\vspace{-1mm}
\noindent
 The first differential relation shown in the fiber product dgc-algebra
 implies on cochain cohomology the relation
 $
 [\omega_2 \wedge \omega_2] \,=\, {\color{darkblue}h} \cdot \big[ \omega_4 \big]
 .
 $
  By the fact that the cochain cohomology of $A$
  computes the rational cohomology groups
  of $\mathrm{cofib}(\varphi)$
  (see \cite[Prop. 3.57]{FSS20b}),
  comparison with \eqref{HopfInvariant} shows that $h = \mathrm{HI}(\phi)$.
\end{proof}

Using this, we obtain the following consequence of Theorem \ref{AnomalyFunctionalAsLiftInCohomotopy}:

\begin{prop}[Recovering the homotopy Whitehead formula]
  \label{GWIIsHomotopyInvariantOfBackgroundFieldData}
    In the situation of Def. \ref{BasicSetup},
    consider the special case where:
    \vspace{-1mm}
       \item
  {\bf (i)}
    the background C-field \eqref{SpacetimeFormAssumption} satisfies
              \hyperlink{HypothesisH}{\it Hypothesis H}
       (Def. \ref{BackgroundFieldsSatisfyingHypothesisH});
       \vspace{-1mm}
      \item {\bf (ii)} the extended worldvolume is the 7-sphere
        $\Sigma := \widetilde \Sigma^7 := S^7$
        (as in Lemma \ref{GaugebleSigmaModelMapsOn7Sphere}
        and Example \ref{CoboundariesForS3xS3});
        \vspace{-1mm}
      \item {\bf (iii)}
        {the $\mathrm{Spin}(8)$-bundle over $X$ is trivial, as well as the $\mathrm{Spin}(8)$-connection $\nabla$, and the trivial trivialization $\Theta_7=0$ of $\chi_8(\nabla)$ \eqref{EulerFormAsPfaffian} is chosen.
        }

    \noindent
    Then twice the Hopf-WZ anomaly functional $2\widetilde S$
    (Def. \ref{TheAnomalyFunctional}, Lemma \ref{AnomalyFunctionalIsHomotopyInvariant})
    is equal to the Hopf invariant
    $\mathrm{HI}\big( c \circ \widetilde f \;\big)$
    (Def. \ref{TheHopfInvariant})
    of the composite
          \vspace{-2mm}
    $$
      \xymatrix{
        S^7
        \ar[r]^-{ \widetilde f }
        &
        X
        \ar[r]^-{ c }
        &
        S^4
      }
    $$

    \vspace{-2mm}
\noindent
    of the extended sigma-model field
    $\widetilde f$ \eqref{GaugedExtendedSigmaModelFields}
    with the (untwisted)
    Cohomotopy cocycle $c$ \eqref{CohomotopyCocycle}
    that classifies the background fields:
    \vspace{-1mm}
    \begin{equation}
      \label{HomotopyWhiteheadComputesHopfInvariant}
      2\widetilde S
      \big(
        \widetilde f, \widetilde H_3
      \big)
      \;=\;
      \int_{{}_{S^7}}
      \Big(
        \widetilde H_3 \wedge \widetilde f^\ast G_4
        +
        \widetilde f^\ast 2G_4
      \Big)
      \;=\;
      \mathrm{HI}
      \big(
        c \circ \widetilde f \;
      \big)
      \;\in\;
      \mathbb{Z}\;.
    \end{equation}
\end{prop}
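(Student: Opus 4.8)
The plan is to obtain the claim from Theorem~\ref{AnomalyFunctionalAsLiftInCohomotopy}(ii) specialized to the three hypotheses, and then to recognize the resulting integral as the Hopf invariant via the minimal Sullivan model, using Lemma~\ref{RecognitionOfHopfInvariantFromSullivanModel}. First I would specialize: by hypothesis (iii) we have $p_1(\nabla)=0$, the Euler form $\rchi_8(\nabla)$ vanishes and $\Theta_7=0$, so the twist in \eqref{CohomotopyCocycle} is trivial, the parametrized quaternionic Hopf fibration reduces to the plain $h_{\mathbb{H}}\colon S^7\to S^4$, and the generator of \eqref{TopRightomega7} becomes an honest closed, unit-volume $7$-form $\omega_7$ with $\langle\omega_7,S^7\rangle=1$. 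With $\Theta_7=0$ the correction term drops out of Theorem~\ref{AnomalyFunctionalAsLiftInCohomotopy}(ii), leaving $2\widetilde S(\widetilde f,\widetilde H_3)=\int_{S^7}\big(\widehat{c\circ\widetilde f}\,\big)^{\!\ast}\omega_7$; meanwhile the first equality in \eqref{HomotopyWhiteheadComputesHopfInvariant} is just Definition~\ref{TheAnomalyFunctional}, i.e. \eqref{DifferenceBetweenAnyTwoExtendedActionFunctionals} read with $G_4+\tfrac14 p_1(\nabla)=G_4$. So everything reduces to the identity $\int_{S^7}(\widetilde H_3\wedge\widetilde f^\ast G_4+\widetilde f^\ast 2G_7)=\mathrm{HI}(c\circ\widetilde f)$.

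Next I would realize this integral as a homotopy Whitehead integrand. By hypothesis (i) and the trivial twist, $c$ is represented by an honest map $c\colon X\to S^4$, so $\phi:=c\circ\widetilde f\colon S^7\to S^4$ is a genuine map of spheres with $\mathrm{HI}(\phi)$ defined (Def.~\ref{TheHopfInvariant}), and by Remark~\ref{BackgroundFieldsAsRationalCohomotopy} we have $\widetilde f^\ast G_4=\phi^\ast\omega_4$, $\widetilde f^\ast 2G_7=\phi^\ast\omega_7$ and $d\widetilde H_3=\phi^\ast\omega_4$. Since $d(\widetilde f^\ast 2G_7)=-\phi^\ast(\omega_4\wedge\omega_4)=0$ on the $7$-manifold $S^7$ and $d(\widetilde H_3\wedge\widetilde f^\ast G_4)=\phi^\ast(\omega_4\wedge\omega_4)$, the integrand is a closed $7$-form; its integral is unchanged under $\widetilde H_3\mapsto\widetilde H_3+d\delta$ (by Stokes) and, by the computation of Lemma~\ref{AnomalyFunctionalIsHomotopyInvariant}, under homotopies of $\phi$. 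Thus it is a (rational-)homotopy invariant of $\phi$.

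Then I would evaluate this invariant by moving the dgc-algebra cocycle $(\phi^\ast\omega_4,\phi^\ast\omega_7)$ to its minimal normal form. By Lemma~\ref{RecognitionOfHopfInvariantFromSullivanModel} the minimal Sullivan model of $\phi$ sends $\omega_4\mapsto 0$ and $\omega_7\mapsto\mathrm{HI}(\phi)\cdot\omega_7^{S^7}$; since $\langle\omega_7^{S^7},S^7\rangle=1$, evaluating the homotopy-invariant integral on this normal form (where $\widetilde H_3$, hence the first summand, may be taken to vanish) gives $\int_{S^7}\mathrm{HI}(\phi)\cdot\mathrm{vol}_{S^7}=\mathrm{HI}(\phi)$. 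Chaining the equalities gives $2\widetilde S(\widetilde f,\widetilde H_3)=\mathrm{HI}(c\circ\widetilde f)\in\mathbb{Z}$. Equivalently, when the sigma-model field itself satisfies \hyperlink{HypothesisH}{\it Hypothesis H}, the lift $\widehat{c\circ\widetilde f}$ is an honest self-map $g$ of $S^7$, the integral $\int_{S^7}\big(\widehat{c\circ\widetilde f}\,\big)^{\!\ast}\omega_7$ is the integer $\deg(g)$, and multiplicativity $\mathrm{HI}(h_{\mathbb{H}}\circ g)=\deg(g)\,\mathrm{HI}(h_{\mathbb{H}})$ with $\mathrm{HI}(h_{\mathbb{H}})=1$ yields the same value.

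The step I expect to be the main obstacle is this last normalization bookkeeping. The delicate point is that the rational generator $\omega_7$ of the Sullivan model of $S^4$ is \emph{not} represented by any volume form on $S^4$ — every closed $7$-form on the $4$-manifold $S^4$ vanishes — so $\mathrm{HI}(\phi)$ cannot be read off by pulling back a volume form along $\phi$; the class $\widetilde f^\ast 2G_7$ realizing $\omega_7$ is genuinely the trivialization of $G_4\wedge G_4$, and it is precisely the \emph{full} homotopy Whitehead/Haefliger combination $\widetilde H_3\wedge\widetilde f^\ast G_4+\widetilde f^\ast 2G_7$ (both summands together) that is closed and computes the degree. I would therefore take care to match the coefficient $1$ in the Sullivan model of $h_{\mathbb{H}}$ against the unit-volume normalization $\langle\omega_7^{S^7},S^7\rangle=1$, so that the final value lands on the integer $\mathrm{HI}(c\circ\widetilde f)$ on the nose rather than a nonzero multiple of it.
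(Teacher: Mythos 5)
Your proposal is correct and follows essentially the same route as the paper: specialize Theorem \ref{AnomalyFunctionalAsLiftInCohomotopy} to the untwisted case ($p_1(\nabla)=0$, $\Theta_7=0$), so that $2\widetilde S$ becomes the degree of the rational lift $\widehat{c\circ\widetilde f}$ through the plain quaternionic Hopf fibration, and then identify that number with $\mathrm{HI}(c\circ\widetilde f)$ via the Sullivan-model recognition of the Hopf invariant (Lemma \ref{RecognitionOfHopfInvariantFromSullivanModel}). Your extra care about the normalization $\langle\omega_7,S^7\rangle=1$ and the observation that only the full combination $\widetilde H_3\wedge\widetilde f^\ast G_4+\widetilde f^\ast 2G_7$ is closed are exactly the points the paper's diagram encodes implicitly.
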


\newpage

\begin{proof}
  Under the given assumptions,
  diagram \eqref{DiagramExhibitinAnomalyFunctionalAsHopfInvariant}
  in Theorem \ref{AnomalyFunctionalAsLiftInCohomotopy}
  reduces to the following:
    \vspace{-2mm}
    \begin{equation}
      \label{TheWhiteheadIntegralFromTheHopfWZTerm}
      \raisebox{50pt}{
      \xymatrix@C=3.5em{
        \mathllap{
          \widetilde \Sigma^7
          =
          \;
        }
        S^7
        \ar[dd]_-{ \widetilde f }
        \ar[rr]^-{
          {\color{darkblue}
            \mathrm{HI}
            (
              c \circ \widetilde f
            \;)
          }
        }_>>>>>{\ }="s"
        &&
        S^7
        \ar[dd]^-{ h_{\mathbb{H}} }
        &&
        \Omega_{\mathrm{dR}}^\bullet
        (
          S^7
        )
        \ar@{<-}[rr]^-{
          \mbox{
            \tiny
            $
            \arraycolsep=2.2pt
            \begin{array}{lcl}
              {\color{darkblue}2\widetilde S}
                \cdot
              \mathrm{vol}_{S^7}
              &\!\!\!\!\mapsfrom\!\!\!\!&
              \widetilde \omega_{\, 7}
            \end{array}
            $
          }
        }_>>>>>{\ }="s2"
        \ar@{<-}[drr]_-{
          \mbox{
            \tiny
            $
            \arraycolsep=2.2pt
            \begin{array}{rcl}
              \widetilde H_3 &\!\!\!\!\mapsfrom\!\!\!\!& h_3
              \\
              \widetilde f^\ast G_4 &\!\!\!\!\mapsfrom\!\!\!\!& \omega_4
              \\
              \widetilde f^\ast 2G_7 &\!\!\!\!\mapsfrom\!\!\!\!& \omega_{\, 7}
            \end{array}
            $
          }
        }^>>>>>{\ }="t2"
        \ar@{<-}[dd]_{
          \widetilde f^\ast
        }
        &&
        \left(
          \!\!\!
          {\begin{array}{lcl}
            d\,\widetilde \omega_{\, 7} &=& 0
          \end{array}}
          \!\!\!
        \right)
        \ar@{<-}[d]_-{\simeq}^-{
          \mbox{
            \tiny
            $
            \arraycolsep=1.4pt
            \begin{array}{ccc}
              0 & 0 & \widetilde \omega_{\, 7}
              \\
              \mapsup & \mapsup & \mapsup
              \\
              h_3 & \omega_4 & \omega_{\, 7}
            \end{array}
            $
          }
        }
        \\
               &&
        &&
        &&
        \left(
          \!\!\!
          {\begin{array}{lcl}
            d\,h_3 & =& \omega_4
            \\
            d\,\omega_4 &=& 0
            \\
            d\,\omega_{\, 7} &=& -\omega_4 \wedge \omega_4
          \end{array}}
          \!\!\!
        \right)
        \ar@{<-_{)}}[d]^-{
          \mbox{
            \tiny
            $
            \arraycolsep=1.4pt
            \begin{array}{cc}
              \omega_4 & \omega_{\, 7}
              \\
              \mapsup & \mapsup
              \\
              \omega_4 & \omega_{\, 7}
            \end{array}
            $
          }
        }
        \\
        X
        \ar[rr]_-{ c }^<<<<<<{\ }="t"
        &&
        S^4
        &&
        \Omega_{\mathrm{dR}}^\bullet(X)
        \ar@{<-}[rr]^-{
          \mbox{
            \tiny
            $
            \arraycolsep=2.2pt
            \begin{array}{lcl}
              G_4 &\!\!\!\mapsfrom\!\!\!& \omega_4
              \\
              G_7 &\!\!\!\mapsfrom\!\!\!& \omega_{\, 7}
            \end{array}
            $
          }
        }
        &&
        \left(
          \!\!\!
          {\begin{array}{lcl}
            d\,\omega_4 &=& 0
            \\
            d\,\omega_{\, 7} &=& - \omega_4 \wedge \omega_4
          \end{array}}
          \!\!\!
        \right)
        \ar@{=>}^-{\simeq}_-{ \widetilde H_3 }  "s"; "t"
        \ar@{<=}^-{\simeq}_-{ \eta^\ast } "s2"; "t2"
      }
      }
    \end{equation}

    \vspace{-2mm}
\noindent     Now observe (e.g., \cite[Lem. 3.18]{FSS19b})
    that, with the given normalization of the differential
    relation $d \omega_{\, 7} = - \omega_4 \wedge \omega_4$ on the right,
    the generator $\omega_{\, 7}$ is (maps to) the rational image of the
    {\it integral} generator
    $1 \,\in\, \mathbb{Z} \,\simeq\, H^7\big( S^7;\, \mathbb{Z}\big)$.
    Since in $\Omega^\bullet_{\mathrm{dR}}\big( S^7 \big)$ this integral
    generator is represented by the volume form
    (instead of some multiple of it),
    Lemma \ref{RecognitionOfHopfInvariantFromSullivanModel}
    applied to the diagonal map in
    \eqref{TheWhiteheadIntegralFromTheHopfWZTerm}
    implies the claim \eqref{HomotopyWhiteheadComputesHopfInvariant}.
\end{proof}

\begin{remark}[Whitehead integral formulas in the literature]
\label{WhiteheadIntegralFormulasInTheLiterature}
The statement of Prop. \ref{GWIIsHomotopyInvariantOfBackgroundFieldData}
is essentially that of
\cite[p. 17]{Haefliger78}
\cite[\S 14.5]{GrMo13},
the integrand being the \emph{functional cup product}-expression of \cite{Steenrod49},
recalled as a \emph{homotopy period}-expression in
\cite[Ex. 1.9]{SinhaWalter08};
but the proof as a special case
of Theorem \ref{AnomalyFunctionalAsLiftInCohomotopy}
is new and more conceptual.
In the special case that $G_7 = 0$
(which is given if
the classifying map $X \overset{c }{\to} S^4$ \eqref{CohomotopyCocycle}
is a smooth function)
the statement of Prop. \ref{GWIIsHomotopyInvariantOfBackgroundFieldData}
further reduces to that of the classical \emph{Whitehead integral formula}
for the Hopf invariant
\cite{Whitehead47} (see \cite{Haefliger78}\cite[Prop. 17.22]{BT82}).
    \end{remark}

Our second Theorem
generalizes this integral situation to
arbitrary oriented differences $\widetilde \Sigma^7$
of extended worldvolumes
and to non-trivial topological twists:

\begin{theorem}[6d Hopf-WZ anomaly functional is integral]
  \label{AnomalyIsIntegral}
Let target space $X = X^8$ be
an M2-brane background for M-theory on 8-manifolds
(Example \ref{BackgroundFieldsAsRationalCohomotopy}),
hence a smooth spin 8-manifold
equipped with tangential $\mathrm{Sp}(2)$-structure
\eqref{TangentialSp2Structure}
and with vanishing Euler class \eqref{hatSp2Structure}.

Then
\hyperlink{HypothesisH}{\it Hypothesis H}
(Def. \ref{BackgroundFieldsSatisfyingHypothesisH})
implies that
the general 6d Hopf-Wess-Zumino anomaly functional of the M5-brane
(Def. \ref{TheAnomalyFunctional}, Lemma \ref{AnomalyFunctionalIsHomotopyInvariant})
takes values in the integers
\vspace{-3mm}
  \begin{equation}
    \label{GWIFunctionInt}
    \xymatrix@R=-6pt@C=4em{
      \pi_0
      \Big(
        \mathrm{Maps}_{\mathrm{smth}}^{\mathrm{ggd}}
        \big(\,
          \widetilde \Sigma^7
          ,
          X
        \big)
      \Big)
      \ar@/^1.4pc/[rrr]^<<<<<<<<<<<<<<<<<<<<<<{ {\color{darkblue}2\widetilde S} }
      \ar[rr]^{
      }
      &&
     \quad \mathbb{Z}
      \quad \ar@{^{(}->}[r]
      &
      \;\; \mathbb{R}
      \\
      \big[
        \widetilde f, \widetilde H_3
      \big]
      \ar@{}[rr]|-{\quad \longmapsto}
      &&
      \underset{\widehat \Sigma^7 }{\bigintsss}
      \left(
        \widetilde H_3
          \wedge
        \widetilde f^\ast
        \big(
          G_4
          +
          \tfrac{1}{4}
          p_1(\nabla)
        \big)
        +
        \widetilde f^\ast 2G_7
      \right)
    }
  \end{equation}

    \vspace{-2mm}
\noindent
  and hence that the exponentiated action
  \eqref{ConsistencyCondition} of
  the 6d WZ term of the M5-brane (Def. \ref{DefinitionOfWZWTermByFieldExtensionToCoboundary})
  is well-defined.
\end{theorem}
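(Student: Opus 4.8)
The plan is to upgrade the \emph{rational} period formula of Theorem \ref{AnomalyFunctionalAsLiftInCohomotopy} to an \emph{integral} statement, using that \hyperlink{HypothesisH}{\it Hypothesis H} promotes the classifying lift from the rationalized to the actual parametrized quaternionic Hopf fibration. Theorem \ref{AnomalyFunctionalAsLiftInCohomotopy} already supplies
\[
  2\widetilde S\big(\widetilde f, \widetilde H_3\big)
  \;=\;
  \int_{\widetilde\Sigma^7}\big(\widehat{c\circ\widetilde f}\,\big)^{*}\omega_7
  \;+\;
  \int_{\widetilde\Sigma^7} f^{*}\Theta_7 ,
\]
and, as recorded in the Interpretation following that theorem, the Euler-trivialization $\Theta_7$ from \eqref{EulerFormAsPfaffian} means this lift is genuinely a lift $\varphi$ into the Euler-trivialized total space $E^7_{\mathbb{R}}$ (obtained by adjoining a generator $\theta_7$ with $d\theta_7 = \chi_8$), on which $\omega_7 + \theta_7$ is \emph{closed} and $\varphi^{*}\theta_7 = f^{*}\Theta_7$, so that $2\widetilde S = \int_{\widetilde\Sigma^7}\varphi^{*}(\omega_7 + \theta_7)$. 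First I would record this rewriting, which turns $2\widetilde S$ into a single de Rham period of an honest cohomology class on $E^7_{\mathbb{R}}$; integrality then becomes the assertion that this class lies in the image of integral cohomology and is being paired with the integral fundamental class of the closed oriented $7$-manifold $\widetilde\Sigma^7$.

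Next I would invoke \hyperlink{HypothesisH}{\it Hypothesis H} (Def. \ref{BackgroundFieldsSatisfyingHypothesisH}) in the form that $\varphi$ factors, up to homotopy, through the \emph{actual} (non-rationalized) parametrized quaternionic Hopf fibration $S^7 \!\sslash\! \big(\mathrm{Sp}(2)\boldsymbol{\cdot}\mathrm{Sp}(1)\big)$, so that together with the genuine Euler-trivialization carried by the $\tau$-structure the map $\varphi$ is the rationalization of an honest map $\varphi_{\mathbb{Z}}\colon \widetilde\Sigma^7 \to E^7_{\mathbb{Z}}$ into the integral Euler-trivialized total space. The de Rham class $[\omega_7 + \theta_7]$ is then to be identified with the real image of an integral class $\mu \in H^7\big(E^7_{\mathbb{Z}};\mathbb{Z}\big)$: on each fiber $S^7$ the form $\omega_7$ restricts to the unit volume form \eqref{TopRightomega7}, which is the real image of the generator of $H^7(S^7;\mathbb{Z}) \cong \mathbb{Z}$, and adjoining $\theta_7$ is exactly what closes this fiberwise generator into a \emph{global} class on $E^7_{\mathbb{Z}}$. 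Granting this, the de Rham period of a representative of an integral class over a closed oriented manifold equals the cohomological pairing, whence
\[
  2\widetilde S\big(\widetilde f, \widetilde H_3\big)
  \;=\;
  \big\langle \varphi_{\mathbb{Z}}^{*}\mu,\, [\widetilde\Sigma^7]\big\rangle
  \;\in\;
  \mathbb{Z} ,
\]
establishing \eqref{GWIFunctionInt}; since moreover $N(N+1)$ is even for all $N$, this yields the well-definedness \eqref{ConsistencyCondition} of the exponentiated 6d Wess-Zumino action.

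The hard part will be precisely the integral-cohomology input just used: that the fiberwise volume generator of $H^7(S^7;\mathbb{Z})$ extends, after the Euler-correction by $\theta_7$, to a genuine integral class $\mu$ on the actual total space $E^7_{\mathbb{Z}}$ with real image $[\omega_7 + \theta_7]$. I would establish this by running the integral Serre spectral sequence of the fibration $S^7 \to E^7_{\mathbb{Z}} \to B\big(\mathrm{Sp}(2)\boldsymbol{\cdot}\mathrm{Sp}(1)\big)$: the only possible obstruction to the survival of the degree-$7$ fiber class is its transgression, which is the Euler class $\chi_8$, and this is cancelled by the very construction of $E^7_{\mathbb{Z}}$, so the fiber generator lifts to an integral class in total degree $7$. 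As a consistency check, in the untwisted special case $\widetilde\Sigma^7 = S^7$ this integral class is the cup-square generator that detects the Hopf invariant, recovering Prop. \ref{GWIIsHomotopyInvariantOfBackgroundFieldData} via Lemma \ref{RecognitionOfHopfInvariantFromSullivanModel}.
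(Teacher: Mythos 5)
Your strategy is correct and reaches the paper's conclusion, but via a genuinely different fibration than the paper uses. The paper works with the $3$-spherical fibration $S^3\to E^7\xrightarrow{h_{\mathbb{H}}\sslash G}E^4$: it first establishes, via the Gysin sequence of $S^4\to S^4\sslash G\to BG$ and the torsion-freeness of $H^\bullet(BG;\mathbb{Z})$, that $H^\bullet(S^4\sslash G;\mathbb{Z})$ is torsion-free and that the integral class $\Gamma_4^{\mathrm{int}}\cup\Gamma_4^{\mathrm{int}}-\tfrac12 p_1\cup\Gamma_4^{\mathrm{int}}$ vanishes on $E^4$; it then reads off from exactness of the integral Gysin sequence of the $S^3$-fibration an integral $7$-class $2\widetilde{\mathbf{S}}$ with $\int_{S^3}2\widetilde{\mathbf{S}}=\Gamma_4^{\mathrm{int}}$, and matches it rationally with the integrand $h_3\wedge\Gamma_4^{\mathrm{int}}+\omega_7+\theta_7$ because $h_3$ is the fiberwise volume form of the $S^3$-fibration and there is no non-trivial $7$-class on $E^4$. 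You instead fiber-integrate over the $S^7$-fibers, producing the integral class directly from the vanishing of the transgressed Euler class; this is shorter and bypasses the torsion and $E^4$ bookkeeping entirely, at the price of losing the explicit relation of the integral class to $\Gamma_4^{\mathrm{int}}$ (i.e.\ to the shifted flux quantization), which the paper's route exhibits. Two points need tightening. First, the $7$-spherical fibration for which you run the Serre spectral sequence must have base the Euler-trivialized space $\widehat{BG}=\mathrm{hofib}(\rchi_8)$ of \eqref{HomotopyFiberSpace}, not $B\big(\mathrm{Sp}(2)\boldsymbol{\cdot}\mathrm{Sp}(1)\big)$ as written: over $BG$ itself the transgression of the fiber class is $\rchi_8\neq 0$ and the class does not survive; only after pulling back along $\mathrm{hofib}(\rchi_8)$ does the Euler class die, making $\int_{S^7}\colon H^7(E^7;\mathbb{Z})\to H^0(\widehat{BG};\mathbb{Z})$ surjective. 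Second, having produced some integral $\mu$ restricting to the fiber generator, you must still argue that its real image equals $[\omega_7+\theta_7]$ rather than differing from it by a class pulled back from the base; this holds because $H^7(\widehat{BG};\mathbb{R})=0$ (the only degree-$7$ element of its Sullivan model is $\theta_7$, which is not closed), exactly parallel to the paper's vanishing step in \eqref{OverSpacetime3SphereRationalGysinSequence}. With these two repairs your argument is complete and, if anything, more economical than the printed proof.
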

\begin{proof}
Recall that,

\noindent
\begin{itemize}
\vspace{-.2cm}
\item[{\bf (a)}]
by Theorem \ref{AnomalyFunctionalAsLiftInCohomotopy},
the anomaly term is characterized as a homotopy lift
through the rationalization \newline
$L_{\mathbb{R}}\big(h_{\mathbb{H}} \!\sslash\! \mathrm{Sp}(2)\big)$
of the parametrized quaternionic Hopf fibration;

\vspace{-.2cm}
\item[{\bf (b)}]
by \hyperlink{HypothesisH}{\it Hypothesis H} this comes from a
lift through the actual $\mathrm{Sp}(2)$-parametrized Hopf fibration
$h_{\mathbb{H}} \!\sslash\! \mathrm{Sp}(2)$ \eqref{CohomotopyCocycle};

\vspace{-.2cm}
\item[{\bf (c)}]
by \eqref{hatSp2Structure}
this, in turn,
comes from a lift through the $\widehat{\mathrm{Sp}(2)}$-parametrized
Hopf fibration
$h_{\mathbb{H}} \!\sslash\! \widehat{\mathrm{Sp}(2)}$
(Def. \ref{QuaternionicHopfFibrationParametrizedOverHigherExtendedSp2}).
\end{itemize}

\vspace{-2mm}
\noindent Therefore, it is now sufficient to show that the
rational class \eqref{Rational7ClassOnE7}
of the universal Hopf-WZ term
on the total space of the universal $\widehat{\mathrm{Sp}(2)}$-parametrized
7-sphere fibration \eqref{PullbackToUniversalSpaceOnWhichChi8Trivializes}
is the rational image of an integral cohomology class.

\newpage

First, consider the Gysin sequence (see e.g. \cite[15.30]{Switzer75})
for the universal 4-spherical fibration
in the bottom right of \eqref{PullbackToUniversalSpaceOnWhichChi8Trivializes}
\vspace{-2mm}
$$
  \xymatrix{
    S^4
      \ar[rr]^-{\scalebox{0.7}{$\mathrm{hofib}(\rho_{S^4})$} }
    &&
    S^4
    \!\sslash\!
    \mathrm{Sp}(2)
      \ar[rr]^-{ \rho_{S^4} }
    &&
    B \mathrm{Sp}(2)
  }
  \,.
$$

\vspace{-1mm}
\noindent Observe that the integral cohomology groups of the classifying space
(see e.g. \cite{Pittie91}\cite[(12)]{Kalkkinen06})
\vspace{-2mm}
\begin{equation}
  \label{IntegralCohomologyOfSp2}
  H^\bullet
  \big(
    B \mathrm{Sp}(2),
    \mathbb{Z}
  \big)
  \;\simeq\;
  \mathbb{Z}
  \big[
    \tfrac{1}{2}p_1,
    \rchi_8
  \big]
\end{equation}

\vspace{-2mm}
\noindent
are non-torsion groups concentrated in even degrees.
Hence the 5-class controlling this Gysin sequence vanishes,
and so the long exact sequence breaks up into short exact sequences of
this form:
\vspace{-3mm}
\begin{equation}
  \label{GysinSES}
  \xymatrix@R=9pt{
    0
    \ar[r]
    &
    H^\bullet
    \big(
      B \mathrm{Sp}(2)
      ;
      \,
      \mathbb{Z}
    \big)
    \ar[r]^-{ \rho^\ast_{S^4} }
    &
    H^\bullet
    \big(
      S^4 \!\sslash\! \mathrm{Sp}(2)
      ;
      ,
      \mathbb{Z}
    \big)
    \ar[r]^-{ \int_{S^4} }
    &
    H^{\bullet -4}
    \big(
      B \mathrm{Sp}(2)
      ;
      \,
      \mathbb{Z}
    \big)
    \ar[r]^{  }
    &
    0
  }.
\end{equation}

\vspace{-2mm}
\noindent
Moreover, since the integral cohomology groups \eqref{IntegralCohomologyOfSp2}
have no torsion,
these short exact sequences imply that also
\vspace{-2mm}
\begin{equation}
  \label{TorsionVanishes}
  H^\bullet
  \big(
    S^4 \!\sslash\! \mathrm{Sp}(2)
    ;
    \,
    \mathbb{Z}
  \big)
\end{equation}

\vspace{-2mm}
\noindent
are non-torsion groups.

Now observe, by \cite[Prop. 3.13]{FSS19b}, that
\vspace{-1mm}
\begin{equation}
  \label{Integralomega4}
  \widetilde \Gamma_4
  \;:=\;
  \omega_4 + \tfrac{1}{4}p_1
  \;\in\;
  H^4\big(
    S^4 \!\sslash\! B \mathrm{Sp}(2);
    \,
    \mathbb{Z}
  \big)
  \longrightarrow
  H^4\big(
    S^4 \!\sslash\! B \mathrm{Sp}(2)
    ;
    \,
    \mathbb{R}
  \big)
\end{equation}

\vspace{-2mm}
\noindent
is an integral class,
being the universal integral shifted C-field flux density \eqref{G4Plus}.
Hence, by \eqref{TorsionVanishes}, the rational trivialization
from \eqref{DiagramExhibitinAnomalyFunctionalAsHopfInvariant}
$$
  \begin{aligned}
    d \omega_{\, 7}
    & =
    - \omega_4 \wedge \omega_4
    +
    \big( \tfrac{1}{4} p_1 \big)^2
    -
    \rchi_8
    \\
    & =
    -
    \big(
      \widetilde \Gamma_4
      \wedge
      \widetilde \Gamma_4
      -
      \tfrac{1}{2}p_1 \wedge \widetilde \Gamma_4
    \big)
    -
    \rchi_8
  \end{aligned}
$$
implies that also the following integral cohomology class vanishes:
\vspace{-2mm}
\begin{equation}
  \label{IntegralVanishingQuadratic}
  \left[
    \widetilde \Gamma_4
    \cup
    \widetilde \Gamma_4
    -
    \tfrac{1}{2}p_1
    \cup
    \widetilde \Gamma_4
    +
    \rchi_8
  \right]
  \;=\;
   0
   \;\;\in\;\;
   H^8
   \big(
     S^4
     \!\sslash\!
     \mathrm{Sp}(2)
     ;
     \,
     \mathbb{Z}
   \big)
   \,.
\end{equation}

\vspace{-2mm}
\noindent
Consider next the integral Gysin sequence corresponding to the
3-spherical fibration which is the parametrized
quaternionic Hopf fibration in the top right of \eqref{PullbackToUniversalSpaceOnWhichChi8Trivializes}:
      \vspace{-3mm}
\begin{equation}
  \label{E4IntegralVanishingQuadratic}
  \xymatrix{
    S^3
    \ar[rr]^-{\scalebox{0.6}{$
      \mathrm{hofib}(
        h_{\mathbb{H}}
          \sslash
        G
     )
     $}
    }
    &&
    S^7
      \!\sslash\!
    \mathrm{Sp}(2)
    \ar[rr]^-{\scalebox{0.6}{$
      h_{\mathbb{H}}
        \sslash
      \mathrm{Sp}(2)
      $}
    }
    &&
    S^4
    \!\sslash\!
    \mathrm{Sp}(2)
  }.
\end{equation}

    \vspace{-2mm}
\noindent
Since, rationally,
$S^7 \!\sslash\! \mathrm{Sp}(2)$
is obtained from $S^4 \!\sslash\! \mathrm{Sp}(2)$
by adjoining the relation
\begin{equation}
  \label{dh3}
  \begin{aligned}
    d h_3
    & =
    \omega_4 - \tfrac{1}{4}p_1
    \\
    & = \widetilde \Gamma_4 - \tfrac{1}{2}p_1
    \,,
  \end{aligned}
\end{equation}

    \vspace{-2mm}
\noindent
by \eqref{DiagramExhibitinAnomalyFunctionalAsHopfInvariant},
it follows from \cite[Prop. 2.5 (44)]{FSS19b}
that $\omega_4 - \tfrac{1}{4}p_1$ is the rational image of the
integral Euler class
of \eqref{E4IntegralVanishingQuadratic}. Consequently,
\begin{equation}
  \label{EulerIntegral4ClassFor3SphereFibration}
  \widetilde \Gamma_4
  -
  \tfrac{1}{2}p_1
  \;\in\;
  H^4
  \big(
    S^4
    \!\sslash\!
    \mathrm{Sp}(2)
    ;
    \,
    \mathbb{Z}
  \big)
\end{equation}
is the integral 4-class controlling the Gysin sequence of
\eqref{E4IntegralVanishingQuadratic},
which therefore reads:
      \vspace{-2mm}
\begin{equation}
  \label{3SphereGysinSequenceUniversal}
  \xymatrix@C=1.5em{
    \cdots
    \ar[r]
    &
    H^7
    \big(
      S^4
      \!\sslash\!
      \mathrm{Sp}(2)
      ;
      \,
      \mathbb{Z}
    \big)
    \ar[rr]^-{
            \scalebox{0.6}{$(
        h_{\mathbb{H}} \sslash \mathrm{Sp}(2)
      )^\ast
      $}
    }
    &&
    H^7
    \big(
      S^7
      \!\sslash\!
      \mathrm{Sp}(2)
      ;
      \,
      \mathbb{Z}
    \big)
    \ar[r]^-{
      \int_{S^3}
    }
    &
    H^4
    \big(
      S^4
      \!\sslash\!
      \mathrm{Sp}(2)
      ;
      \,
      \mathbb{Z}
    \big)
    \ar[rr]^-{\scalebox{0.6}{$
      \cup
      (
        \widetilde \Gamma_4
        -
        \frac{1}{2}p_1
      )
      $}
    }
    &&
    H^8
    \big(
      S^4
      \!\sslash\!
      \mathrm{Sp}(2)
      ;
      \,
      \mathbb{Z}
    \big)
    \ar[r]
    &
    \cdots
  }
\end{equation}

Now consider pulling back this situation
along the homotopy fiber of $\rchi_8$ \eqref{HomotopyFiberSpace}
to yield the sequence of spherical fibrations
on the left of \eqref{PullbackToUniversalSpaceOnWhichChi8Trivializes}.
After this pullback, the Euler class
summand in \eqref{IntegralVanishingQuadratic} disappears,
and we obtain this vanishing class:
\begin{equation}
  \label{IntegralVanishingQuadraticOverSpacetime}
  \left[
    \widetilde \Gamma_4
    \cup
    \widetilde \Gamma_4
    -
    \tfrac{1}{2}p_1(T X)
    \cup
    \widetilde \Gamma_4
  \right]
  \;=\;
   0
   \;\in\;
   H^8
   \big(
     S^4 \!\sslash\! \widehat{\mathrm{Sp}(2)}
     ;
     \,
     \mathbb{Z}
   \big)
   \,.
\end{equation}
With this, the integral Gysin sequence of the
3-spherical fibration
in the top left of \eqref{PullbackToUniversalSpaceOnWhichChi8Trivializes}
\vspace{-2mm}
$$
  \xymatrix{
    S^3
    \ar[rr]^-{\scalebox{0.6}{$ \mathrm{hofib}( h_{\mathbb{H}} \sslash G )$} }
    &&
    S^7 \!\sslash\! \widehat{\mathrm{Sp}(2)}
    \ar[rr]^-{\scalebox{0.6}{$ h_{\mathbb{H}} \sslash \widehat{\mathrm{Sp}(2)}$} }
    &&
    S^4 \!\sslash\! \widehat{\mathrm{Sp}(2)}
  }
$$

\vspace{-2mm}
\noindent
 is seen to be of the following form:
\vspace{-2mm}
\begin{equation}
  \label{OverSpacetime3SphereGysinSequence}
  \raisebox{24pt}{\small
  \xymatrix@C=10pt@R=-6pt{
    \mathllap{\cdots}
    \ar[r]
    &
    H^7
    \big(
      S^4 \!\sslash\! \widehat{\mathrm{Sp}(2)}
      ;
      \,
      \mathbb{Z}
    \big)
    \ar[rrr]^-{\scalebox{0.6}{$
      (
          h_{\mathbb{H}} \sslash \widehat{\mathrm{Sp}(2)}
      )^\ast
      $}
    }
    &&&
    H^7
    \big(
      S^7 \!\sslash\! \widehat{\mathrm{Sp}(2)}
      ;
      \,
      \mathbb{Z}
    \big)
    \ar[r]^-{
      \int_{S^3}
    }
    &
    H^4
    \big(
      S^4 \!\sslash\! \widehat{\mathrm{Sp}(2)}
      ;
      ,
      \mathbb{Z}
    \big)
    \ar[rrr]^-{\scalebox{0.6}{$
      \cup
      \left(
        \widetilde \Gamma_4 - \frac{1}{2}p_1(T X)
      \right)
      $}
    }
    &&&
    H^8
    \big(
      S^4 \!\sslash\! \widehat{\mathrm{Sp}(2)}
      ;
      \,
      \mathbb{Z}
    \big)
    \ar[r]
    &
    \mathrlap{\cdots}
    \\
    &&
    &&
    {\color{darkblue} 2\widetilde{\mathbf{S}}}
    \ar@{|->}[r]
    &
    \widetilde \Gamma_4
    \ar@{|->}[rrr]
    &&&
    \underset{
      = 0
    }{
      \underbrace{
        \widetilde \Gamma_4
        \cup
        \widetilde \Gamma_4
        -
        \tfrac{1}{2}p_1(T X)
        \cup
        \widetilde \Gamma_4
      }
    }
  }
  }
\end{equation}

\vspace{-3mm}
\noindent
Here, in the bottom row, we have observed that the
image of $\widetilde \Gamma_4$ \eqref{Integralomega4} under
forming cup product with the 4-class \eqref{EulerIntegral4ClassFor3SphereFibration}
is just the vanishing class \eqref{IntegralVanishingQuadraticOverSpacetime},
which by exactness of the Gysin sequence implies that there
exists an  integral 7-class
\vspace{-2mm}
\begin{equation}
  \label{Integral7Class}
  {\color{darkblue}2 \widetilde{\mathbf{S}}}
  \;\in\;
  H^7\big(
    S^7 \sslash \widehat{\mathrm{Sp}(2)}
    ;
    \,
    \mathbb{Z}
  \big)
  \,,
\end{equation}

\vspace{-2mm}
\noindent
whose integration over the $S^3$-fibers is
$\widetilde \Gamma_4$, as shown.
Since, by \eqref{dh3} and \cite[Prop. 2.5 (45)]{FSS19b},
the fiberwise volume form is $h_3$,
this is, rationally, the same fiber integration as that of
\eqref{Rational7ClassOnE7}, which by the exactness of the
Gysin sequence \eqref{OverSpacetime3SphereGysinSequence},
now with rational coefficients
\vspace{-2mm}
\begin{equation}
  \label{OverSpacetime3SphereRationalGysinSequence}
  \hspace{-5mm}
  \xymatrix@R=-2pt{
    \cdots
    \ar[r]
    &
    H^7
    \big(
      S^4 \!\sslash\! \widehat{\mathrm{Sp}(2)}
      ;
      \,
      \mathbb{R}
    \big)
    \ar[rr]^-{\scalebox{0.6}{$
      (
          h_{\mathbb{H}} \sslash \widehat{\mathrm{Sp}(2)}
      )^\ast
      $}
    }
    &&
    H^7
    \big(
      S^7 \!\sslash\! \widehat{\mathrm{Sp}(2)}
      ;
      \,
      \mathbb{R}
    \big)
    \ar[r]^-{
      \int_{S^3}
    }
    &
    H^4
    \big(
      S^4 \!\sslash\! \widehat{\mathrm{Sp}(2)}
      ;
      \,
      \mathbb{R}
    \big)
    \ar[r]
    &
    \cdots
    \\
    &
    D
    \ar@{|->}[rr]
    &&
          \left(
                {h_3\wedge ( \omega_4 + \frac{1}{4}p_1 )
    +
        \omega_{\, 7} + \theta_7
      }
            \atop
      { - { {\color{darkblue}2\widetilde{\mathbf{S}}} }}
           \right)
        \ar@{|->}[r]
    &
    0
  }
\end{equation}

\vspace{-3mm}
\noindent
implies that the integral class
${\color{darkblue}2\widetilde{\mathbf{S}}}$  \eqref{Integral7Class}
differs from the rational class \eqref{Rational7ClassOnE7} by
a 7-class $D$ pulled back from $S^4 \!\sslash\! \widehat{\mathrm{Sp}(2)}$, as shown.
But by
\eqref{IntegralCohomologyOfSp2}
and
\eqref{DiagramExhibitinAnomalyFunctionalAsHopfInvariant}
there is no non-trivial 7-class on $S^4 \!\sslash\! \widehat{\mathrm{Sp}(2)}$.
Hence the equality
\vspace{-2mm}
$$
  {\color{darkblue}
  2\widetilde{\mathbf{S}}
  }
  \;=\;
  h_3 \wedge \widetilde \Gamma_4
  +
  (\omega_{\, 7} + \theta_7)
$$

\vspace{-2mm}
\noindent
 holds,
and so the anomaly integrand
\eqref{Rational7ClassOnE7} is indeed the rational
image of an integral class \eqref{Integral7Class} and hence
has itself integral periods:
\vspace{-1mm}
\begin{equation}
  \label{IntegralOfRationalImageOfIntegralClass}
  \hspace{-3cm}
  \xymatrix@R=2pt{
    {\color{darkblue}2\widetilde{\mathbf{S}}}
    \ar@{|->}[ddddd]
    &
    H^7\big(
      S^7 \!\sslash\! \widehat{\mathrm{Sp}(2)}
      ;
      \,
      \mathbb{Z}
    \big)
    \ar[rr]^-{ \scalebox{0.6}{$
      \big( \,\widehat{ c \circ \widetilde f } \;\big)^\ast
    $}
    }
    \ar[dddd]
    &&
    H^7\big(
      \widetilde \Sigma^7;
      \,
      \mathbb{Z}
    \big)
    \ar[dddd]
    \ar[rr]^-{ \int_{\widetilde \Sigma^7} }
    &&
    \mathbb{Z}
    \mathclap{\phantom{\vert_{\vert_{\vert}}}}
    \ar@{^{(}->}[dddd]
    \\
    \\
    \\
    \\
    &
    H^7\big(
      S^7 \!\sslash\! \widehat{\mathrm{Sp}(2)}
      ;
      \,
      \mathbb{R}
    \big)
    \ar[rr]^-{\scalebox{0.6}{$
      \big( \,\widehat{ c \circ \widetilde f } \;\big)^\ast
      $}
    }
    &&
    H^7\big(
      \widetilde \Sigma^7;
      \,
      \mathbb{R}
    \big)
    \ar[rr]^-{ \int_{\widetilde \Sigma^7} }
    &&
    \mathbb{R}
    \\
    {
      { h_3\wedge( \omega_4 + \frac{1}{4}p_1 ) }
      \atop
      { + \omega_{\, 7} + \theta_7 }
    }
    \ar@{|->}[rrr]
    & &&
    {
      { \widetilde H_3 \wedge \widetilde f^\ast \widetilde G_4 }
      \atop
      { + \widetilde f^\ast 2 G_7 }
    }
    \ar@{|->}[rr]
    &&
    2
    \widetilde S^{\; \rm M5}_{\mathrm{WZ}}
    \big(
      \widetilde f, \widetilde H_3
    \big)
    \mathrlap{
      \; =
      \widetilde S^{\; 1 \, \rm M5}_{\mathrm{WZ}}
      \big(
        \widetilde f, \widetilde H_3
      \big).
    }
  }
\end{equation}

\vspace{-7mm}
\end{proof}

\medskip

\noindent {\bf Concluding remarks.} We have
established that charge quantization of the C-field in
J-twisted Cohomotopy implies,
beyond integrality of the $\tfrac{1}{4}p_1$-shifted 4-flux,
also integrality of the 7-flux which
cancels the
Hopf-WZ anomaly term
and makes the Page charge \eqref{DifferenceBetweenAnyTwoExtendedActionFunctionals}
integral:

\vspace{-5mm}
\begin{equation}
  \label{SummaryDiagram}
  \hspace{-3mm}
  \raisebox{40pt}{
  \xymatrix@C=3.7em{
    \widetilde \Sigma^7
    \ar[dd]_-{
      \widetilde f
    }
    \ar[rrrr]|-{  \;\widehat{ c \circ \widetilde f }\;\, }
    \ar@/^2pc/[rrrrrrrr]^-{
      \overset{
        \raisebox{3pt}{
          \tiny
          {\color{darkblue}
          \bf
            integral
            Hopf-WZ anomaly/Page charge
          }
          (Thm. \ref{AnomalyIsIntegral})
        }
      }{
        \big[
          2\widetilde S \;:=\; \widetilde H_3 \wedge \widetilde G_4 + 2G_7
        \big]
      }
    }
    &&
    \ar@{}[dd]|-{
      \mbox{
        \tiny
        \begin{tabular}{c}
          \color{greenii}
          \bf
          cocycle in
          \\
          \color{greenii}
          \bf
          twisted Cohomotopy
          \\
          (Def. \ref{BackgroundFieldsSatisfyingHypothesisH})
        \end{tabular}
      }
    }
    &&
    S^7 \!\sslash\! \widehat{\mathrm{Sp}(2)}
    \ar[dd]|-{ \scalebox{0.65}{$
      \phantom{\vert^{\vert^{\vert^{\vert}}}}
      \underset{
        \raisebox{-3pt}{
          \tiny
          \begin{tabular}{c}
            \color{greenii}
            \bf
            parametrized quaternionic Hopf fibration
            \\
            {\color{greenii}
            \bf
            over Fivebrane-extended $\mathrm{Sp}(2)$}
            (Def. \ref{QuaternionicHopfFibrationParametrizedOverHigherExtendedSp2})
          \end{tabular}
        }
      }{
        h_{\mathbb{H}} \!\sslash\! \widehat{\mathrm{Sp}(2)}
      }
      \phantom{\vert_{\vert_{\vert_{\vert}}}}
      $}
    }
    \ar[rrrr]^-{
      2\widetilde{\mathbf{S}}
      \;:=\;
      \widetilde \omega_7 + \theta_7
    }_-{
      \underset{
        \raisebox{-3pt}{
          \tiny
          \begin{tabular}{c}
            {
            \color{darkblue}
            \bf
            universal integral Hopf-WZ/Page term
            }
            \eqref{UniversalRationalHopfWZFlux}
          \end{tabular}
        }
      }{
        =\; \omega_7 + h_3 \wedge \widetilde \omega_4 + \theta_7
      }
    }
    &&&&
    B^7 \mathbb{Z}
    \\
    \\
    X^8
    \ar[rrrr]|-{ \;c\; }
    \ar@/_2pc/[rrrrrrrr]_-{
      \underset{
        \raisebox{-3pt}{
          \tiny
          {
          \color{darkblue}
          \bf
          integral shifted 4-flux
          }
          \cite[Prop. 3.13]{FSS19b}
        }
      }{
        \big[
          \widetilde G_4 \;:=\; G_4 +
          \scalebox{.7}{$\tfrac{1}{4}$}p_1(\nabla)
        \big]
      }
    }
    &&&&
    S^4 \!\sslash\! \widehat{\mathrm{Sp}(2)}
    \ar[rrrr]^-{
      \overset{
        \raisebox{3pt}{
          \tiny
          \color{darkblue}
          \bf
          \begin{tabular}{c}
            universal
            shifted integral 4-flux
          \end{tabular}
        }
      }{
        \widetilde \Gamma_4
          \;:=\;
        \omega_4 + \scalebox{.7}{$\tfrac{1}{4}$}p_1
      }
    }
    &&&&
    B^4 \mathbb{Z}
  }
  }
\end{equation}
\vspace{-.2cm}

\noindent
This lends further evidence to
\hyperlink{HypothesisH}{\it Hypothesis H} that the M-theory
C-field is charge-quantized in twisted Cohomotopy theory.
Notice that twisted Cohomotopy, while different, is not unrelated to K-theory;
the comparison is discussed in \cite{BMSS19}\cite{SS19a}\cite{BSS19}.
The ability of our Cohomotopy approach to deal with electric and magnetic currents
simultaneously is noteworthy and deserves further exploration elsewhere.


\vspace{1cm}
\noindent Domenico Fiorenza,  {\it Dipartimento di Matematica, La Sapienza Universita di Roma, Piazzale Aldo Moro 2, 00185 Rome, Italy.}

 \medskip
\noindent Hisham Sati, {\it Mathematics, Division of Science, New York University Abu Dhabi, UAE.}

 \medskip
\noindent Urs Schreiber,  {\it Mathematics, Division of Science, New York University Abu Dhabi, UAE, on leave from Czech Academy of Science, Prague.}

\end{document}